\theoremstyle{plain}
\newtheorem{theorem}{Theorem}
\newtheorem{lemma}{Lemma}
\newtheorem{corollary}{Corollary}
\newtheorem{remark}{Remark}
\newtheorem{proposition}{Proposition}
\newcommand{\RR}{\mathbb{R}}
\newcommand{\EE}{\mathbb{E}}
\newcommand{\PP}{\mathbb{P}}
\newcommand{\ind}{\mathds{1}}
\newcommand{\indep}{\rotatebox[origin=c]{90}{$\models$}}
\newcommand{\Bin}{\mathsf{Bin}}
\newcommand{\fdr}{\textnormal{FDR}}
\newcommand{\fdp}{\textnormal{FDP}}
\newcommand{\indc}[1]{\mathbf{1}\{#1\}}
\newcommand{\argmin}[1]{\underset{#1}{\arg\!\min}}
\newcommand{\Sym}{\operatorname{Sym}}
\newcommand{\stepa}[1]{\overset{\rm (a)}{#1}}
\newcommand{\stepb}[1]{\overset{\rm (b)}{#1}}
\providecommand{\given}{}
\renewcommand{\given}{{\,|\,}}
\newcommand{\Biggiven}{\,\Big{|}\,}
\newcommand{\bigggiven}{\,\bigg{|}\,}
\def\@#1\@{\begin{align}#1\end{align}}
\def\$#1\${\begin{align*}#1\end{align*}}
\definecolor{myblue}{rgb}{.8, .8, 1}
\definecolor{mathblue}{rgb}{0.2472, 0.24, 0.6} 
\definecolor{mathred}{rgb}{0.6, 0.24, 0.442893}
\definecolor{mathyellow}{rgb}{0.6, 0.547014, 0.24}
\newcommand{\cA}{{\mathcal{A}}}
\newcommand{\cD}{{\mathcal{D}}}
\newcommand{\cE}{{\mathcal{E}}}
\newcommand{\cG}{{\mathcal{G}}}
\newcommand{\cH}{{\mathcal{H}}}
\newcommand{\cI}{{\mathcal{I}}}
\newcommand{\cN}{{\mathcal{N}}}
\newcommand{\cR}{{\mathcal{R}}}
\newcommand{\cS}{{\mathcal{S}}}
\newcommand{\cV}{{\mathcal{V}}}
\newcommand{\cW}{{\mathcal{W}}}
\newcommand{\cZ}{{\mathcal{Z}}}
\newif\ifshowrevision
\newcommand{\red}[1]{\ifshowrevision\textcolor{red!60}{#1}\else#1\fi}
\newcommand{\green}[1]{\textcolor{ForestGreen}{#1}}
\renewcommand{\hat}{\widehat}
\newcommand{\Dref}{\cD_{\text{ref}}}
\newcommand{\Dtest}{\cD_{\text{test}}}
\newcommand{\Dcalib}{\cD_{\text{calib}}}
\newcommand{\cmark}{\green{\ding{51}}}%
\newcommand{\xmark}{\red{\ding{55}}}%
\newcommand{\be}{{\bm e}}
\newcommand{\bp}{{\bm p}}
\newcommand{\bZ}{{\bm Z}}
\newcommand{\bh}{\textnormal{BH}}
\newcommand{\ebh}{\textnormal{e-BH}}
\newcommand{\boost}{\mathfrak{b}} 
\newcommand{\dif}{\mathrm{d}}
\long\def\comment#1{}
\title{Full-conformal novelty detection}
\author{Junu Lee\thanks{Corresponding author. Email: \texttt{junulee@wharton.upenn.edu}}}
\author{Ilia Popov}
\author{Zhimei Ren}
\date{\today}
\affil{Department of Statistics and Data Science, the Wharton School, \\University of Pennsylvania}
\begin{document}

\maketitle

\begin{abstract}
\red{This paper presents a powerful methodology for flexible full-data nonparametric novelty detection that offers distribution-free 
false discovery rate (FDR) control guarantees.}
Building on the \red{full conformal} inference framework and the concept of 
e-values, 
we introduce {\em full conformal e-values} to quantify evidence for novelty relative to a given reference dataset. 
These e-values are then utilized by carefully crafted 
multiple testing procedures to 
identify a set of novel units out-of-sample with provable  finite-sample FDR control. 
\red{We showcase several instantiations of e-values, including those which employ a data-driven model selection strategy to amplify power.}
Furthermore, our \red{framework is extended to address distribution shift, accommodating scenarios where novelty detection must be performed on data drawn from a shifted distribution relative to the  reference dataset}. In all settings, 
our method can perform powerfully---\red{outperforming existing novelty detection methods}---\red{even} with limited amounts of reference data\red{; this is illustrated by empirical evaluations on synthetic data and an application to a malicious LLM prompts dataset.}
\end{abstract}

\section{Introduction}
\label{sec:intro}

This paper considers the problem of novelty detection: given a pool of new observations, 
the goal is to identify the ones whose 
distributions differ from that of a reference dataset~\citep{wilks1963multivariate,hawkins1980identification,riani2009finding,cerioli2010multivariate}.
This fundamental statistical task arises in a wide range of domains, such as astronomy~\citep{mary2020origin}, 
high-energy physics~\citep{vatanen2012semi},
proteomics~\citep{shuster2022situ,gao2023simultaneous},
and  fraudulent activity detection~\citep{ahmed2016survey}.
In recent years, with the rapid development of artificial intelligence (AI) and machine learning (ML) tools---for example, the advent and subsequent popularity of large language models (LLMs)---there have been new additions to the suite of novelty detection problem domains. We list a couple of relevant examples below.
\paragraph{\red{Malicious prompt detection.}} \red{With the widespread use of LLMs, there is a growing concern about adversaries crafting malicious prompts to elicit harmful or undesirable responses from the model~\citep{chao2025jailbreaking}, leading to issues such as misinformation, privacy breaches, or model stealing~\citep{tramer2016stealing,achiam2023gpt,carlini2024stealing}. Detection of adversarial prompts can be framed as a novelty detection problem, where the reference dataset consists of benign prompts and the test set contains both benign and malicious prompts, which are interpreted as outliers~\citep{oliynyk2023know}.}

\paragraph{Detection of copyright violation.} Suppose a creator suspects 
that an AI model has been trained on her copyright-protected material 
(e.g., published books) without 
permission. To detect such infringements, dataset inference~\citep{maini2021dataset,maini2024llm} 
considers a setting where the creator can provide private/unpublished material. 
Then, one can treat (some function of) the private material as the reference dataset, and 
detect copyright violation by finding the outliers in the suspect material (as 
the model would ``respond'' differently to the data upon which it was trained 
versus fresh data).  



\red{
In these motivating examples, the goal is to identify the outliers in order to subsequently audit or address them. For example, flagged samples in the copyright violation setting may be collected and sent for legal review. As follow-up decisions may be costly, it is crucial to limit how often we mistakenly select outliers. Hence, the false discovery rate (FDR)~\citep{benjamini1995controlling}, which is the expected rate of false discoveries, is a natural error criterion to control. At the same time, the high-dimensional and complex nature of modern datasets---including those from the motivating examples---can hinder the estimation of the inlier and outlier distributions, inflating the downstream FDR.  In this work, we address these challenges from a model-free perspective, imposing no parametric assumptions on the data-generating process. With access to a reference dataset consisting of i.i.d.~ inlier samples, we seek a novelty detection procedure that effectively flags outliers in the newly-observed test dataset while rigorously controlling FDR at a pre-specified level.}


To this end, a prior work by~\citet{bates2023testing} introduced an FDR-controlling method
based on the sample-splitting variant of conformal inference~\citep{vovk2005algorithmic}\red{, known as split conformal (SC)}. 
Their approach constructs a {\em conformal p-value} for each test observation,  
and then applies the {\em Benjamini--Hochberg (BH)} procedure to the p-values at the 
target FDR level.
The conformal p-values are constructed by randomly splitting the reference 
data into two folds. One fold is used to train a 
(one-class) classifier\red{, which scores the test units as well as the reference samples in the remaining fold. The p-values are then obtained by comparing the test scores to the reference scores.}
The sample splitting step, central to the construction, is undesirable 
since it reduces the number of samples used for training and inference\red{. Moreover, it introduces external randomness to the procedure, which leads 
to unstable discoveries and could potentially be exploited to ``hack'' the selections over different sample splits.}

A follow-up work~\citep{marandon2024adaptive} improves the training step 
by using a masked version of the full dataset\red{, with imputed inlier/outlier labels. The masking still induces inefficiency, and the method still depends on a random split.} 
\citet{bashari2024derandomized} propose a derandomization scheme \red{which aggregates results over many splits; this aggregation comes with a loss of statistical power as only highly replicable discoveries (over different splits) survive the aggregation}. Motivated by these limitations, this 
work provides a methodology based on the full-data version of conformal inference\red{, referred to as full conformal (FC),} which is  
(1) guaranteed to achieve finite-sample FDR control; (2)
more powerful than existing strategies, especially when the reference dataset is limited;
and (3) deterministic \red{in its most basic form}.

\subsection{Preview of our contributions}
We present {\em K-FC ND (K-block full conformal novelty detection)}, 
a powerful \red{full-data} conformal algorithm  \red{with immense data-driven flexibility} that rigorously controls the finite-sample FDR.
Our proposal leverages {\em e-values} (see~\citet{ramdas2024hypothesis} for 
an overview) as the statistical 
tool for quantifying the evidence of an outlier. \red{Specifically, we assign each test point a {\em full conformal e-value} and select units using e-BH, the e-value analogue of BH~\citep{wang2022false}, or its uniformly more powerful conditionally calibrated version~\citep{lee2024boosting}.}
We outline our theoretical and methodological contributions below
and summarize \red{them in the context of}
existing methods in Table~\ref{tab:summary}.

\paragraph{Novelty detection with the full conformal paradigm.}
In contrast to existing methods relying on the \red{SC} 
paradigm, our method is based on the \red{FC} paradigm. 
In our framework, both the model-fitting and inference step
make use of all the reference data (and even the test data); 
\red{it also flexibly allows us to adaptively choose a model-fitting 
algorithm in a data-driven fashion.} Such flexibility is crucial 
for power of the novelty detection algorithm, 
especially when the number of reference 
samples is limited \red{(like in the LLM-based problem settings, where annotated data is generally
expensive).}

\paragraph{\red{Data-driven model selection without sacrificing FDR control.}}
\red{
    The conformal framework hinges on training an ML model as a scoring function to distinguish inliers from outliers; hence, the quality of the scoring function is heavily affected by the choice of model type (and hyperparameters) and in turn affects the power of the downstream procedure. Given a suite of different model types, we propose a data-driven model selection strategy which seamlessly integrates into $K$-FC ND and allows the adaptive selection of the best model for each test point. Unfortunately, na\"ively using p-values with our model selection strategy does not guarantee FDR control due to the complex dependence structure of the resulting p-values (see Appendix \ref{appd:model_selection_fdr_inflation} for an example of FDR violation). By constrast, e-BH continues to control the FDR---regardless of dependence structure---as long as the e-values are valid even after model selection, which we show to be the case. 
}


\paragraph{Boosting the power of full conformal e-BH.}
\red{
Although we design the e-values to be individually powerful, we further boost the power of the e-BH procedure by instantiating and improving upon the conditionally calibrated e-BH (e-BH-CC) framework~\citep{lee2024boosting}. In the conformal setting,
our improvements are twofold. Computationally, we compute the boost exactly (without Monte Carlo samples) and implement additional shortcuts which reduce the computational burden without sacrificing power or FDR control. Methodologically, we use a novel boosting mechanism which encourages more user flexibility 
and is of independent interest outside our setting.
}


\paragraph{Novelty detection under distribution shift.}
Another major technical contribution of this work is to 
generalize $K$-FC ND to a setting where the inliers of the test set 
experience a (known) distribution shift from the inliers of the reference set. To perform novelty detection in this setting, we propose a weighted variant of $K$-FC ND
which similarly achieves finite-sample FDR control and 
demonstrates superior empirical power.

\red{
We begin with preliminaries in Section \ref{sec:background}. Section \ref{sec:fc-ebh} introduces our novel methodologies, while Section \ref{sec:shift} adapts them to the distribution shifted setting. Section \ref{sec:experiments} contains a multitude of numerical simulations and an application to the aforementioned malicious prompt detection problem (Section \ref{sec:realdata}) which demonstrate the efficacy of our methods.
}

\begin{table}[!t]
\footnotesize
\centering    
\setlength{\tabcolsep}{3pt}
\begin{tabular}{c|c|c|c|c|c}
\toprule 
 & \makecell{Finite-sample\\FDR control} & Full-data & \makecell{Data-driven\\model selection} & Non-random & \makecell{Distribution\\shift}\\ 
\midrule 
\citet{bates2023testing} & \cmark & \xmark & \xmark & \xmark & \xmark\\
\citet{marandon2024adaptive} & \cmark & \cmark${}^*$ & \cmark & \xmark & \xmark\\ 
\citet{bashari2024derandomized} & \cmark & \cmark${}^*$ & \xmark & \cmark${}^\dagger$ & \xmark\\ 
\midrule
\textbf{This work} & \cmark & \cmark & \cmark & \cmark${}^\ddagger$ & \cmark \\
\bottomrule
\end{tabular}
\caption{A summarized comparison between our work and existing strategies. 
For the ``Full-data'' property, the asterisk (*) specifies that the procedure 
\red{still involves splitting the reference data,}
losing information in the process.
For the ``Non-random'' property, the obelisk ($\dagger$) specifies that the procedure 
is only fully derandomized when the number of replications is large enough\red{, while the diesis ($\ddagger$) specifies that the procedure is deterministic in its base form but can be made random if desired.}
}
\label{tab:summary}
\end{table}

\section{Preliminaries}
\label{sec:background}

\subsection{Problem setup}
The novelty detection problem setting is formally delineated as follows. 
Let $Z \in \cZ$ denote an observation, or unit.
We are given a reference dataset $\Dref = \{Z_1, \dots, Z_n\}$ and a test dataset $\Dtest = \{Z_{n+1}, \dots, Z_{n+m}\}$. 
The reference dataset $\Dref$ consists of i.i.d.~units drawn from some (unknown) distribution $P$, while $\Dtest$ contains independently distributed units all with unknown distribution. The goal of novelty detection 
is to identify which units in $\Dtest$ \textit{do not follow} the distribution $P$, i.e., which $Z_{n+j}$ are outliers in the context of the reference dataset.
We write this as a multiple hypothesis testing problem, in which there are $m$ null hypotheses $H_1, \dots, H_m$ such that $H_j \colon Z_{n+j} \sim P$;
i.e., the $j$th test unit $Z_{n+j}$ is an \textit{inlier}. When we reject $H_j$, it would be due to having evidence that $Z_{n+j}\not\sim P$ and is thus an  \textit{outlier}. 
In what follows, we let $\cH_0 := \{j \in [m]: H_j \text{ is true}\}$ denote the set of inliers 
and $\cH_1 := [m]\backslash \cH_0$ the set of outliers.
We also use $\pi_0 = |\cH_0|/m$ and  $\pi_1 = 1 - \pi_0$ to denote the fraction of nulls and nonnulls, respectively.

A novelty detection algorithm takes \red{$(\Dref, \Dtest)$} as input and returns a subset $\cR \subseteq [m]$ which indexes the test units identified as outliers. 
The FDR and power of $\cR$ are given by
\@\label{eq:fdr}
& \fdr \coloneqq \EE[\fdp], \textnormal{ where } \fdp \coloneqq  \frac{\sum_{j \in \cH_0}
\ind\{j \in \cR\}}{|\cR| \vee 1}; \\
& \text{Power} \coloneqq  \EE\bigg[\frac{\sum_{j \in \cH_1}\ind\{j\in \cR\}}{|\cH_1|\vee 1}\bigg],
\@
where $a \vee b = \max(a,b)$ and $|A|$ denotes the cardinality of a set $A$.
Here, FDP (false discovery proportion) is the fraction of the selection set that are true inliers. 
\red{The goal is to develop a novelty detection algorithm that controls the FDR at a pre-specified level 
$\alpha \in (0,1)$ while maintaining high power, ideally without introducing external randomness.}

\subsection{Conformal novelty detection}
\label{subsec:cnd_lit}
\red{For this novelty detection problem,~\cite{bates2023testing} proposed a method
based on {\em split conformal inference}, a sample-splitting variant of conformal inference. 
In their proposal, 
$\Dref$ is partitioned into a training fold $\cI_1$ and a calibration fold $\cI_2$:
the training fold $\cI_1$ is used to obtain a score function $V:\cZ \mapsto \RR$ that quantifies
the likelihood of a unit being an outlier, with a larger value indicating a higher likelihood;
the score function $V(\cdot)$ is then used to score the units in the calibration set and the test set, 
yielding the calibration scores $\{V_i := V(Z_i)\}_{i\in \cI_2}$ and the test scores $\{V_{n+j}\}_{j\in[m]}$.
For the $j$-th test point, \red{the split conformal (SC) p-value}
is constructed by contrasting its score with the calibration scores:}
\@ \label{eq:conf_pval}
p_j := \frac{1 + \sum_{i\in \cI_2} \ind\{V_i \ge V_{n+j}\}}{|\cI_2|+1}.
\@
Under the null hypothesis $H_j$ (i.e., $Z_{n+j}$ is an inlier), $V_{n+j}$ is exchangeable with 
$\{V_i\}_{i\in \cI_2}$. Hence, the rank of $V_{n+j}$ among $\{V_1,\ldots,V_n,V_{n+j}\}$ is uniform 
on $[n+1]$, leading to the super-uniformity of $p_j$. 
\red{Under the alternative---since an outlier tends to have large score values---$p_j$ is likely to be small.}~\citet{bates2023testing} prove that  
these SC p-values are {\em positively regression dependent on a subset (PRDS)}, 
and by the result of~\citet{benjamini2001control}, applying the BH procedure to $\{p_j\}_{j\in[m]}$ performs novelty detection while
ensuring FDR control.\footnote{The procedure has appeared in \citet{weinstein2017power,mary2022semi,gao2023simultaneous} 
in different but equivalent forms.}  
\red{Despite the rigorous distribution-free FDR control guarantee, 
the method---referred to as {\em SC ND} hereafter---relies on sample splitting, 
using only a subset of the data for training and the rest for calibration. 
The reduced training sample size may limit the model's ability to capture 
the underlying data distribution, potentially impacting its performance on unseen test data.
At the same time, the smaller calibration sample size constrains the granularity of the p-value, 
which also impacts the power of BH. Moreover, the sample split makes SC ND intrinsically random, leading to unstable novelty detection.}

A follow-up work by~\citet{marandon2024adaptive} improves the power 
of SC ND by making better use of all the given data. Their proposed method, 
{\em AdaDetect}, splits the reference data into two folds, 
$\cI_1$  and $\cI_2$, as in SC ND. However, it then 
uses both folds for model fitting: a classifier $\hat \mu: \cZ \mapsto [0,1]$ 
is trained on 
$\{Z_i\}_{i\in\cI_1}$, as well as $\{Z_i\}_{i\in \cI_2} \cup \cD_{\text{test}}$, 
with the former treated as inliers and the latter outliers (or a mixture of both). 
The classifier $\hat \mu$ is then used for scoring the calibration and 
test points by giving an estimated likelihood of being an outlier, e.g., 
$V_i = \hat \mu(Z_i)$.The authors show that 
as long as the training process is invariant to the permutation of 
$\{Z_i\}_{i \in \cI_2} \cup \cD_{\text{test}}$, the conformal 
p-values~\eqref{eq:conf_pval} are valid and satisfy the 
PRDS property. Compared with SC ND, AdaDetect uses partial information 
in the calibration and test set \red{for model fitting}, thereby 
improving the statistical power in some settings, but the issue of randomness remains 
due to the sample splitting step. 

\subsubsection{\red{Using e-values for conformal novelty detection}}

\red{
Subsequently,~\citet{bashari2024derandomized} propose 
a remedy to address the randomness issue by taking an e-value\red{~\citep{ramdas2024hypothesis}}
perspective on the SC p-values. A valid e-value for a null hypothesis $H_0$ is a nonnegative random variable $E$ such that $\EE_{H_0}[E] \le 1$; that is, under $H_0$,  $E$ is on average at most 1. For any $\alpha \in (0,1)$, $E$ directly induces a level-$\alpha$ test by rejecting $H_0$ when observing $E \ge \alpha^{-1}$ (via Markov's inequality). A well-designed e-value generally has expectation much higher than 1 under the alternative. }

\red{
The e-value is also relevant to multiple hypothesis testing problems, where we associate an e-value $e_j$ to each null hypothesis $H_j$. To control the FDR, we can run the  e-BH procedure~\citep{wang2022false}, which provably controls the FDR at a pre-specified level $\alpha \in (0,1)$ under any joint dependence structure (over \textit{both} null and non-null e-values). }
\indent The derandomization scheme
establishes the equivalence between SC ND  (or AdaDetect) and \red{e-BH
applied to the {\em SC e-values},\footnote{\red{\citet{bashari2024derandomized} in fact construct similar but ``compound'' e-values~\citep{ignatiadis2024compound}. 
The above construction, shown to be equivalent to \citet{bates2023testing}, comes from \citet{lee2024boosting}.}} defined as follows: for each $j\in[m]$,
\enlargethispage{0.9\baselineskip}
\begingroup
\setlength{\abovedisplayskip}{5pt}
\setlength{\belowdisplayskip}{5pt}
\setlength{\abovedisplayshortskip}{4pt}
\setlength{\belowdisplayshortskip}{4pt}
\@ \label{eq:sc_eval} 
& e_j = (n+1) \cdot \frac{\ind\{V_{n+j} \ge T\}}{1+\sum_{i\in\cI_2}\ind\{V_i \ge T\}}, \\
\text{ where } &T = \inf\bigg\{ t\in \{V_i\}_{i \in \cI_2 \cup \Dtest} \colon \frac{m}{ n+1} 
\cdot \frac{1 + \sum_{i\in \cI_2}^{ n} \indc{V_i \ge t} }{1\vee \sum_{j=1}^m \indc{V_{n+j} \ge t}   } \le \alpha  \bigg\}.
\@
\endgroup
}
It then merges different 
sample splits by averaging the corresponding e-values\red{---preserving validity---upon which e-BH is applied}. The derandomized versions of SC ND and AdaDetect
stabilize their original procedures; 
\red{however, the averaging step will dampen the signal of all but the most significant outliers, as some splits may give $e_j=0$ and shrink the average.}



\subsection{Additional related literature}
Our work contributes to the literature of model-free novelty detection.
In addition to the works introduced in Section~\ref{subsec:cnd_lit}, some authors 
tackle this problem under different settings. 
For example,~\citet{liang2022integrative} assume there are labeled outliers in 
the reference set;~\citet{zhao2024false}
suppose that the hypotheses have some informative structure;~\citet{magnani2024collective} 
consider detecting the existence of outliers and 
providing a lower bound on the number of outliers.

Another closely related line of work, referred to as {\em conformal selection}~\citep{jin2023selection,jin2023model}, considers a slightly different setting where the test set consists of unlabeled units 
and the goal is to pick out the ones whose labels satisfy a desired pre-specified property.  
It is worth noting that a recent work~\citep{bai2024optimized} proposes a generalization of 
the conformal selection procedure, in which a variant also makes use of the full conformal p-values:
although their problem setting, proposed algorithm, and proof techniques are distinct from ours, 
there turns out to be an elegant connection between a variant of their proposed procedure 
and a special instance of ours, on which we will elaborate in Section~\ref{sec:fc-ebh}.

\red{
Our approach is based on the e-BH procedure, which has garnered interest due 
to its dependence-agnostic FDR control.
To further enhance the power of e-BH,~\citet{lee2024boosting} leverage the conditional calibration technique~\citep{fithian2020conditional}, 
using conditional distributional information about the e-values to boost the power. 
We adapt their framework (described in Section~\ref{sec:ebhcc}) to our FC e-values, 
extending it to accommodate more flexible boosting mechanisms and enabling a more efficient implementation.}


\section{Full conformal novelty detection}
\label{sec:fc-ebh}

\red{
We introduce our proposed framework,
{\em $K$-block full conformal novelty detection ($K$-FC ND)},
in its most general form in Section~\ref{sec:kblocks}.
In Section \ref{sec:default_setting}, we present the first of two canonical instantiations of $K$-FC ND, $K=1$, which has a connection to p-values and BH. In Section~\ref{sec:mfc}, we present the second canonical instantiation, $K=m$, for which we detail our data-driven model selection strategy.
We close out the section by outlining our enhanced version of e-BH-CC for $K$-FC ND in Section \ref{sec:ebhcc}.
}

\begin{algorithm}[!t]
\caption{$K$-block full conformal ND ($K$-FC ND) procedure}\label{alg:kfc}
\KwIn{reference dataset $\mathcal{D}_{\text{ref}} = \{Z_1, ..., Z_n\}$; 
test dataset $\mathcal{D}_{\text{test}} = \{Z_{n+1}, ..., Z_{n+m}\}$; 
score model-to-train $f(\cdot, \cdot)$ that is invariant to the ordering of samples in its first argument; 
number of blocks $K$; target FDR level $\alpha$.}

Partition $\Dtest$ into $K$ blocks $B_1, \dots, B_K$.

\For{$k \in [K]$}{
    Train $V^{(k)} (\cdot) \gets f(\Dref \cup B_k; \Dtest \backslash B_k)$.
    
    \For{$ i \in [n+m]$}{
        $V_{i}^{(k)} \gets V^{(k)} (Z_{i}) $.
    }
    Compute threshold $T_k$ using $\{V_{i}^{(k)}\}_{i\in[n+m]}$ as in \eqref{eq:kfc_eval}.

    \For{$j \colon Z_{n+j} \in B_k$}{
        
        Compute $e_j$ using $\{V_{i}^{(k)}\}_{i\in[n+m]}$ and $T_k$ as in \eqref{eq:kfc_eval}.
    }
}
$\mathcal{R} \gets \cR^\ebh_\alpha  (e_1, \dots, e_m)$, the e-BH procedure at level $\alpha$.

\KwOut{Rejection set $\mathcal{R}$}
\end{algorithm}
 
\subsection{$K$-block full conformal novelty detection}
\label{sec:kblocks} 
\red{Letting $K \in \{1,2,\ldots,m\}$, we split $\Dtest$ into $K$ non-overlapping blocks 
and denote the partition of $\Dtest$ as $B_1 \cup \cdots \cup B_K = \Dtest$. 
The training of score functions is performed $K$ times on augmented reference sets: 
for each block $k \in [K]$,  we train a score function $V^{(k)}(\cdot)$ 
over the entire dataset $\Dref \cup \Dtest$, where the training must be invariant to the ordering of 
the input samples from $\Dref \cup B_k$.
Each of the reference and test units are then scored with $V^{(k)}(\cdot)$, 
giving the collection of scores $\cV^{(k)} = \{V^{(k)}_i\}_{i\in[n+m]}\coloneqq  \{ V^{(k)}(Z_i)\}_{i\in[n+m]}$. 
For each $j\in[m]$, we construct a {\em $K$-FC e-value} $e_j$ to quantify the evidence against $H_j$:
\begin{align}\label{eq:kfc_eval}
    \begin{split} 
        e_j &= (n+1)\cdot \frac{ \ind\{V^{(k)}_{n+j} \ge T_k\}}{1+\sum_{i\in[n]}\ind\{V_i^{(k)} \ge T_k\}},\\
        \text{where }T_k  &= \inf\left\{ t\in \cV^{(k)}   \colon \frac{m}{ n+1} \cdot \frac{1 + \sum_{i=1}^{ n} \indc{V^{(k)}_i \ge t} }{1\vee \sum_{j=1}^m \indc{V^{(k)}_{n+j} \ge t}   } \le \tilde \alpha  \right\} \textnormal{ and } k \colon  Z_{n+j} \in B_k.
    \end{split}
\end{align}
}
\red{
Here, the form of $e_j$ resembles that of the SC e-values~\eqref{eq:sc_eval},
but the score function and the threshold $T_k$ are now block-dependent, with
the training and calibration steps both using all the reference data.
Note that we also allow the construction of $T_k$ to use a threshold $\tilde \alpha \in (0,1)$
that is potentially distinct from the target FDR level $\alpha$. Finally, we  
apply the e-BH procedure at level $\alpha$ to the e-values $\{e_j\}_{j\in[m]}$ 
and output the rejection set $\cR_\alpha^\ebh(e_1,\ldots,e_m)$.
We summarize the $K$-FC ND procedure in Algorithm~\ref{alg:kfc} and visualize 
its workflow in Figure~\ref{fig:scheme}.}

\begin{figure}[bt!]
\centering
\begin{tikzpicture}
    \draw[thick] (0, 0.65) rectangle (15.5, 1.75);
    \fill[gray!10] (6.65, 0.675) rectangle (15.485, 1.725);
    \fill[gray!10] (6.65, -2.075) rectangle (15.485, -3.2);
    

    \foreach \i/\label in {0/Z_{1}, 1/Z_{2}} {
        \fill[blue!15] (1.1*\i + 0.6, 1.2) circle (0.45);
        \draw[blue!60,very thick] (1.1*\i + 0.6, 1.2) circle (0.45);
        \node  at (1.1*\i + 0.6, 1.2) {\footnotesize \(\label\)};
    }
    
    \foreach \i in {2, 3} {
        \fill[blue!15] (1.1*\i + 0.6, 1.2) circle (0.45);
        \draw[blue!60, very thick] (1.1*\i + 0.6, 1.2) circle (0.45);
        \node at (1.1*\i + 0.6, 1.2) {\footnotesize \(...\)};
    }

    \foreach \i/\label in {4/Z_{n-1}, 5/Z_{n}} {
        \fill[blue!15] (1.1*\i + 0.6, 1.2) circle (0.45);
        \draw[blue!60, very thick] (1.1*\i + 0.6, 1.2) circle (0.45);
        \node at (1.1*\i + 0.6, 1.2) {\footnotesize \(\label\)};
    }
    
    \foreach \i/\label in {6/Z_{n+1}, 7/Z_{n+2}} {
        \fill[blue!15] (1.1*\i + 0.6, 1.2) circle (0.45);
        \draw[blue!60, very thick] (1.1*\i + 0.6, 1.2) circle (0.45);
        \node at (1.1*\i + 0.6, 1.2) {\footnotesize \(\label\)};
    }
    \foreach \i/\label in {8/Z_{n+3}, 9/Z_{n+4}} {
        \fill[gray!30] (1.1*\i + 0.6, 1.2) circle (0.45);
        \draw[gray,very thick] (1.1*\i + 0.6, 1.2) circle (0.45);
        \node at (1.1*\i + 0.6, 1.2) {\footnotesize \(\label\)};
    }
    \foreach \i/\label in {10/{...}, 11/{...}} {
        \fill[gray!30] (1.1*\i + 0.6, 1.2) circle (0.45);
        \draw[gray,very thick] (1.1*\i + 0.6, 1.2) circle (0.45);
        \node at (1.1*\i + 0.6, 1.2) {\(\label\)};
    }
        \foreach \i/\label in {12/{...}, 13/Z_{n+m}} {
        \fill[gray!30] (1.1*\i + 0.6, 1.2) circle (0.45);
        \draw[gray, very thick] (1.1*\i + 0.6, 1.2) circle (0.45);
        \node at (1.1*\i + 0.6, 1.2) {\footnotesize \(\label\)};
    }

    \draw[dashed, black!50, line width=1.3pt] (-0.125, 0.5) rectangle (8.85, 1.9);
    
    \draw[gray, thick, decorate, decoration={brace, amplitude=10pt, mirror}] (6.85, 0.3) -- (8.85, 0.3);
    \node at (7.85, -0.3) {\small Block 1};

    \draw[thick, gray, decorate, decoration={brace, amplitude=10pt, mirror}] (9, 0.3) -- (11, 0.3);
    \node at (10, -0.3) {\small Block 2};

    \draw[thick, gray, decorate, decoration={brace, amplitude=10pt, mirror}] (11.25, 0.3) -- (13.25, 0.3);
    \node at (12.25, -0.3) {\small ...};

    \draw[thick, gray, decorate, decoration={brace, amplitude=10pt, mirror}] (13.5, 0.3) -- (15.5, 0.3);
    \node at (14.5, -0.3) {\small Block $K$};

    
    \draw[very thick, gray, -{Stealth}] (4.4625, 0.3) -- (4.4625, -1.7);
    \node[right] at (5, -1) {\small - Train model $V^{(k)}(\cdot)$};
    \node[right] at (5, -1.5) {\small - Produce the scores $V^{(k)}_i := V^{(k)}(Z_i)$};

    \draw[thick] (0, -2.1) rectangle (15.5, -3.2);
    

    \draw[dashed, black!50, line width=1.3pt] (-0.2, -1.9) rectangle (15.7, -3.4);
\foreach \i/\label in {0/V^{(k)}_{1}, 1/V^{(k)}_{2}} {
        \fill[red!15] (1.1*\i + 0.6, -2.65) circle (0.45);
        \draw[red!60, very thick] (1.1*\i + 0.6, -2.65) circle (0.45);
        \node at (1.1*\i + 0.6, -2.65) {\footnotesize \(\label\)};
    }
    
    \foreach \i in {2, 3} {
        \fill[red!15] (1.1*\i + 0.6, -2.65) circle (0.45);
        \draw[red!60, very thick] (1.1*\i + 0.6, -2.65) circle (0.45);
        \node at (1.1*\i + 0.6, -2.65) {\footnotesize \(...\)};
    }

    \foreach \i/\label in {4/V^{(k)}_{n-1}, 5/V^{(k)}_{n}} {
        \fill[red!15] (1.1*\i + 0.6, -2.65) circle (0.45);
        \draw[red!60, very thick] (1.1*\i + 0.6, -2.65) circle (0.45);
        \node at (1.1*\i + 0.6, -2.65) {\footnotesize \(\label\)};
    }
    
    \foreach \i/\label in {6/V^{(k)}_{n+1}, 7/V^{(k)}_{n+2}} {
        \fill[red!15] (1.1*\i + 0.6, -2.65) circle (0.45);
        \draw[red!60, very thick] (1.1*\i + 0.6, -2.65) circle (0.45);
        \node at (1.1*\i + 0.6, -2.65) {\footnotesize \(\label\)};
    }
    \foreach \i/\label in {8/V^{(k)}_{n+3}, 9/V^{(k)}_{n+4}} {
        \fill[red!15] (1.1*\i + 0.6, -2.65) circle (0.45);
        \draw[red!60, very thick] (1.1*\i + 0.6, -2.65) circle (0.45);
        \node at (1.1*\i + 0.6, -2.65) {\footnotesize \(\label\)};
    }
    \foreach \i/\label in {10/{...}, 11/{...}} {
        \fill[red!15] (1.1*\i + 0.6, -2.65) circle (0.45);
        \draw[red!60, very thick] (1.1*\i + 0.6, -2.65) circle (0.45);
        \node at (1.1*\i + 0.6, -2.65) {\footnotesize \(\label\)};
    }
        \foreach \i/\label in {12/{...}, 13/V^{(k)}_{n+m}} {
        \fill[red!15] (1.1*\i + 0.6, -2.65) circle (0.45);
        \draw[red!60, very thick] (1.1*\i + 0.6, -2.65) circle (0.45);
        \node at (1.1*\i + 0.6, -2.65) {\footnotesize \(\label\)};
    }

    

    \draw[gray, very thick, -{Stealth}] (7.75, -3.6) -- (7.75, -4.3);
    
    \draw[dashed, very thick, line width = 1.3pt, gray] (6.6, -4.45) rectangle (8.9, -5.55);
    \foreach \i/\label in {6/e_{1}, 7/e_{2}} {
        \fill[red!15] (1.1*\i + 0.6, -5) circle (0.45);
        \draw[red!60, very thick] (1.1*\i + 0.6, -5) circle (0.45);
        \node at (1.1*\i + 0.6, -5) {\(\label\)};
    }
    
\end{tikzpicture}
\caption{A schematic illustration of $K$-FC ND \red{(Algorithm \ref{alg:kfc})}. $\Dref = \{Z_1,\cdots,Z_n\}$.
and $\Dtest = \{Z_{n+1},\cdots,Z_{n+m}\}$.
For each  $k\in[K]$, the score 
function $V^{(k)}(\cdot)$ is trained over $\Dcalib \cup B_k$ with the 
training process agnostic to the order of the training data points. 
For each $Z_{n+j}$ contained in $B_k$, 
the e-value is constructed according to~\eqref{eq:kfc_eval} with $\{V^{(k)}_i\}_{i\in[n+m]}$. Note that, in the specific example above, since
$B_1$ only contains $Z_{n+1}$ and $Z_{n+2}$,
$V^{(1)}$ is only used to construct the e-values $e_1$ and $e_2$.
}
\label{fig:scheme}
\end{figure}

\red{The following theorem (with proof deferred to Appendix~\ref{appd:proof_eval_valid}) states that the $K$-FC e-values~\eqref{eq:kfc_eval}  
are valid e-values for \textit{any} choice of $\tilde\alpha \in (0,1)$.
This allows us to use e-BH (at level $\alpha$) on these e-values to get a rejection set with FDR control guaranteed at $\alpha$.}

\begin{theorem}\label{thm:kfc_valid}
    \red{
    Assume that for any $j\in \cH_0$, the distribution of $(Z_1,\ldots,Z_n,Z_{n+j})$
    is exchangeable conditional on $\{Z_{n+\ell}\}_{\ell \in [m]\backslash \{j\}}$.
    Then, the $K$-FC e-values $e_1, \dots, e_m$ constructed as per \eqref{eq:kfc_eval} 
    with any fixed $\tilde\alpha \in (0,1)$ satisfy $\EE[e_j] \le 1$, for all $j \in \cH_0$. 
    Hence, the rejection set of Algorithm~\ref{alg:kfc} controls the FDR at level $\pi_0\alpha $, i.e.,
    $\fdr(\cR_\alpha^\ebh(e_1, \dots, e_m)) \le \pi_0 \alpha$.
    }
\end{theorem}
\red{This $\pi_0\alpha$ guarantee can be further tightened to $\alpha$ via null proportion estimation; we defer the details to Appendix~\ref{sec:null_prop}.}


\paragraph{The LOO-$K$-FC \red{framework}.}
We can also extend $K$-FC \red{ND} to the 
leave-one-out (LOO) version, where the model which scores unit $Z_i$ does not see $Z_i$ during training. 
\red{To be specific, for each $k\in [K]$ and any $i \in [n+m]$, we can train $V^{(k,-i)}(\cdot)$ over the 
entire dataset $\Dref \cup \Dtest$ in a way that is invariant to the ordering of $\Dref \cup B_k \backslash \{Z_i\}$. 
We then compute the LOO-$K$-FC e-values 
as per~\eqref{eq:kfc_eval} by using $V^{(k)}_i = V^{(k,-i)}(Z_i)$, 
replacing the scores with their LOO counterpart in the construction.
The validity of LOO-$K$-FC e-values follows from the more general  argument in Appendix~\ref{appd:proof_eval_valid}.
The LOO-$K$-FC score function avoids overfitting by not using the test point $Z_i$ during training, 
but is computationally more expensive since it requires $O(m + nK)$ operations.}

We conclude this section with several remarks.

\begin{remark}[The role of $K$]
\red{As we elaborate in the sections that follow, the two canonical choices of $K$ are $K=1$ and $K=m$.
The former case is computationally light, requiring only a single operation (e.g., model fit/model selection),
but the resulting scores may be suboptimal due to limitations in the training process.
For example, models may be contaminated by the true outliers in the training set.
In contrast, the latter offers more flexibility in training---often yielding 
higher-quality score functions and allowing for data-driven model selection (see Section \ref{sec:mfc})---but comes at higher computational cost, as it requires $m$ times as many training operations.
We present our framework in terms of the general parameter $K$
to allow a tunable trade-off between computational efficiency and statistical power, but this paper will mainly focus on developing our method in these two canonical settings.
}
\end{remark}

\begin{remark}[External randomness]
\red{Note that when $1<K<m$, external randomness may be introduced through the test data partitioning step.
In contrast, when $K \in \{1,m\}$, the procedure is fully deterministic given the data.
These two deterministic cases correspond to the canonical instantiations that we will focus 
on in the following sections.}
\end{remark}
 
\begin{remark}[$K$-FC p-values]
\label{remark:kfc_pval}
Having obtained the scores $\cV^{(k)}$'s, it is natural to consider the corresponding {$K$-FC p-values}, 
defined for each $j\in[m]$ as:
\@\label{eq:kfc_pval}
p_j = \frac{1+\sum^n_{i=1}\ind\{V^{(k)}_i \ge V^{(k)}_{n+j}\} }{n+1},
\@
where $B_k$ is the block to which $Z_{n+j}$ belongs. 
The exchangeability among $V^{(k)}_1,\ldots,V^{(k)}_n,V^{(k)}_{n+j}$
ensures the validity of $K$-FC p-values. 

For $K=1$, we show in Section~\ref{sec:default_setting} that 
BH retains FDR control \red{when run on $\{p_j\}_{j\in[m]}$} by establishing its equivalence to $1$-FC ND. \red{However, for $K \ge 2$, neither equivalence nor FDR control is guaranteed.} 
We briefly explain the reason. 
In the standard proof of FDR control with conformal p-values (see e.g.,~\citet{bates2023testing,marandon2024adaptive}), 
the key step is to show that the p-values are PRDS\red{, for which BH is known to control the FDR~\citep{benjamini2001control}.}
When $K\ge 2$, however, the $K$-FC p-values are no longer PRDS. 
\red{At a high level, consider two points, $Z_{n+1}$ and $Z_{n+2}$, which belong to different blocks. The trained scoring function used on $Z_{n+1}$ can depend on $Z_{n+2}$ in a different way than $(Z_1, \dots, Z_{n}, Z_{n+1})$, which it treats as unordered; the same goes for $Z_{n+2}$'s score and $Z_{n+1}$. The p-values have a complex, generally non-PRDS dependence, meaning BH should not maintain FDR control. See Appendix~\ref{appd:model_selection_fdr_inflation} for an explicit FDR violation when $K=m$.
}\end{remark}

\red{Remark \ref{remark:kfc_pval} underlines the lack of guarantees from the p-value perspective.  One can restore FDR control through corrective measures; an example is to randomly prune the rejection set~\citep{fithian2020conditional,jin2023model, bai2024optimized}. However, this lowers power and can re-introduce or exacerbate instability. 
These issues motivate the choice of e-values as the primary tool. As they give FDR control for free, the focus is then on designing valid, powerful e-values.
}

\subsection{A canonical instantiation: $K=1$}
\label{sec:default_setting}
%
\red{Our first canonical instantiation of $K$-FC ND procedure is choosing $K=1$.
In this case, each reference and test unit is scored with a function $V(\cdot)$ trained over the \textit{entire dataset} $\Dref \cup \Dtest$, 
with the requirement that the training procedure is invariant to the order of the training datapoints.
Here, the training algorithm has access to both inliers and outliers, but without their identity---i.e., whether they are from $\Dref$ or $\Dtest$. 
When using ML algorithms suitable for (unsupervised) \textit{outlier detection}---such as the Isolation Forest \citep{liu2012isolation} or the local outlier factor algorithm \citep{breunig2000lof}---the resulting trained $V$ 
can effectively distinguish between the outliers and inliers of $\Dtest$.
Outlier detection models allow $1$-FC ND to only require a single training operation, making it highly scalable.}

\red{
Alternatively, when using one-class classifiers (e.g., the one-class SVM~\citep{scholkopf1999support}) rather than outlier detection models, 
one might find that estimating the inlier distribution support over the outlier-contaminated $\Dref \cup \Dtest$, can overfit to these outliers. To make the model more robust to detecting each outlier, 
we can adopt the LOO-$1$-FC framework 
to ensure each test unit is scored by a model that does not see it during training. The resulting
procedure mitigates overfitting at the cost of $m+n$ training operations.}

\paragraph{Connection to BH with FC p-values when $K=1$.}

As noted in Remark~\ref{remark:kfc_pval}, one can also construct full conformal p-values~\eqref{eq:kfc_pval}
\red{and apply BH, though this is not provably valid for $K\ge 2$.}
The $K=1$ case, by contrast, is special: existing results can be \red{independently} adapted to show that 
applying BH to the $1$-FC p-values {\em does} achieve theoretical FDR control. 
~\citet[Theorem 2]{yang2021bonus} essentially proves
this in the \red{narrower} context of testing parametric nulls.
\red{In a different context,}
\citet{bai2024optimized} \red{use} LOO-$1$-FC p-values
and show that BH achieves finite-sample FDR
control~\citep[Theorem 2]{bai2024optimized}.\footnote{\citet{bai2024optimized} consider testing random hypotheses, while we focus on deterministic ones.}


Interestingly, the \red{FDR guarantee} of BH---applied to either 
$1$-FC or LOO-$1$-FC p-values---is implied by 
Theorem~\ref{thm:kfc_valid}, \red{which uses the} e-BH perspective. 
Specifically, for both scoring frameworks, the BH rejection set 
$\cR^\bh_\alpha(p_1,\ldots,p_m)$ 
coincides with e-BH rejection set $\cR_\alpha^\ebh(e_1,\cdots,e_m)$ \red{(where both procedures are ran at $\alpha \in(0,1)$ and $\tilde \alpha = \alpha$ in the e-value construction)},
as shown in~\citet[Proposition 8]{lee2024boosting}. 
As a result, the FDR control 
of $\cR^\ebh_\alpha(e_1,\ldots,e_m)$ \red{carries} over to  $\cR^\bh_\alpha(p_1,\ldots,p_m)$ \red{for this canonical setting}.
\red{Moreover, the e-value perspective allows further power boost via conditional calibration 
  (to be introduced in Section~\ref{sec:ebhcc}), 
a mechanism that is not available to p-value-based procedures;
see Appendix~\ref{appd:ebhcc_power_over_bh} for a concrete example.}


 

\subsection{A canonical instantiation: $K=m$}
\label{sec:mfc}
The other canonical instantiation we consider is the case where $K=m$, \red{where each  test unit $Z_{n+j}$ is its own block $B_j$.}
Compared with $K=1$, this version allows for more flexible training of the scoring functions:
for each $j\in [m]$, \red{we only} require invariance to the \red{order} of $\Dref \cup B_j$.
This flexibility can be used to improve the quality of the scoring functions. 
\red{
Consider, for instance, a one-class classifier: by combining $m$-FC with the LOO framework, the scoring function for $Z_{n+j}$ can be trained only looking at $\Dref \cup B_j \setminus \{Z_{n+j}\} = \Dref$, which is free from contamination.
}
The training of $V^{(j)}(\cdot)$ is also allowed to depend on $\cD_{-j} :=\Dtest \backslash B_j$ {\em with} their identities\red{, which holds versatile information. For example,}
we can choose \red{to train} 
{\em PU (positive unlabeled) \red{classifiers}},
treating $\Dref \cup B_j$ as positive samples and $\cD_{-j}$ as unlabeled samples 
\red{(a full conformal variation of ~\citet{marandon2024adaptive})}. 
\red{
Another use case is for model selection, which we describe next.
}


\subsubsection{\red{Model selection within the $m$-FC ND framework}}
\label{sec:model_selection_mfc}
\red{
A particularly powerful feature of $m$-FC ND is its ability to 
perform data-driven model selection  without compromising guarantees.
}
\red{Suppose we are given a suite of $L$ candidate ML models $f^{(1)}, \dots, f^{(L)}$,
chosen \textit{a priori}. 
For each $j$, a natural question is to ask which of the $L$ models will give the best score for $Z_{n+j}$. If, for each $\ell \in [L]$, 
we generate scores via the $m$-FC (or LOO-$m$-FC) framework, then we are left with a jointly exchangeable collection of $L$-tuples
$\{ (V_i^{(j, 1)}, \dots, V_i^{(j, L)}) \}_{i\in[n] \cup \{n+j\}}$
with $V_i^{(j, \ell)}$ denoting the $\ell$-th model's score for $Z_i$.
In fact, these $L$-tuples are exchangeable conditional on both $\cD_{-j}$ and $\Lbag \Dref\cup\{Z_{n+j}\} \Rbag$, where $\Lbag A \Rbag$ denotes the set $A$ without its order.
By the proof of Theorem \ref{thm:kfc_valid}, $e_j$ will be valid when its constituent scores satisfy such conditional exchangeability. Hence, when we have multiple such scores, we may use $\cD_{-j}$ and $\Lbag \Dref\cup\{Z_{n+j}\} \Rbag$ to select among them.}

\red{We formalize the idea as the following proposition, with proof deferred to Appendix~\ref{appd:proof_ms_validity}.}
\begin{proposition}
    \label{prop:ms}
    \red{For each $j \in \cH_0$, let $S_j := (\Lbag \Dref\cup\{Z_{n+j}\}\Rbag,\cD_{-j})$. Assume that 
    (i)  $\{ (V_i^{(j, 1)}, \dots,  V_i^{(j, L)}) \}_{i\in[n]\cup\{n+j\}}$
    are jointly exchangeable conditional on  $S_j$, and 
    (ii)  $\{ (V_{n+r}^{(j, 1)}, \dots,  V_{n+r}^{(j, L)}) \}_{r \neq j}$ are $\sigma(S_j)$-measurable.
    For each $\ell \in [L]$, let $\Gamma_\ell$ be an $\sigma(S_j)$-measurable quality metric of $f^{(\ell)}$.
    Define \emph{new} scores
    \$
    V^{\mathrm{MS},(j)}_i = g((V_i^{(j, 1)}, \Gamma_1), \dots, (V_i^{(j, L)}, \Gamma_L))
    \$
    for $i\in[n+m]$, using some combination function $g$.
    Then the e-value $e_j$ 
    \eqref{eq:kfc_eval} with scores $\{V_i^{\mathrm{MS},(j)}\}_{i\in [n+m]}$ is valid.
    }
\end{proposition}

\red{We conclude by describing two concrete instances of Proposition \ref{prop:ms}.}
\paragraph{\red{Best model selection.}}
\red{
To find the most powerful model, we may want $\Gamma_\ell$ to be a proxy rejection count for the $\ell$-th model. For each $\ell$, we sort by value $\Lbag V_1^{(j,\ell)},\dots,V_n^{(j,\ell)},V_{n+j}^{(j,\ell)}\Rbag$, take its $n$ smallest elements as a proxy reference set, treat the remaining one together with $\{V_{n+r}^{(j,\ell)}\}_{r\neq j}$ as a proxy test set, and let $\Gamma_\ell$ be the number of rejections produced by $1$-FC ND on this proxy problem. We then select the model
$\hat \ell(j)\in \textstyle\operatorname{argmax}_{\ell\in[L]}\;\Gamma_\ell$
and define
$V_i^{\mathrm{MS},(j)} = 
 V_i^{(j,\hat \ell(j))}$ for $i\in[n+m]$. By  Proposition~\ref{prop:ms}, these are valid scores. 
}

\paragraph{\red{Top model ensembling.}}
\red{
Rather than choosing the best model, we may rank the candidate models by $\Gamma_1,\dots,\Gamma_L$ (e.g., the proxy rejection counts above), retain a subset $M_j\subseteq [L]$ of top models, and ensemble their scores. 
For some choice of $\lambda > 0$, compute weights 
$w_{j,\ell} = \frac{\exp(\lambda \Gamma_\ell)}{\sum_{r\in M_j}\exp(\lambda \Gamma_r)}$
for $\ell\in M_j$  and define
$V_i^{\mathrm{MS},(j)}=\sum_{\ell\in M_j} w_{j,\ell}V_i^{(j,\ell)}$ for $i\in[n+m]$. When $\lambda = +\infty$, this is effectively best model selection. Note that when different models have different scales of scores, one can apply exchangeability-preserving common-scale transforms.}\\ 
\indent We discuss related conformal model selection approaches in Appendix~\ref{appd:model_selection_settings}. 



\subsection{Improving power through e-BH-CC}
\label{sec:ebhcc}
As our method uses e-BH as the selection procedure, 
we can boost its power using the e-BH-CC framework developed in~\citet{lee2024boosting}.
Central to the instantiation of e-BH-CC is to identify a ``sufficient statistic'' $S_j$ 
such that \red{the distribution} $(e_1,\ldots,e_m) \given S_j$ \red{is known} under $H_j$. 
For the $K$-FC e-values, a natural choice of $S_j$ is 
$\red{\Lbag \Dref \cup \{Z_{n+j}\}\Rbag} \cup\red{\cD_{-j}}$.
To see that this is a valid choice, note that 
conditional on $S_j = \red{\Lbag z_1,\ldots,z_n,z_{n+j}\Rbag} \cup  
\{z_{n+\ell}\}_{\ell \neq j}$, 
\begin{equation} \label{eq:cond_dist}
(Z_1,\ldots,Z_n,Z_{n+j}, \{Z_{n+\ell}\}_{\ell \neq j})
\stackrel{H_j}{\sim} \sum_{\pi \in \red{\Sym([n] \cup \{n+j\})}} 
\frac{1}{(n+1)!}\cdot 
\delta_{(z_{\pi(1)},\ldots,z_{\pi(n)},z_{\pi(n+j)}, 
\{z_{n+\ell}\}_{\ell \neq j})},
\end{equation}
where 
$\delta_z$ denotes a point mass at $z$.
With~\eqref{eq:cond_dist},
we can in fact \red{re}sample the data set $\Dref\cup \Dtest$, or any  
function of the data set, given $S_j$.

The boosting method then proceeds as follows. For each 
$j\in[m]$, define the following quantity for \red{all} $c \in [0,1]$:\footnote{ 
The range of $c$ is set to be $[0,1]$ without loss of generality by scaling.}
\begin{equation} \label{eq:cond_exp}
\phi_j(c;S_j) := \EE_{H_j}\bigg[\frac{m \cdot \ind\{e_j \ge \frac{m}{\alpha |\cR\cup \{j\}|} 
\text{ or } j \in \cA(c)\}}
{\alpha |\cR\cup \{j\}|} - e_j
\Biggiven S_j \bigg], 
\end{equation}
where $\cA(c) \subseteq [m]$ is an ``auxiliary rejection set'' that is 
nondecreasing in $c$: for any $c_1 \le c_2$, $\cA(c_1) \subseteq \cA(c_2)$.
For example, we can let $\cA(c) = \{j\in[m]: p_j \le c\}$, 
where $p_j$ is the $K$-FC p-value defined in~\eqref{eq:kfc_pval}; 
or $\cA( c)$ can be the rejection set of BH on the  $p_j$'s
at level $c$.

Here, $\phi_j(c;S_j)$ can be computed since the quantity 
inside the expectation is a function of the data, whose distribution 
is fully known conditional on $S_j$ under $H_j$.
We then find the critical value $\hat c_j = \sup\{c\in[0,1]: \phi_j(c;S_j) \le 0\}$, 
and define the boosted e-value as 
\@\label{eq:boosted_eval}
e_j^\boost =
\begin{cases}
\frac{m}{\alpha |\cR \cup \{j\}|}
\cdot \ind\big\{e_j \ge \frac{m}{\alpha |\cR \cup \{j\}|} 
\text{ or } j \in \cA(\hat c_j) \big\}, 
& \text{if }\phi_j(\hat c_j; S_j)\le 0; \\
\underset{\ell \rightarrow \infty}{\lim}~
\frac{m}{\alpha |\cR \cup \{j\}|}
\cdot \ind\big\{e_j \ge \frac{m}{\alpha |\cR \cup \{j\}|} 
\text{ or } j \in \cA(\hat c_{j,\ell})\big\}, 
& \text{if }\phi_j(\hat c_j; S_j)> 0,
\end{cases}
\@
where $\{\hat c_{j,\ell}\}_{\ell \ge 1}$ is a increasing sequence 
such that $\phi_j(\hat c_{j,\ell};S_j)\le 0$ and 
$\lim_{\ell \rightarrow \infty} \hat c_j$. (In the second case, the boosted e-value is 
well-defined since the function is nondecreasing in $c$, implying the limit exists.)
Finally, we apply e-BH to the boosted e-values $\{e_1^\boost,\ldots,e_m^\boost\}$ 
at level $\alpha$, obtaining a selection set $\cR^{\textnormal{e-BH-CC}}$.

The following proposition rigorously shows that the boosted 
e-values defined through~\eqref{eq:boosted_eval} are valid e-values 
and that the boosted e-values deterministically improve upon 
the original ones.
The proof of Proposition~\ref{prop:boost_eval} is delegated to
Appendix~\ref{appd:proof_boost_eval}. 

\begin{proposition}\label{prop:boost_eval}
The boosted e-values defined in~\eqref{eq:boosted_eval} are 
valid e-values, i.e., $\EE[e_j^\boost] \le 1$, for $j\in \cH_0$.
Moreover, there is $\cR^\ebh_\alpha(e_1,\ldots,e_m) \subseteq 
\cR^\ebh_\alpha(e_1^\boost,\ldots,e_m^\boost)$.\footnote{
\red{Note that it is also possible to replace the $|\cR \cup \{j\}|$ in the denominator  of $\phi_j$ and $e_j^\boost$ while still preserving Proposition \ref{prop:boost_eval}; see Appendix \ref{appd:ebhcc_specific_implementations} for a working example.}} 
\end{proposition}

\begin{remark}[Comparison with~\citet{lee2024boosting}]
\red{Our boosting scheme above is a strict generalization of the 
original e-BH-CC framework:
here, we boost via an auxiliary rejection set $\cA(c)$ which can be customized as long as it is nondecreasing in $c$.
In the original version, a multiplicative factor $c$ is 
applied to $e_j$ to enable more rejections---this can be viewed as a special case 
of our construction by letting $\cA(c) = \{j\in[m]: ce_j \ge m/(\alpha |\cR\cup \{j\}|)\}$. 
In our implementation, we choose $\cA(c) = \{j\in[m]: p_j \le c\}$, 
where $p_j$ is the $K$-FC p-value defined in~\eqref{eq:kfc_pval}.
This improvement gives two main benefits: it (1) allows for boosting even when $e_j=0$ (which, in the conformal setting, happens with positive probability),
and (2) leverages both the  e-values and p-values, achieving, to some level, the best of both worlds.}
\end{remark}


\red{
Constructing the boosted e-values in \eqref{eq:boosted_eval} is the
major computational cost of the boosting scheme, 
motivating us to implement several speedups. 
First, to obtain $e_j^\boost$, we can avoid finding the
critical value $\hat c_j$, which requires 
multiple evaluations of $\phi_j(\cdot;S_j)$.
Instead, it suffices to evaluate $\phi_j(\cdot;S_j)$  
at $q_j = \min\{c \in [0,1]: j \in \cR \cup \cA(c)\}$.
When the minimum can be attained (i.e., $j \in \cR \cup \cA(q_j)$), 
we can equivalently write the boosted e-values as  
\begin{equation}
    \label{eq:boosted_eval_shortcut}
e_j^\boost = \frac{m\ind\{\phi_j(q_j;S_j)\le 0\}}
{\alpha |\cR\cup \{j\}|}, \quad \forall  j\in [m].
\end{equation}
The equivalence is 
rigorously stated and proved in Lemma~\ref{lem:equiv_comp} of 
Appendix~\ref{appd:additional}. Note that for our implemented choice of $\cA(c) = \{j\in[m]: p_j \le c\}$, 
the assumption that $j \in \cR \cup \cA(q_j)$ is always satisfied.}

\red{Second,  evaluating $\phi_j(\cdot;S_j)$ amounts to taking the average 
over $n+1$ values, as the quantity inside the 
expectation in~\eqref{eq:cond_exp} is supported on a 
set of $n+1$ elements after conditioning on $S_j$. This implies that e-BH-CC can be implemented exactly, without Monte-Carlo estimation (see details in Appendix~\ref{appd:sampling_dist}).
Moreover, since a majority of these $n+1$ summands will be zero, we can further reduce the computation 
by trying to only evaluate the nonzero terms (see details in Appendix~\ref{appd:comp_shortcut}). 
As a result of these shortcuts, calculating the conditional expectation in~\eqref{eq:cond_exp} requires fitting much less than $O(nK)$ models.
Algorithm~\ref{alg:kfc_nd_cc} (Appendix \ref{appd:algos}) details the boosted $K$-FC ND 
procedure with the computational shortcuts.}


\section{\red{Weighted} full conformal novelty detection}
\label{sec:shift}

Quite often, the assumption that the inlier distributions of $\Dref$ and $\Dtest$  are the exact same can be called into question. When the reference data is collected in a certain way to guarantee its status as an inlier, or when the inlier data can only be collected from a certain subset of the general population, then the reference inliers and test inliers would not necessarily be alike. For example, consider the problem of detecting which prompts are adversarial \red{to LLMs} (Section \ref{sec:intro}). If the reference data---benign prompts---were collected by sampling a subset of individuals from the population 
and collecting prompts from them, one can expect that these prompts may still look different than the benign prompts sent in a general user. The selection of prompts based on features of the prompter induces a 
\textit{distribution shift} between $\Dref$ and inliers of $\Dtest$, and the traditional methods originating from \citet{bates2023testing} as well as our proposal in Section \ref{sec:fc-ebh} are no longer valid due to the violation of exchangeability assumptions. 

Fortunately, under knowledge of the distribution shift---\red{feasible when collecting units based on their known covariates \citep{jin2023model}}---conformal  methods can be adapted by the assumption of \textit{weighted exchangeability} \citep{tibshirani2019conformal}. We show that 
by taking an e-value perspective, $K$-FC ND can be extended to cover this setting in a straightforward and powerful manner  without having to resort to randomness to achieve provable FDR control. Central to the power advantage of this method is the fact that the e-BH-CC framework---detailed in Section \ref{sec:ebhcc}---again improves the power of $K$-FC ND.

Formally, denote the two distributions $P$ and $Q$ such that $Z_i \sim P$ for $i\in[n]$, while the inliers in $\Dtest$ follow $Q$. The distribution shift can be expressed through the function $w(z) \coloneqq \dif Q / \dif P(z) $, the Radon-Nikodym derivative (assuming it exists). The multiple testing problem then attempts to reject a subset of the null hypotheses $H_j \colon Z_{n+j} \sim Q$. Note that when $w(z) \equiv 1$, the original problem setting is recovered.

\vspace{-0.4\baselineskip}
\subsection{Weighted $K$-block full conformal novelty detection}
\label{sec:weighted_kfc}

We define the weighted $K$-FC e-value underlying $K$-FC ND and state the validity of its construction.
\red{Like in Section \ref{sec:kblocks}, we partition   $\Dtest$ into $K$ blocks $B_1 \cup \cdots \cup B_K$, 
and for each $k\in[K]$, we train a scoring function over the entire dataset in a manner that is invariant to the order of the samples in $\Dcalib \cup B_k$.}
With the resulting collections of scores $\cV^{(k)} = \{V^{(k)}_i\}_{i\in[n+m]}\coloneqq  \{ V^{(k)}(Z_i)\}_{i\in[n+m]} $ \red{(indexed by $k$) and the random weights $\{w_i\}_{i\in[n+m]}\coloneqq \{w(Z_i)\}_{i\in[n+m]}$,} we can construct a weighted version of \eqref{eq:kfc_eval} as follows:
\begingroup
\setlength{\abovedisplayskip}{6pt}
\setlength{\belowdisplayskip}{6pt}
\setlength{\abovedisplayshortskip}{6pt}
\setlength{\belowdisplayshortskip}{6pt}
\begin{align}\label{eq:weighted_kfc_eval}
    \begin{split} 
        e_j &= \bigg( \red{w_{n+j}} + \sum_{i=1}^n \red{w_i} \bigg)\cdot \frac{ \ind\{V^{(k)}_{n+j} \ge T_k\}}{\red{w_{n+j}}+\sum_{i\in[n]}\red{w_i}\ind\{V_i^{(k)} \ge T_k\}},\\
        \text{where }T_k  &= \inf\left\{ t\in \cV^{(k)}   \colon \frac{m}{ \red{w_{n+j}} + \sum_{i=1}^n \red{w_i}} \cdot \frac{ \red{w_{n+j}}  + \sum_{i=1}^{ n}  \red{w_i} \indc{V^{(k)}_i \ge t} }{1\vee \sum_{j=1}^m \indc{V^{(k)}_{n+j} \ge t}   } \le \tilde \alpha  \right\} \\
        &\textnormal{ and } k \colon  Z_{n+j} \in B_k.
    \end{split}
\end{align}
\endgroup
The weighted exchangeability analogue of Theorem \ref{thm:kfc_valid} states the validity of these e-values.
\begin{theorem}
    \label{thm:weighted_eval_valid}
    Assume under the null $H_j$, $Z_1, \dots, Z_n, Z_{n+j}$ exhibit weighted exchangeability with weight function $w(\cdot)$ in the following manner: conditional on $\{Z_{n+\ell}\}_{\ell \neq j}$, the joint  density of  $Z_1, \dots, Z_n, Z_{n+j}$ satisfies
    \begin{equation}
        f(z_1, \dots, z_n, z_{n+j}) = w(z_{n+j}) \prod_{i \in [n]\cup\{n+j\}} p(z_i),
    \end{equation} 
    where $p$ is the density function of distribution $P$.  Further assume for each $k \in [K]$ that the scoring function $V^{(k)} (\cdot)$ was trained  in a way such that it is invariant to the order of $Z_1, \dots, Z_n, Z_{n+\ell}$  in the trainset, for each $\ell \colon Z_{n+\ell} \in B_k$.  
    Then the e-value $e_j$ constructed using \eqref{eq:weighted_kfc_eval} 
    is a strict e-value, i.e, $\EE[e_j] \le 1$. Hence, $\fdr(\cR_\alpha^\ebh (e_1, \dots, e_m) ) \le \pi_0 \alpha $.
\end{theorem}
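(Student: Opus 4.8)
The plan is to mirror the proof of Theorem~\ref{thm:eval_valid}, with exchangeability replaced by weighted exchangeability and the uniform conditional law of ``which nonconformity score belongs to the test point'' replaced by a $w$-weighted one. Fix $j \in \cH_0$ and let $k$ be the block with $Z_{n+j}\in B_k$. I would condition on the information $\cF_j$ consisting of the labeled values $\{Z_{n+\ell}\}_{\ell \neq j}$, the \emph{unordered} multiset $[Z_1,\dots,Z_n,Z_{n+j}] = \{v_1,\dots,v_{n+1}\}$, and any external randomness used in training $V^{(k)}(\cdot)$. By the weighted-exchangeability density assumption, the standard computation of~\citet{tibshirani2019conformal} gives that, conditional on $\cF_j$, the test point $Z_{n+j}$ equals $v_\ell$ with probability $w(v_\ell)/W$, where $W := \sum_{\ell'=1}^{n+1} w(v_{\ell'})$, and the $n$ reference points are the remaining $v_{\ell'}$ in an order that will be irrelevant.

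Since $V^{(k)}(\cdot)$ is trained invariantly to the order of its trainset $\Dref \cup B_k = [Z_1,\dots,Z_n,Z_{n+j}]\cup(B_k\setminus\{Z_{n+j}\})$, and this multiset is $\cF_j$-measurable (the block partition being fixed/index-based), the function $V^{(k)}(\cdot)$ is $\cF_j$-measurable; hence so are the scores $u_{\ell'} := V^{(k)}(v_{\ell'})$ for $\ell'\in[n+1]$, the pooled score set $\cV^{(k)}$, the partial test count $C(t) := \sum_{\ell \neq j}\ind\{V^{(k)}(Z_{n+\ell})\ge t\}$, and the leading factor $w(Z_{n+j})+\sum_{i\in[n]}w(Z_i) = W$. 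Thus, conditionally on $\cF_j$, the only remaining randomness is the index $L\in[n+1]$ with $Z_{n+j}=v_L$, and on $\{L=\ell\}$ both $e_j$ and its threshold $T_k$ become explicit functions of $\ell$.

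The crux is an algebraic collapse that the weighting in~\eqref{eq:weighted_kfc_eval} is engineered to produce. On $\{L=\ell\}$, a direct substitution shows that the threshold-defining ratio $F^w_\ell(t)$ appearing in~\eqref{eq:weighted_kfc_eval} satisfies, for every $t\le u_\ell$,
\[
F^w_\ell(t) = G^w(t) := \frac{m}{W}\cdot\frac{g(t)}{1\vee(C(t)+1)},\qquad g(t) := \sum_{\ell'=1}^{n+1} w(v_{\ell'})\,\ind\{u_{\ell'}\ge t\},
\]
and $G^w(t)$ does \emph{not} depend on $\ell$; moreover, on $\{L=\ell\}\cap\{u_\ell \ge T_k\}$ the e-value collapses to $e_j = W/g(T_k)$. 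Writing $T_k^{(\ell)}$ for the threshold on $\{L=\ell\}$ and $T^{**} := \inf\{t\in\cV^{(k)}:G^w(t)\le\tilde\alpha\}$ (an infimum over a finite set, attained when nonempty), I would argue: if $u_\ell \ge T_k^{(\ell)}$ then $G^w(T_k^{(\ell)}) = F^w_\ell(T_k^{(\ell)}) \le\tilde\alpha$, so $T_k^{(\ell)} \ge T^{**}$; and then $u_\ell \ge T_k^{(\ell)} \ge T^{**}$ forces $F^w_\ell(T^{**}) = G^w(T^{**})\le\tilde\alpha$, whence $T_k^{(\ell)}\le T^{**}$. Therefore $T_k^{(\ell)} = T^{**}$ for every such $\ell$, and $\{\ell : u_\ell \ge T_k^{(\ell)}\} = \{\ell : u_\ell \ge T^{**}\}$. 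Consequently
\[
\EE[e_j \mid \cF_j] = \sum_{\ell=1}^{n+1}\frac{w(v_\ell)}{W}\, e_j\big|_{L=\ell} = \sum_{\ell\,:\,u_\ell \ge T^{**}}\frac{w(v_\ell)}{W}\cdot\frac{W}{g(T^{**})} = \frac{g(T^{**})}{g(T^{**})} \le 1,
\]
the right side read as $0$ when $\{t\in\cV^{(k)}:G^w(t)\le\tilde\alpha\}=\emptyset$ (then $e_j=0$ for every $\ell$). Taking expectations over $\cF_j$ gives $\EE[e_j]\le 1$. Given this for all $j\in\cH_0$, the bound $\fdr(\cR_\alpha^\ebh(e_1,\dots,e_m))\le\pi_0\alpha$ follows from the e-BH FDR guarantee of~\citet{wang2022false} exactly as in Proposition~\ref{prop:fc_ebh}.

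I expect the main obstacle to be the careful bookkeeping around the threshold rather than any conceptual difficulty: one must verify that the substitution collapsing $F^w_\ell$ to $G^w$ on $\{t\le u_\ell\}$ is valid in all edge cases (in particular with the $1\vee(\cdot)$ normalization and when a defining set is empty), that the two infima are genuinely attained, and that ties among $v_1,\dots,v_{n+1}$ (equivalently, atoms of $P$) are handled consistently with the ``$\ge$'' convention used throughout the indicators. This is precisely where the placement of the weights $w(\cdot)$ in both the numerator and denominator of $e_j$ and in the threshold formula is used: any other placement would break the $\ell$-independence of $W$ and of $G^w$, and hence the telescoping in the last display.
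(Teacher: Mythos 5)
Your proposal is correct and takes essentially the same route as the paper's proof: your $T^{**}$ is exactly the paper's symmetrized threshold $\hat T_k$ (same numerator $g(t)$ over the whole bag, same denominator $1+C(t)$), and the final cancellation via the $w$-weighted conditional law of the test point's position within the unordered bag $\cE_j$ is precisely the computation in Appendix~\ref{appd:proof_of_weighted_eval}. The only difference is presentational---you spell out the two-sided argument for $T_k=T^{**}$ on the rejection event, which the paper asserts by analogy with Theorem~\ref{thm:eval_valid}.
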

The proof of Theorem \ref{thm:weighted_eval_valid} is given in Appendix \ref{appd:proof_of_weighted_eval}.

\subsection{Applying e-BH-CC to weighted $K$-FC e-values}
\label{sec:weighted_ebhcc}

In the distribution-shift setting, weighted analogues of conformal p-values have been proposed \citep{tibshirani2019conformal, hu2023two, jin2023model}; our $K$-block full conformal setting would lead to the p-values
\begin{equation}
    \label{eq:weighted_kfc_pval}
    p_j = \frac{w_{n+j}+\sum^n_{i=1}w_i\ind\{V^{(k)}_i \ge V^{(k)}_{n+j}\} }{w_{n+j} + \sum_{i=1}^n w_i }.
\end{equation}  
Although marginally superuniform, the collection of p-values $(p_1, \dots, p_m)$ are not necessarily PRDS, as shown by~\citet[Proposition 2.4]{jin2023model}. The BH procedure therefore does not guarantee FDR control (without corrections, such as random pruning of the rejection set \citep{jin2023model}), inspiring the weighted conformal e-value approach in the SC paradigm \citep{lee2024boosting}. However, the authors found that the weighted versions of SC BH and SC e-BH were no longer equivalent, and that the former often outperformed the latter in terms of power. Hence, the e-BH-CC framework was applied to close this power gap. We also find in our context that the p-value and e-value approaches are no longer equivalent---even when $K=1$---and similarly implement the e-BH-CC framework to boost the power of weighted $K$-FC e-BH. The resulting weighted $K$-FC e-BH-CC procedure performs similar to or better than the guarantee-less $K$-FC BH; see Section~\ref{sec:exp_weighted} and Appendix~\ref{appd:more_experiments}.

The implementation of the e-BH-CC framework for the distribution shift setting is exactly the same as that of the default novelty detection problem---outlined in Section \ref{sec:ebhcc}---except we must account for the relaxation of the exchangeability property to weighted exchangeability.
Again taking $S_j = \red{\Lbag \Dref \cup \{Z_{n+j}\}\Rbag} \cup \Dtest \backslash \{Z_{n+j}\}$, note that 
conditional on $S_j = \red{\Lbag z_1,\ldots,z_n,z_{n+j}\Rbag} \cup  
\{z_{n+\ell}\}_{\ell \neq j}$,   we have
\begin{equation} \label{eq:weighted_exch_cond_dist}
(Z_1,\ldots,Z_n,Z_{n+j}, \{Z_{n+\ell}\}_{\ell \neq j})
\stackrel{H_j}{\sim} \sum_{\pi \in \red{\Sym([n] \cup \{n+j\})}} 
\frac{w(z_{\pi(n+j)})}{n! \cdot \sum_{i \in [n]\cup\{n+j\}} 
w(z_{\pi(i)}) } \cdot 
\delta_{(z_{\pi(1)},\ldots,z_{\pi(n)},z_{\pi(n+j)}, 
\{z_{n+\ell}\}_{\ell \neq j})}.
\end{equation}
Algorithm \ref{alg:weighted_kfc_nd_cc}  (Appendix \ref{appd:algos})  details the exact implementation of weighted $K$-FC ND with conditional calibration, with similar computational tricks discussed in Section~\ref{sec:ebhcc}.

\section{Numerical experiments} 
\label{sec:experiments}

\red{We compare FC-based novelty detection to SC and AdaDetect (hereafter denoted AD) baselines through simulations and a real-data analysis. Throughout, we use e-BH as the selection rule for FC; any usage of BH (when it is not algorithmically equivalent to e-BH; see Section \ref{sec:default_setting}) or the boosted e-BH-CC will be explicit. For splitting competitors SC and AD, we use each split percentage $\rho \in \{25,50,75\}$, where $\rho$ denotes the percentage of $\Dref$ used for training. 
For a comparison of derandomized approaches, we use the derandomized SC procedure of \citet{bashari2024derandomized}, averaged over 20 random splits---these plots are deferred to
Appendix~\ref{appd:nonrandom_vs_derandomized}. Section~\ref{sec:realdata} contains the real-data analysis. In the main text we show power plots only; the corresponding FDR plots are deferred to Appendix~\ref{appd:deferred_fdr}.}

\subsection{Simulations for novelty detection}

\paragraph{Setting and hyperparameters.} To demonstrate the efficacy of $K$-FC ND, we borrow the simulation setup of \citet{bates2023testing, jin2023model, lee2024boosting}, which mimics a cluster-based outlier generating process. At the very beginning, we sample $n_W$ i.i.d.~draws from $\mathrm{Unif} ([-3,3]^{d_W})$ to attain an initial set of points in $d_W$-dimensional space; intuitively, these serve as our cluster centers. Denote this set as $\cW$; it will be fixed for \textit{all} experiments. We then generate $Z_i$ for each $i \in [n+m]$ as follows. Sample $W_i \sim  \mathrm{Unif}(\cW)$ and $L_i \sim \cN_{d_W}(0, I)$. For each $i\in[n]$, construct $Z_i = L_i + W_i$. Finally, for each $j \in [m]$,
\begin{equation}
    Z_{n+j} =
    \begin{cases}
        L_{n+j} + W_{n+j} & \textnormal{if } j \in \cH_0; \\
        \sqrt a\, L_{n+j} + W_{n+j} & \textnormal{if } j \in \cH_1.
    \end{cases}
\end{equation}
where $a\ge 1$ is a hyperparameter governing the signal strength of the outliers. Hence, the inliers are standard multivariate Gaussian around the points in $\cW$, while the outliers are likewise multivariate Gaussian around these cluster centers but have larger spread. One can expect that for moderate to large values of $d_W$ and $n_W$, detecting outliers by understanding and estimating the inlier density (without any parametric knowledge) is difficult when the reference dataset is limited.

\red{In our experiments for this setting, we fix $n_W = 30, d_W = 50, n = |\Dref| = 150$, take $m \in \{50,200\}$, vary the outlier proportion in $\Dtest$ as $\pi_1 \in \{0.2,0.4\}$, and target $\alpha = 0.1$.}

\begin{figure}[!t]
    \centering
    \includegraphics[width=0.9\textwidth]{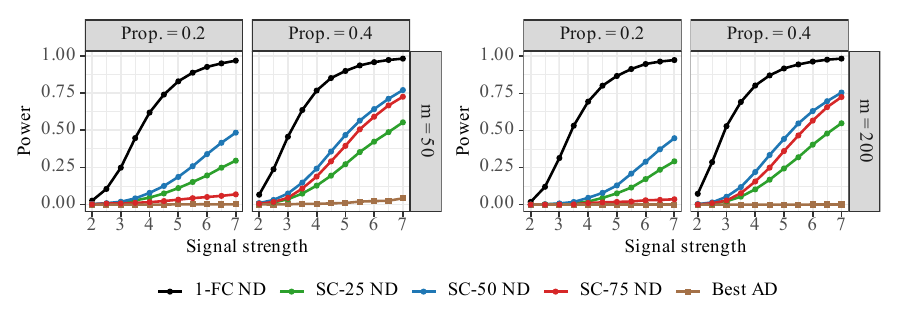}
    \caption{\red{$1$-FC ND power results with Isolation Forest scores. We compare 1-FC ND, SC-$\rho$ ND, and the single ``Best AD'' baseline. The FDR target is $\alpha=0.1$. Each experiment uses 1,000 replications.}}
    \label{draft:fig:n150_iso}
\end{figure}

\paragraph{\red{Scoring models.}}
\red{
For each approach, we consider two classes of scoring models: random forest-based models and support vector-based models.
In particular, Isolation Forest~\citep{liu2012isolation} and the one-class SVM~\citep{scholkopf1999support}
are used as the scoring models in FC and SC ND; for AD, we use binary random forests
and support vector classifiers (SVC).
In all experiments, we use the Python package \texttt{scikit-learn} \citep{scikit-learn} for all model implementations.
}

\red{The Isolation Forest experiments use 1-FC, while the SVM experiments use LOO-1-FC (as discussed in Section \ref{sec:default_setting}). In both cases we compare against SC-$\rho$ and AD-$\rho$ baselines.}

\subsubsection{\red{$K=1$: 1-FC and LOO-1-FC}}
\label{sec:exp_k1}

\red{For $K=1$, we report the power of 1-FC ND, the three SC baselines, and a single ``Best AD'' curve, defined as the AD instance with the split proportion $\rho$ attaining the largest power over the signal grid within that experiment. As stated earlier, the figures comparing 1-FC ND to the derandomized SC methods are deferred to Appendix~\ref{appd:nonrandom_vs_derandomized}.}

\begin{figure}[!t]
    \centering
    \includegraphics[width=0.9\textwidth]{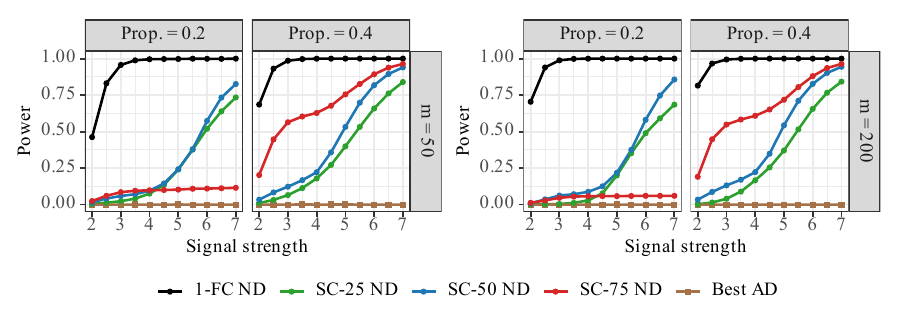}
    \caption{\red{$1$-FC ND  power results with one-class SVM scores. Here the FC curve is the LOO-1-FC e-BH procedure. The FDR target is $\alpha=0.1$. Each experiment uses 1,000 replications.}}
    \label{draft:fig:n150_svm}
\end{figure}

\red{Figure~\ref{draft:fig:n150_iso} reports the Isolation Forest results, while Figure~\ref{draft:fig:n150_svm} reports the SVM results. In both cases the FC procedure is the most powerful competitor in every signal strength regime, with the gap especially pronounced for the sparser setting $\pi_1=0.2$. The SVM-based 1-FC ND performs particularly better, which is within our expectations: the SVMs are fit on significantly more (high-dimensional) observations. 
We choose the model type for AD-$\rho$ to match those of FC: random forests for Isolation Forest-based FC and SVC for the SVM-based FC. In all experiments, AD performs the poorest.
}

\subsubsection{\red{$K=m$: $m$-FC with model selection}}
\label{sec:exp_ms}

\red{
For $K = m$---our other canonical choice of $K$---we will demonstrate its efficacy by implementing the model selection strategy outlined in   Section~\ref{sec:model_selection_mfc}.
We use the same data-generating mechanism and the same $(n,m,\pi_1,\alpha)$ choices as above. The candidate library contains ten models: 
four Isolation Forest models with numbers of trees in \{25, 50, 100, 200\}
and six one-class SVM models with $\nu\in\{0.004, 0.01, 0.1, 0.25\}$ (with \texttt{gamma} fixed to \texttt{scale} for the last two $\nu$ values and varying between \texttt{scale} and \texttt{auto} for the first two). 
These SVM candidates will again use the LOO framework. Following the ``top model ensembling'' strategy in Section~\ref{sec:model_selection_mfc}, we score the ten candidates by the proxy rejection criterion and ensemble the top three (after transforming the scores to a common scale).}

\begin{figure}[tbp!]
    \centering
    \includegraphics[width=0.9\textwidth]{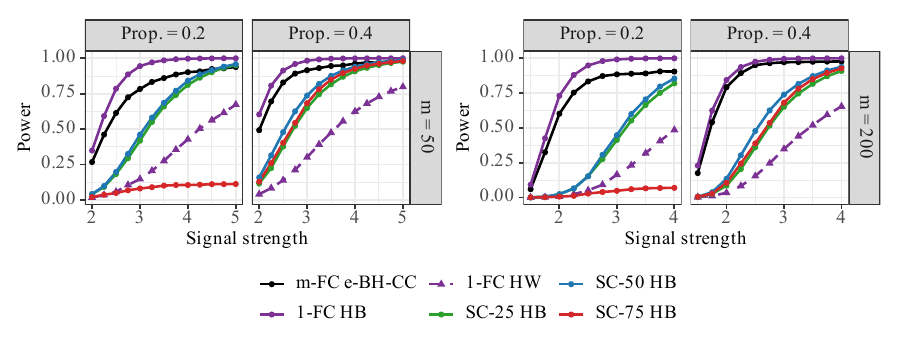}
    \caption{\red{Power results for $m$-FC ND with model selection. $m$-FC e-BH-CC uses scores produced via ensembling the top models per test unit. HB and HW denote the hindsight-best and hindsight-worst single candidates within the same candidate library for that specific framework (per signal strength value $a$). 
    The FDR target is $\alpha=0.1$. Each experiment uses 1,000 replications.}}
    \label{draft:fig:Km_mdlsel}
\end{figure}

\red{In Figure~\ref{draft:fig:Km_mdlsel}, HB and HW denote the hindsight-best and hindsight-worst single candidate models within the same ten model library.
They are evaluated separately at each signal strength for each framework and should be interpreted as oracle benchmarks. The $m$-FC e-BH-CC curve tracks 1-FC HB much more closely than 1-FC HW across both $m \in \{50, 200\}$. It outperforms SC-$\rho$ HB everywhere but the highest-signal regime, where it remains competitive with them.  This demonstrates that $m$-FC ND can adaptively learn which candidate models have highest detection power for the specific $\Dtest$. 
}

\subsection{\red{Weighted novelty detection under distribution shift}}
\label{sec:exp_weighted}

\red{We next introduce distribution shift between $\Dref$ and $\Dtest$. As in \citet{jin2023model, lee2024boosting}, the test set inliers and outliers are generated as before from $Q$, while the reference inliers are drawn from $P$ with density ratio $\dif Q / \dif P (z) \propto \sigma(z^\top \theta)$, where $\theta_j = (4-j)/10 \cdot \indc{j \le 3}$. We keep the remaining setup from Section~\ref{sec:exp_k1} largely unchanged, the only alteration being $n_W = d_W = 30$. We report weighted 1-FC ND results for both Isolation Forest and SVM models. For each model type, we compare 1-FC ND to the weighted SC e-BH-CC baselines (with the same model) mentioned in Section \ref{sec:weighted_ebhcc}.
}

\begin{figure}[h!]
    \centering
    \includegraphics[width=0.9\textwidth]{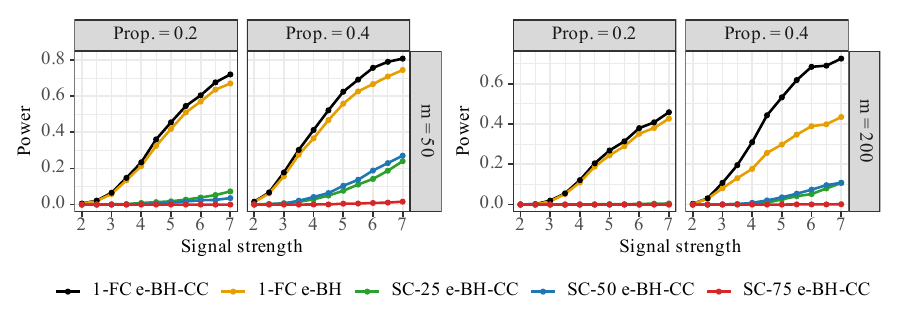}
    \caption{\red{Weighted 1-FC ND power results under distribution shift with Isolation Forest scores. We compare weighted 1-FC e-BH and weighted 1-FC e-BH-CC to the weighted SC-$\rho$ e-BH-CC baselines. The FDR target is $\alpha=0.1$. Each experiment uses 1,000 replications.}}
    \label{draft:fig:weighted_n150_if}
\end{figure}

\red{Figures~\ref{draft:fig:weighted_n150_if} and~\ref{draft:fig:weighted_n150_svm} 
show a similar story to the unweighted experiments: our 1-FC e-BH-CC method performs substantially better than the weighted SC-$\rho$ e-BH-CC methods.
The figures also illustrate the uniform power gain provided by the e-BH-CC framework, which is especially significant for $\pi_1 = 0.4, m=200$. A comparison to BH on weighted p-values (which is not provably FDR controlling at level $\alpha$) is deferred to Appendix~\ref{appd:more_experiments}.}

\begin{figure}[h!]
    \centering
    \includegraphics[width=0.9\textwidth]{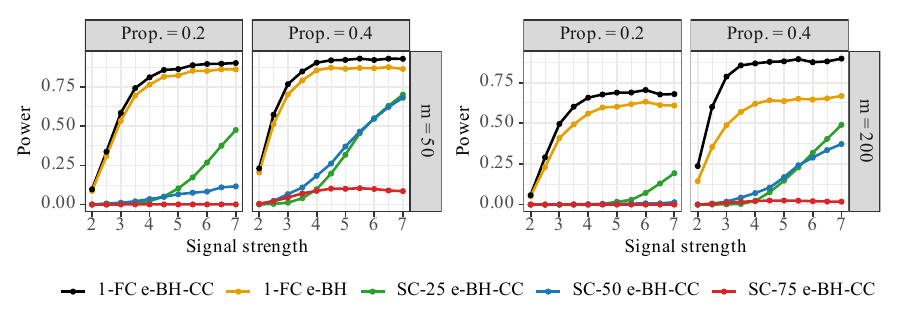}
    \caption{\red{Weighted 1-FC ND power results under distribution shift with one-class SVM scores. We compare weighted 1-FC e-BH and weighted 1-FC e-BH-CC to the weighted SC-$\rho$ e-BH-CC baselines. The FDR target is $\alpha=0.1$. Each experiment uses 1,000 replications.}}
    \label{draft:fig:weighted_n150_svm}
\end{figure}


\subsection{\red{Identification of malicious prompts in a curated dataset}}
\label{sec:realdata} 

\red{
In line with our LLM-focused motivations, we revisit the problem of  malicious prompt identification.~\citet{ayub2024embedding} use embedding models to pre-process benign and malicious LLM prompts and train random forest classifiers on these embeddings; the resulting fitted models can successfully predict new malicious prompts.
}

\red{
Using their provided labeled datasets of prompts and corresponding embeddings, we design a novelty detection setting in which we simulate the power of $m$-FC ND with model selection. For every replication, we draw $n=2,000$ benign prompts to be in $\Dref$ and $m=100$ prompts to be in $\Dtest$, with $80$ of them drawn from the benign prompts and 20 drawn from the malicious prompts (all uniformly randomly from their respective pools). 
}

\red{
Our suite of models contains twelve candidates.~\citet{ayub2024embedding} supply embeddings from three different models (OpenAI, OctoAI, and MiniLM) and we compose the embedder with each of four Isolation Forests instances (50, 100, 200, and 400 trees), giving twelve possible scoring models.  Next, for the model selection $m$-FC procedure, we use the exact same implementation (top-3 model ensembling) as in  Section~\ref{sec:exp_ms}. For each target level $\alpha \in \{0.1,0.2,0.3,0.4\}$, we conduct 1,000 replications.}

\red{
Figure~\ref{draft:fig:real_data_power} shows that our (boosted) $m$-FC ND approach performs close to the hindsight-best candidate for 1-FC ND while doing much better than the 1-FC hindsight-worst candidate. $m$-FC e-BH-CC also performs better than each SC-$\rho$ hindsight-best baseline. Lastly, our e-BH-CC augments the model selection e-values to boost power further.}

\begin{figure}[tbp!]
    \centering
    \includegraphics[width=0.8\textwidth]{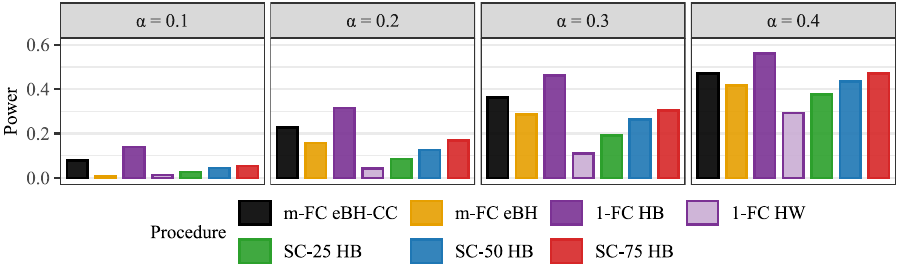}
    \caption{\red{Power results for malicious prompt detection.  $m$-FC e-BH-CC uses scores produced from the same top-3 model selection rule as in Section~\ref{sec:exp_ms}. HB and HW denote the hindsight-best and hindsight-worst single candidates within the same candidate library, evaluated separately for each FC/SC framework and  FDR target $\alpha$. Each experiment uses 1,000 replications.}}
    \label{draft:fig:real_data_power}
\end{figure}

\section{Discussion}
In this paper, we introduce a methodology for 
model-free novelty detection based on the full conformal 
inference framework\red{, which improves upon previous works
which lack data efficiency and stability due to their sample split. Our full conformal approach uses the entire reference data for training, gaining significant power (as evidenced by our simulation results) over splitting baselines.} 
In contrast to previous works that 
use conformal p-values, we construct conformal e-values 
to quantify evidence for novelty, 
\red{allowing flexible, data-driven constructions focusing on power while successfully addressing}
the technical difficulty of establishing finite-sample FDR control.
\red{We conclude with a discussion of future inquiries.}

\paragraph{\red{Comparisons between p-values and e-values.}}
\red{
For novelty detection, the choice between p- and e-value is much too nuanced to discuss holistically in this paper. For example, in Section \ref{sec:default_setting} we state the equivalence of BH and e-BH approaches for the canonical $K=1$ setting.
However, by applying e-BH-CC with conformal e-values, 
one can obtain a uniformly more powerful rejection procedure than BH with 1-FC p-values \eqref{eq:kfc_pval}
(see Appendix \ref{appd:ebhcc_power_over_bh} for a construction.) The power gap is not unidirectional between the two approaches.
}
\red{\indent For $K\ge2$, we eschew p-values as the FDR control via BH is not ensured, but~\citet{fithian2020conditional} offers a conditional calibration approach to re-establish it---albeit by re-introducing randomness. Regardless, the design space for a conformal p-value approach is very open, and this paper aims to be part of an extensive discussion about the two tools.
}
\paragraph{\red{Leveraging structure in $\Dtest$.}} 
\red{
Although we focus mainly on our canonical instantiations, our conclusions hold for any $1 \le K \le m $. For $K \notin \{1, m\}$, it is not clear how to best split $\Dtest$ into $K$ blocks---by default, one can do a random partition, but this introduces randomness. However, one can imagine some underlying structure among the test units such that grouping them together would fit a higher-quality model. If one can identify settings where such structures exist and learn the structure with data without violating exchangeability, then $K$-FC ND  could be applied quite successfully.
}

\subsection*{Acknowledgments}

The authors would like to thank the Wharton Research Computing team for the amazing support provided by the staff members. J.L.~is partially supported by a Graduate
Research Fellowship from the NSF.
Z.R.~is supported by NSF grant DMS-2413135 and the Wharton AI \& Analytics Initiative's AI Research Fund.

\bibliographystyle{apalike}
\bibliography{ref}
	
\newpage
\appendix

\section{Technical proofs}
\label{appd:proofs}
\subsection{\red{Proof of Theorem~\ref{thm:kfc_valid}}}
\label{appd:proof_eval_valid}

We first prove a more general score-level result, which contains both the
$K$-FC and LOO-$K$-FC constructions as special cases.

\begin{theorem}
\label{thm:kfc_valid_general}
Fix any $j\in\cH_0$ and $\tilde \alpha \in (0,1)$.
Let $\cW=\{W_i\}_{i\in[n+m]}$ be any collection of scores, and define
\begin{align}
    \begin{split}
        \label{eq:general_eval}
        e_j &= (n+1)\cdot \frac{\ind\{W_{n+j} \ge T\}}{1+\sum_{i\in[n]}\ind\{W_i \ge T\}},\\
        \textnormal{where } T
        &= \inf\left\{ t\in \cW \colon
        \frac{m}{ n+1} \cdot \frac{1 + \sum_{i=1}^{ n} \ind\{W_i \ge t\} }{1\vee \sum_{\ell=1}^m \ind\{W_{n+\ell} \ge t\}   } \le \tilde \alpha  \right\}.
    \end{split}
\end{align}
Assume that under $H_j$, there exists a collection of random  variables $S_j$ such that,
conditionally on $S_j$, the full score vector $\{W_i\}_{i\in[n+m]}$ is
invariant to permutations of $[n]\cup\{n+j\}$; namely, for any permutation
$\pi$ which acts as a bijection on $[n]\cup\{n+j\}$ and keeps fixed 
$\pi(n+\ell)=n+\ell$ for all $\ell\in [m]\setminus \{j\}$,
\[
(W_1,\ldots,W_{n+m}) \mid S_j
\stackrel{d}{=}
(W_{\pi(1)},\ldots,W_{\pi(n+m)}) \mid S_j.
\]
Then $\EE[e_j]\le 1$.
\end{theorem}

\begin{proof}
Define a hypothesis-specific stopping time
\[
\tilde T_j := \inf\bigg\{t\in\cW:
\frac{m}{n+1} \cdot \frac{\ind\{W_{n+j} \ge t\} + \sum_{i=1}^n \ind\{W_i \ge t\}
}{1 + \sum_{\ell \neq j} \ind\{W_{n+\ell} \ge t\}} \le \tilde \alpha \bigg\}.
\]
By construction, $\tilde T_j$ is invariant to the permutation of
$W_1,\ldots,W_n,W_{n+j}$ and $\tilde T_j \le T$.
Next, on the event $\{W_{n+j} \ge T\}$, we have that
$W_{n+j} \ge \tilde T_j$ and
\[
\frac{m}{n+1} \cdot \frac{1 + \sum_{i=1}^n \ind\{W_i \ge \tilde T_j\}}
{\sum_{\ell = 1}^m \ind\{W_{n+\ell} \ge \tilde T_j\}}
=
\frac{m}{n+1} \cdot \frac{\ind\{W_{n+j} \ge \tilde T_j\} + \sum_{i=1}^n \ind\{W_i \ge \tilde T_j\}}
{1 + \sum_{\ell \neq j} \ind\{W_{n+\ell} \ge \tilde T_j\}} \le \tilde \alpha,
\]
with the last step following from the definition of $\tilde T_j$.
The above also implies that $T \le \tilde T_j$. Consequently, we have
$T = \tilde T_j$ on the event $\{W_{n+j} \ge T\}$.
With this fact, we proceed to show that
\begin{align}
    \begin{split}
        \label{eq:general_thm_proof}
\EE[e_j]
&= (n+1)\cdot \EE\bigg[\frac{\ind\{W_{n+j} \ge T\}}{1+\sum_{i=1}^n \ind\{W_i \ge T\}}\bigg]\\
&\le (n+1)\cdot \EE\bigg[\frac{\ind\{W_{n+j} \ge \tilde T_j\}}{1+\sum_{i=1}^n \ind\{W_i \ge \tilde T_j\}}\bigg]\\
&= (n+1)\cdot \EE\bigg[\frac{\ind\{W_{n+j} \ge \tilde T_j\}}{\big(\ind\{W_{n+j} \ge \tilde T_j\}+\sum_{i=1}^n \ind\{W_i \ge \tilde T_j\} \big) \vee 1}\bigg].
    \end{split}
\end{align}
By assumption, under $H_j$, conditional on $S_j$, the score vector
$\{W_i\}_{i\in[n+m]}$ is invariant to the permutation of
$[n]\cup\{n+j\}$. Since $\tilde T_j$ is also invariant to the permutation
of $W_1,\ldots,W_n,W_{n+j}$, we have
\[
\begin{aligned}
&(n+1)\cdot \EE\bigg[\frac{\ind\{W_{n+j} \ge \tilde T_j\}}{\big(\ind\{W_{n+j} \ge \tilde T_j\}+\sum_{i=1}^n \ind\{W_i \ge \tilde T_j\} \big) \vee 1}\bigg]\\
= ~& \EE\Bigg[\sum_{i \in [n] \cup \{n+j\}}
\EE\bigg[\frac{\ind\{W_{n+j} \ge \tilde T_j\}}{\big(\ind\{W_{n+j} \ge \tilde T_j\}+\sum_{\ell=1}^n \ind\{W_\ell \ge \tilde T_j\} \big) \vee 1}\bigg] \Biggiven S_j \Bigg]\\
\stepa{=} ~& \EE\Bigg[\sum_{i \in [n] \cup \{n+j\}}
\EE\bigg[\frac{\ind\{W_i \ge \tilde T_j\}}{\big(\ind\{W_{n+j} \ge \tilde T_j\}+\sum_{\ell=1}^n \ind\{W_\ell \ge \tilde T_j\} \big) \vee 1}\bigg] \Biggiven S_j \Bigg]\\
= & \sum_{i \in [n]\cup\{n+j\}}
\EE\bigg[\frac{\ind\{W_i \ge \tilde T_j\}}
{\sum_{\ell \in [n] \cup \{n+j\}} \ind\{W_\ell \ge \tilde T_j\} \vee 1}\bigg]\le 1,
\end{aligned}
\]
where step (a) is because, conditional on $S_j$, swapping $W_i$ and
$W_{n+j}$ does not change the distribution.
Combining this with \eqref{eq:general_thm_proof} proves the claim.
\end{proof}

\begin{proof}[Proof of Theorem~\ref{thm:kfc_valid}]
Fix any $j\in\cH_0$, and let $Z_{n+j}\in B_k$.
Take $S_j := \{Z_{n+\ell}\}_{\ell\in[m]\backslash\{j\}}$ and
$W_i := V_i^{(k)}$ for $i\in[n+m]$.
Under $H_j$, the distribution of $(Z_1,\ldots,Z_n,Z_{n+j})$ is
exchangeable conditional on $S_j$.
Moreover, by the definition of the $K$-FC construction,
$V^{(k)}(\cdot)$ is trained in a way that is invariant to the ordering of
the samples in $\Dref\cup B_k$.
Therefore, conditional on $S_j$, the full score vector
$\{V_i^{(k)}\}_{i\in[n+m]}$ is invariant to permutations of
$[n]\cup\{n+j\}$.
Hence the condition of Theorem~\ref{thm:kfc_valid_general} is satisfied,
which implies that $\EE[e_j]\le 1$.
The FDR control follows directly from the property of the e-BH
procedure~\citep{wang2022false}.
\end{proof}

\begin{corollary}
\label{cor:loo_kfc_valid}
The same conclusion as in Theorem~\ref{thm:kfc_valid} holds for the
LOO-$K$-FC e-values obtained by setting
$V_i^{(k)} = V^{(k,-i)}(Z_i)$ in \eqref{eq:kfc_eval}.
\end{corollary}

\begin{proof}
Fix any $j\in\cH_0$, and let $Z_{n+j}\in B_k$.
Take $S_j := \{Z_{n+\ell}\}_{\ell\in[m]\backslash\{j\}}$ and
$W_i := V^{(k,-i)}(Z_i)$ for $i\in[n+m]$.
Under $H_j$, the distribution of $(Z_1,\ldots,Z_n,Z_{n+j})$ is
exchangeable conditional on $S_j$.
For each $i\in[n]\cup\{n+j\}$, the scorer $V^{(k,-i)}(\cdot)$ is trained
in a way that is invariant to the ordering of
$\Dref\cup B_k\backslash\{Z_i\}$.
Hence, conditional on $S_j$, permuting the indices in
$[n]\cup\{n+j\}$ sends the score $W_i$ to the score of the permuted
index.
For $\ell\neq j$, the score $W_{n+\ell}$ is unchanged under the same
permutation, since its evaluation point $Z_{n+\ell}$ is fixed and its
training set $\Dref\cup B_k\backslash\{Z_{n+\ell}\}$ is unchanged as an
unordered bag.
Therefore, conditional on $S_j$, the full score vector
$\{W_i\}_{i\in[n+m]}$ is invariant to permutations of
$[n]\cup\{n+j\}$.
The condition of Theorem~\ref{thm:kfc_valid_general} is thus satisfied,
and the result follows.
\end{proof}

\subsection{\red{Proof of Proposition~\ref{prop:ms}}}
\label{appd:proof_ms_validity}
\begin{proof}
Fix any $j\in\cH_0$, and abbreviate
$V_i^{\mathrm{MS}} := V_i^{\mathrm{MS},(j)}$ for $i\in[n+m]$.
Let $T_j^{\mathrm{MS}}$ denote the threshold in \eqref{eq:kfc_eval}
computed from the scores $\{V_i^{\mathrm{MS}}\}_{i\in[n+m]}$.
Define
\[
\tilde T_j^{\mathrm{MS}}
:=
\inf\bigg\{
t\in \{V_i^{\mathrm{MS}}\}_{i\in[n+m]}:
\frac{m}{n+1}\cdot
\frac{\ind\{V_{n+j}^{\mathrm{MS}} \ge t\}+\sum_{i=1}^n \ind\{V_i^{\mathrm{MS}} \ge t\}}
{1+\sum_{r \neq j}\ind\{V_{n+r}^{\mathrm{MS}} \ge t\}}
\le \tilde \alpha
\bigg\}.
\]
By construction, $\tilde T_j^{\mathrm{MS}} \le T_j^{\mathrm{MS}}$.
Next, on the event $\{V_{n+j}^{\mathrm{MS}} \ge T_j^{\mathrm{MS}}\}$, we have
$V_{n+j}^{\mathrm{MS}} \ge \tilde T_j^{\mathrm{MS}}$ and
\[
\frac{m}{n+1} \cdot
\frac{1 + \sum_{i=1}^n \ind\{V_i^{\mathrm{MS}} \ge \tilde T_j^{\mathrm{MS}}\}}
{\sum_{r=1}^m \ind\{V_{n+r}^{\mathrm{MS}} \ge \tilde T_j^{\mathrm{MS}}\}}
=
\frac{m}{n+1} \cdot
\frac{\ind\{V_{n+j}^{\mathrm{MS}} \ge \tilde T_j^{\mathrm{MS}}\}
+ \sum_{i=1}^n \ind\{V_i^{\mathrm{MS}} \ge \tilde T_j^{\mathrm{MS}}\}}
{1 + \sum_{r\neq j} \ind\{V_{n+r}^{\mathrm{MS}} \ge \tilde T_j^{\mathrm{MS}}\}}
\le \tilde \alpha,
\]
with the last step following from the definition of
$\tilde T_j^{\mathrm{MS}}$.
The above also implies that $T_j^{\mathrm{MS}} \le \tilde T_j^{\mathrm{MS}}$.
Consequently, we have
$T_j^{\mathrm{MS}} = \tilde T_j^{\mathrm{MS}}$
on the event $\{V_{n+j}^{\mathrm{MS}} \ge T_j^{\mathrm{MS}}\}$.
Hence,
\begin{align}
    \begin{split}
\EE[e_j]
&= (n+1)\cdot \EE\bigg[
\frac{\ind\{V_{n+j}^{\mathrm{MS}} \ge T_j^{\mathrm{MS}}\}}
{1+\sum_{i=1}^n \ind\{V_i^{\mathrm{MS}} \ge T_j^{\mathrm{MS}}\}}
\bigg]\\
&\le (n+1)\cdot \EE\bigg[
\frac{\ind\{V_{n+j}^{\mathrm{MS}} \ge \tilde T_j^{\mathrm{MS}}\}}
{1+\sum_{i=1}^n \ind\{V_i^{\mathrm{MS}} \ge \tilde T_j^{\mathrm{MS}}\}}
\bigg]\\
&= (n+1)\cdot \EE\bigg[
\frac{\ind\{V_{n+j}^{\mathrm{MS}} \ge \tilde T_j^{\mathrm{MS}}\}}
{\big(\ind\{V_{n+j}^{\mathrm{MS}} \ge \tilde T_j^{\mathrm{MS}}\}
+\sum_{i=1}^n \ind\{V_i^{\mathrm{MS}} \ge \tilde T_j^{\mathrm{MS}}\}\big)\vee 1}
\bigg].
    \end{split}
    \label{eq:prop_ms_proof}
\end{align}

Now condition on $S_j$.
Since $\Gamma_1,\dots,\Gamma_L$ are $\sigma(S_j)$-measurable, conditional on $S_j$
the map
\[
h_j(x_1,\dots,x_L)
:=
g\big((x_1,\Gamma_1),\dots,(x_L,\Gamma_L)\big)
\]
is deterministic (here, we implicitly assume measurability of the function $g(\cdot)$).
By assumption (i), the collection
$\{(V_i^{(j,1)},\dots,V_i^{(j,L)})\}_{i\in[n]\cup\{n+j\}}$
is jointly exchangeable conditional on $S_j$.
Therefore,
\[
\{V_i^{\mathrm{MS}}\}_{i\in[n]\cup\{n+j\}}
=
\{h_j(V_i^{(j,1)},\dots,V_i^{(j,L)})\}_{i\in[n]\cup\{n+j\}}
\]
is also jointly exchangeable conditional on $S_j$.
Moreover, by assumption (ii) and the $\sigma(S_j)$-measurability of
$\Gamma_1,\dots,\Gamma_L$, the scores
$\{V_{n+r}^{\mathrm{MS}}\}_{r\neq j}$ are $\sigma(S_j)$-measurable.
It follows that $\tilde T_j^{\mathrm{MS}}$ is invariant to the permutation of
$V_1^{\mathrm{MS}},\dots,V_n^{\mathrm{MS}},V_{n+j}^{\mathrm{MS}}$
conditional on $S_j$.

We may therefore repeat the exchangeability argument from the proof of
Theorem~\ref{thm:kfc_valid}:
\[
\begin{aligned}
&(n+1)\cdot \EE\bigg[
\frac{\ind\{V^{\mathrm{MS}}_{n+j} \ge \tilde T^{\mathrm{MS}}_j\}}
{\big(\ind\{V^{\mathrm{MS}}_{n+j} \ge \tilde T^{\mathrm{MS}}_j\}
+\sum_{i=1}^n \ind\{V^{\mathrm{MS}}_i \ge \tilde T^{\mathrm{MS}}_j\}\big)\vee 1}
\bigg]\\
=~&
\EE\Bigg[
\sum_{i \in [n] \cup \{n+j\}}
\EE\bigg[
\frac{\ind\{V^{\mathrm{MS}}_{n+j} \ge \tilde T^{\mathrm{MS}}_j\}}
{\big(\ind\{V^{\mathrm{MS}}_{n+j} \ge \tilde T^{\mathrm{MS}}_j\}
+\sum_{\ell=1}^n \ind\{V^{\mathrm{MS}}_\ell \ge \tilde T^{\mathrm{MS}}_j\}\big)\vee 1}
\bigg] \bigggiven S_j
\Bigg]\\
\stepa{=}~&
\EE\Bigg[
\sum_{i \in [n] \cup \{n+j\}}
\EE\bigg[
\frac{\ind\{V^{\mathrm{MS}}_{i} \ge \tilde T^{\mathrm{MS}}_j\}}
{\big(\ind\{V^{\mathrm{MS}}_{n+j} \ge \tilde T^{\mathrm{MS}}_j\}
+\sum_{\ell=1}^n \ind\{V^{\mathrm{MS}}_\ell \ge \tilde T^{\mathrm{MS}}_j\}\big)\vee 1}
\bigg] \bigggiven S_j
\Bigg]\\
=~&
\sum_{i \in [n]\cup\{n+j\}}
\EE\bigg[
\frac{\ind\{V^{\mathrm{MS}}_i \ge \tilde T_j^{\mathrm{MS}}\}}
{\sum_{\ell \in [n] \cup \{n+j\}} \ind\{V^{\mathrm{MS}}_\ell \ge \tilde T_j^{\mathrm{MS}}\}\vee 1}
\bigg]
\le 1,
\end{aligned}
\]
where step (a) is because, conditional on $S_j$, swapping
$V_i^{\mathrm{MS}}$ and $V_{n+j}^{\mathrm{MS}}$ does not change the
distribution.

Combining the above with \eqref{eq:prop_ms_proof}, we conclude that
$\EE[e_j]\le 1$. Therefore, $e_j$ is a valid e-value.
\end{proof}

\subsection{Proof of Proposition~\ref{prop:boost_eval}}
\label{appd:proof_boost_eval}
Fix any $j\in \cH_0$. If $\phi_j(\hat c_j;S_j)\le 0$, then
by the definition of $e_j^\boost$, 
\$ 
\EE[e_j^\boost \given S_j] = 
\EE\bigg[\frac{m \ind\{e_j \ge \frac{m}{\alpha |\cR \cup \{j\}|} 
\text{ or } j \in \cA(\hat c_j)\}}{\alpha |\cR \cup \{j\}|}
\Biggiven S_j\bigg] = \phi_j(\hat c_j;S_j) + \EE[e_j \given S_j]
\le \EE[e_j\given S_j].
\$
Taking expectation over $S_j$ on both sides leads to $\EE[e_j^\boost] \le 
\EE[e_j] \le 1$.

If, on the other hand, $\phi_j(\hat c_j;S_j) >0$, then 
\$ 
\EE[e_j^\boost \given S_j] 
& = \EE\bigg[\lim_{\ell \rightarrow \infty}
\frac{m \ind\{e_j \ge \frac{m}{\alpha |\cR \cup \{j\}|} 
\text{ or } j \in \cA(\hat c_{j,\ell})\}}{\alpha |\cR \cup \{j\}|}
\Biggiven S_j\bigg] \\
& = \lim_{\ell \rightarrow \infty}\EE\bigg[
\frac{m \ind\{e_j \ge \frac{m}{\alpha |\cR \cup \{j\}|} 
\text{ or } j \in \cA(\hat c_{j,\ell})\}}{\alpha |\cR \cup \{j\}|}
\Biggiven S_j\bigg] \\
& = \lim_{\ell \rightarrow \infty} \phi_j(\hat c_{j,\ell}) + 
\EE[e_j\given S_j] \le \EE[e_j \given S_j],
\$
where the second equality is by the monotone convergence theorem,
and the last inequality is by the choice of $\hat c_{j}$.
Again taking expectation over $S_j$, we have in this case that 
$\EE[e_j^\boost] \le \EE[e_j] \le 1$.

To see that $\cR^\ebh_\alpha(e_1,\ldots,e_m) \subseteq 
\cR^\ebh_\alpha(e_1^\boost,\ldots,e_m^\boost)$, we leverage the fact 
shown in~\citet{lee2024boosting} that 
$\cR^{\ebh}_\alpha(e_1,\ldots,e_m) = \cR^\ebh_\alpha(\tilde e_1,
\ldots,\tilde e_m)$, where 
\$ 
\tilde e_j = \frac{m \ind\{e_j \ge \frac{m}{\alpha |\cR \cup \{j\}|} 
\}}{\alpha |\cR \cup \{j\}|}, ~\forall j\in [m].
\$
Since $\tilde e_j \le e_j^\boost$ deterministically, we complete the proof. 

\subsection{Proof of Theorem \ref{thm:weighted_eval_valid}}
\label{appd:proof_of_weighted_eval}

\begin{proof}

    We first fix $j$ and the block $k$ that $Z_{n+j}$ belongs to. Define the modified threshold 
    \begin{multline}\label{eq:weighted_full_conf_modified_threshold}
        \hat T_k = \inf\Bigg\{ t \in  \cV^{(k)} \colon    \frac{m}{ w_{n+j} + \sum_{i=1}^n w_i } \cdot \frac{w_{n+j}\indc{V_{n+j}^{(k)} \ge t} + \sum_{i=1}^n w_i\indc{V_i^{(k)} \ge t}}{1 + \sum_{k \in [m]\setminus\{j\}} \indc{V_{n+k}^{(k)} \ge t}} \le \tilde\alpha  \Bigg\}.
    \end{multline}
    Similar to the proof of Theorem \ref{thm:kfc_valid}, we can use the fact that on the event $\{V_{n+j}^{(k)} \ge T_k\}$, we have $T_k = \hat T_k$. 
    Now, under $H_j$, write
    \begin{align}
        \begin{split}
            \EE[e_j] &=  \EE\left[ \frac{ \big( w_{n+j} + \sum_{i=1}^n w_i \big)\indc{V_{n+j}^{(k)} \ge T_k}}{w_{n+j}  + \sum_{i=1}^n w_i \indc{V_i^{(k)} \ge T_k}} \right]\\
             &=  \EE\left[ \frac{ \big( w_{n+j} + \sum_{i=1}^n w_i \big)\indc{V_{n+j}^{(k)} \ge T_k}}{w_{n+j}\indc{V_{n+j}^{(k)} \ge T_k}  + \sum_{i=1}^n w_i \indc{V_i^{(k)} \ge T_k}} \right] \quad (\textnormal{using the notation of }0/0=0)\\
             &\stepa{\le}  \EE\left[ \frac{ \big( w_{n+j} + \sum_{i=1}^n w_i \big)\indc{V_{n+j}^{(k)} \ge \hat T_k}}{w_{n+j}\indc{V_{n+j}^{(k)} \ge \hat T_k}  + \sum_{i=1}^n w_i \indc{V_i^{(k)} \ge \hat T_k}} \right] \\ 
             &=  \EE\left[\EE\left[ \frac{ \big( w_{n+j} + \sum_{i=1}^n w_i \big)\indc{V_{n+j}^{(k)} \ge \hat T_k}}{w_{n+j}\indc{V_{n+j}^{(k)} \ge \hat T_k}  + \sum_{i=1}^n w_i \indc{V_i^{(k)} \ge \hat T_k}}  \bigggiven \cE_j, \{ Z_{n+k} \}_{k \in [m]\setminus\{j\}}  \right] \right]\\ 
            &\stepb{=} \EE\Bigg[\frac{  w_{n+j} + \sum_{i=1}^n w_i   }{w_{n+j}\indc{V_{n+j}^{(k)} \ge \hat T_k}  + \sum_{i=1}^n w_i \indc{V_i^{(k)} \ge \hat T_k}} \\
                &\qquad\qquad \cdot \EE\bigg[ \indc{V_{n+j}^{(k)} \ge \hat T_k}  \Biggiven \cE_j, \{ Z_{n+\ell } \}_{\ell \in [m]\setminus\{j\}} \bigg]
            \Bigg]
        \end{split}
    \end{align}
    where $\cE_j = \red{\Lbag \{Z_1, \dots, Z_n, Z_{n+j}\}\Rbag}$, the unordered elements of $\Dref\cup \{Z_{n+j} \}$.
    
    Step (a) is allowed due to considering the expectation on and off the event $\{V_{n+j}^{(k)} \ge T_k\}$, similar to the analogous step in Appendix \ref{appd:proof_eval_valid}, equation \eqref{eq:general_thm_proof}.
    Regarding step  (b): the tower property allows us to take out the first multiplicative term in the last line, as both the numerator and denominator are constructable using $\cE_j$  without knowing the indices of the unordered members. This is true in turn because $\hat T_k$ is also constructable using the conditioning items, again using the fact that $\hat T_k$ is a function of $\cE_j$ and is invariant to the true order  of the unordered items.

    Hence, we must evaluate
    $$
        \EE\left[ \indc{V_{n+j}^{(k)} \ge \hat T_k}  \given \cE_j, \{ Z_{n+k} \}_{k \neq j} \right] = \PP\big(V_{n+j}^{(k)} \ge \hat T_k \given \cE_j, \{ Z_{n+k} \}_{k \neq j}\big).
    $$
    We know this conditional distribution: 
    assuming no ties in the resulting scores in $\cV^{(k)}$, we have that 
    \begin{equation}\label{eq:weighted_unif_scores}
        V_{n+j}^{(k)} \given \{ \cE_j = z\}, \{Z_{n+\ell}\}_{\ell \neq j} \sim \sum_{\ell  \in [n] \cup \{n+j\}} \frac{w(z_\ell)}{w(z_{n+j}) + \sum_{i=1}^n w(z_i) }\delta_{v_\ell}
    \end{equation}
    because 1) the elements in $\cE_j$ are assumed to be weighted exchangeable and 2) the scoring function $V^{(k)}(\cdot)$ was trained invariant to the order of elements inside $\cE_j$ (this assumes that $V^{(k)}$ does not use external randomness during training; similar results can be attained for random training procedures by conditioning on the random seed). 
    Thus, we can conclude that
    $$
        \PP(V_{n+j}^{(k)} \ge t \given \cE_j ) = \sum_{\ell \in [n] \cup \{n+j\}} \frac{w_\ell\indc{V_\ell^{(k)} \ge t}}{w_{n+j} + \sum_{i=1}^n w_i}
    $$
    which directly implies that
    \begin{equation}
        \PP( V_{n+j}^{(k)} \ge \hat T_k \given \cE_j, \{Z_{n+k}\}_{k\neq j} ) = \frac{w(z_{n+j}) \indc{V_{n+j}^{(k)} \ge \hat T_k}  +\sum_{i=1}^n w(z_i)\indc{ V_{i}^{(k)} \ge \hat T_k }}{w(z_{n+j}) + \sum_{i=1}^n w(z_i)}
    \end{equation}
    as $\hat T_k$ is a constant w.r.t. the conditioning items and $\{Z_{n+k}\}_{k\neq j} \indep \cE_j$.  
    Hence, substituting this back into our computation above, we conclude that  $\EE[e_j] \le 1$ under $H_j$.

\end{proof}


\subsection{Proof of Proposition~\ref{prop:nullest}}
\label{appd:proof_nullest}
Fix $j \in \cH_0$ and let $B_k$ denote the block to which $Z_{n+j}$ belongs. By construction, we have  
\$
\EE\bigg[\frac{e_j}{\hat \pi_0^{(j)}}\bigg]
= (n+1) \cdot \EE\bigg[\frac{\ind\{V_{n+j}^{(k)} \ge T_k\}}
{1 + \sum_{i\in [n]}\ind\{V_{i}^{(k)} \ge T_k\}} 
\cdot \frac{1}{\hat \pi_0^{(j)}}\bigg]. 
\$
Define an alternative stopping time: 
\$
\hat T_k = \inf\bigg\{t \in \cV^{(k)}: \frac{\ind\{V_{n+j}^{(k)}\ge t\}
+\sum_{i\in[n]} \ind\{V_i^{(k)} \ge t\}}{1 + \sum_{\ell \in [m]\backslash \{j\}} \ind\{V_{n+\ell}^{(k)} \ge t\}} \le \tilde \alpha\bigg\}.  
\$
Per their definitions, it is straightforward to see that 
$\hat{T}_k \le T_k$. On the event $\{V_{n+j}^{(k)} \ge T_k\}$, 
we also have $V_{n+j}^{(k)} \ge \hat T_k$ and  
\$
\frac{1 + \sum_{i\in[n]} \ind\{V_i^{(k)} \ge \hat T_k\}}
{\sum_{\ell \in [m]} \ind\{V_{n+\ell}^{(k)} \ge \hat T_k\}}
= \frac{\ind\{V_{n+j}^{(k)} \ge \hat T_k\} + \sum_{i\in[n]} \ind\{V_i^{(k)} \ge \hat T_k\}}
{1 + \sum_{\ell \in [m]\backslash\{j\}} \ind\{V_{n+\ell}^{(k)} \ge \hat T_k\}} \le \tilde \alpha,
\$
where the last inequality follows from the definition of $\hat T_k$.
Recalling the definition of $T_k$, we have $T_k \le \hat T_k$.
Putting everything together, we have $T_k = \hat T_k$ on the 
event $\{V_{n+j}^{(k)} \ge T_k\}$. Then,
\$
\EE\bigg[\frac{e_j}{\hat \pi_0^{(j)}}\bigg] 
& \le (n+1) \cdot \EE\bigg[\frac{\ind\{V_{n+j}^{(k)} \ge \hat T_k\}}{1 + \sum_{i\in [n]}\ind\{V_{i}^{(k)} \ge \hat T_k\}} 
\cdot \frac{1}{\hat \pi_0^{(j)}}\bigg]\\
& = (n+1) \cdot \EE\bigg[\frac{\ind\{V_{n+j}^{(k)} \ge \hat T_k\}}
{1 \vee \sum_{i\in [n] \cup \{n+j\}}\ind\{V_{i}^{(k)} \ge \hat T_k\}} 
\cdot \frac{1}{\hat \pi_0^{(j)}}\bigg].
\$
Note that $\hat T_j$ and $\hat \pi_0^{(j)}$ are both invariant 
to the permutations on $\Dref\cup \{Z_{n+j}\}$, and that 
$Z_1,\ldots,Z_n,Z_{n+j}$ are exchangeable given $\{Z_{n+\ell}\}_{\ell \neq j}$(since $j\in \cH_0$). 
So for any $i\in[n]$, 
\$
\EE\bigg[\frac{\ind\{V_{n+j}^{(k)} \ge \hat T_k\}}{1 \vee \sum_{\ell\in [n] \cup \{n+j\}}\ind\{V_{\ell}^{(k)} \ge \hat T_k\}} 
\cdot \frac{1}{\hat \pi_0^{(j)}}\bigg]
& = \EE\bigg[\frac{\ind\{V_{i}^{(k)} \ge \hat T_k\}}{1 \vee \sum_{\ell\in [n] \cup \{n+j\}}\ind\{V_{\ell}^{(k)} \ge \hat T_k\}} 
\cdot \frac{1}{\hat \pi_0^{(j)}}\bigg]\\
& =\frac{1}{n+1} \sum_{i \in [n]\cup \{n+j\}} 
\EE\bigg[\frac{\ind\{V_{i}^{(k)} \ge \hat T_k\}}{1 \vee \sum_{\ell\in [n] \cup \{n+j\}}\ind\{V_{\ell}^{(k)} \ge \hat T_k\}} 
\cdot \frac{1}{\hat \pi_0^{(j)}}\bigg]\\
& =\frac{1}{n+1}  
\EE\bigg[\frac{\sum_{i \in [n]\cup \{n+j\}}\ind\{V_{i}^{(k)} \ge \hat T_k\}}{1 \vee \sum_{\ell\in [n] \cup \{n+j\}}\ind\{V_{\ell}^{(k)} \ge \hat T_k\}} 
\cdot \frac{1}{\hat \pi_0^{(j)}}\bigg]\\
& \le \frac{1}{n+1}\EE\Big[\frac{1}{\hat \pi_0^{(j)}}\Big].
\$
Combining the above, we arrive at $\EE[e_j /\hat \pi_0^{(j)}] 
\le \EE[1/\hat \pi_0^{(j)}]$. We proceed to bound the right-hand side: 
\$
\EE\bigg[\frac{1}{\hat \pi_0^{(j)}}\bigg]
& = \frac{m}{n+1}\EE\Big[\frac{\sum_{i \in [n]\cup \{n+j\}}
\ind\{S_{i} \le S_{(\tau_j)}\}}{1 + \sum_{\ell \in [m]\backslash \{j\}} 
\ind\{S_{n+\ell} \le S_{(\tau_j)}\}}\Big]\\
& = \frac{m}{n+1} \sum_{i \in [n]\cup \{n+j\}} 
\EE\Big[\frac{
\ind\{S_{i} \le S_{(\tau_j)}\}}{1 + \sum_{\ell \in [m]\backslash \{j\}} 
\ind\{S_{n+\ell} \le S_{(\tau_j)}\}}\Big]\\
& \le \frac{m}{n+1} \sum_{i \in [n]\cup \{n+j\}} 
\EE\Big[\frac{
\ind\{S_{i} \le S_{(\tau_j)}\}}{1 + \sum_{\ell \in \cH_0\backslash \{j\}} 
\ind\{S_{n+\ell} \le S_{(\tau_j)}\}}\Big].
\$
For any $i \in [n]\cup \{n+j\}$, 
\@\label{eq:nullest}
\EE\Big[\frac{
\ind\{S_{i} \le S_{(\tau_j)}\}}{1 + \sum_{\ell \in \cH_0\backslash \{j\}} 
\ind\{S_{n+\ell} \le S_{(\tau_j)}\}}\Big]
= \EE\Big[\frac{
\ind\{S_{i} \le S_{(\tau_j)}\}}{1 \vee(\ind\{S_i \le S_{(\tau_j)}\} + \sum_{\ell \in \cH_0\backslash \{j\}} 
\ind\{S_{n+\ell} \le S_{(\tau_j)}\})}\Big].
\@  
Recall that $\tau_j$ is a stopping time with respect to the 
backwards filtration $\{\cG_k\}_{k\in[n+m]}$. Therfore for any $k \in [n+m]$, 
we can find some function $f_k$ such that 
\$
& \ind\{\tau_j \ge k\} = f_k(\{A_\ell\}_{\ell \ge k}, \{B_\ell\}_{\ell \ge k}), \text{ where}\\
&A_\ell = 1+ \sum_{\ell' \neq j} 
\ind\{S_{n+\ell'} \le S_{(\ell)}\} 
\text{ and }B_\ell = \ind\{S_{n+j} \le S_{(\ell)}\}
+ \sum_{\ell'\in[n] } \ind\{S_{\ell'} \le S_{(\ell)}\}.
\$
To proceed, we also define an alternative stopping time $\tilde \tau_{j,i}$ via
\$
& \ind\{\tilde \tau_{j,i} \ge k\} = f_k(\{\tilde A^{i,j}_\ell\}_{\ell \ge k}, 
\{\tilde B^{i,j}_\ell\}_{\ell \ge k}),~\forall k\in [n+m], \text{ where}\\
&\tilde A^{i,j}_\ell = \ind\{S_i \le S_{(\ell)}\} + \sum_{\ell' \in [m]\backslash \{j\}} 
\ind\{S_{n+\ell'} \le S_{(\ell)}\} 
\text{ and }\tilde B^{i,j}_\ell = 1 + \ind\{S_{n+j}\le S_{(\ell)}\}
+ \sum_{\ell'\in[n]\backslash\{i\} } \ind\{S_{\ell'} \le S_{(\ell)}\}.
\$
We shall show that $\tau_j = \tilde \tau_{j,i}$ on the event $\{S_i\le S_{(\tau_j)}\}$.
To see this, first note that when $S_i \le S_{(\tau_j)}$, 
for any $k \ge \tau_j$, $S_i \le S_{(k)}$
and $A_k = \tilde A^{i,j}_k$, $B_k = \tilde B^{i,j}_k$. We then have 
that 
\$ 
\ind\{\tilde \tau_{j,i} \ge \tau_j + 1\} & = f_{\tau_j+1}(\{\tilde A^{i,j}_\ell\}_{\ell\ge \tau_j+1}, 
\{\tilde B^{i,j}_\ell\}_{\ell\ge \tau_j+1})\\ 
& = 
f_{\tau_j+1}(\{A_\ell\}_{\ell\ge \tau_j+1}, \{B_\ell\}_{\ell\ge \tau_j+1}) 
= \ind\{\tau_j \ge \tau_j+1\}=0, \\
\ind\{\tilde \tau_{j,i} \ge \tau_j\} & = f_{\tau_j}(\{\tilde A^{i,j}_\ell\}_{\ell\ge \tau_j}, 
\{\tilde B^{i,j}_\ell\}_{\ell\ge \tau_j}) 
= 
f_{\tau_j}(\{A_\ell\}_{\ell\ge \tau_j}, \{B_\ell\}_{\ell\ge \tau_j}) 
= \ind\{\tau_j \ge \tau_j\}=1,
\$
which implies that $\tau_j = \tilde \tau_{j,i}$.

Returning to~\eqref{eq:nullest}, we have 
\$
\eqref{eq:nullest} & \le 
\EE\bigg[\frac{\ind\{S_i \le S_{(\tilde \tau_{j,i})}\}}
{1 \vee (\ind\{S_i \le S_{(\tilde \tau_{j,i})}\}+ \sum_{\ell \in \cH_0\backslash \{j\}} 
\ind\{S_{n+\ell} \le S_{(\tilde \tau_{j,i})}\})}\bigg]\\
& = \frac{1}{|\cH_0|}
\EE\bigg[\frac{\ind\{S_i \le S_{(\tilde \tau_{j,i})}\} + 
\sum_{\ell' \in \cH_0 \backslash\{j\}}\ind\{S_{n+\ell'} \le S_{(\tilde \tau_{j,i})}\}}
{1 \vee (\ind\{S_i \le S_{(\tilde \tau_{j,i})}\}+ \sum_{\ell \in \cH_0\backslash \{j\}} 
\ind\{S_{n+\ell} \le S_{(\tilde \tau_{j,i})}\})}\bigg]\\
& \le 1/|\cH_0|,
\$
where the equality is due to the exchangeability of $\{S_i\}\cup\{S_{n+\ell}\}_{\ell \in \cH_0\backslash \{j\}}$ 
and that $\tilde \tau_{(j,i)}$ is invariant to the permutation of $\{Z_i\} \cup \{Z_{n+\ell}\}_{\ell \in \cH_0\backslash \{j\}} $.
As a result, $\EE[1/\hat \pi_0^{(j)}] \le m/|\cH_0| = 1/\pi_0$, completing the proof.

\begin{remark}
   Recall that the FDR target $\tilde \alpha$ is a component of the e-value construction, as the threshold $T_j$ is determined using $\tilde\alpha$. Denoting this dependence as $e_j(\tilde\alpha)$ (and $T_j(\tilde\alpha)$),
   the above proof implicitly assumes $e_j(\alpha)$ to be constructed with a constant $\tilde\alpha$. However, the Storey-BH procedure actually runs BH at the FDR target $\tilde\alpha / \hat \pi_0$, where $\hat \pi_0$ is the Storey estimator for the true null proportion $\pi_0$. Intuitively, one should also construct the e-values at this less stringent FDR target in order to improve the power of each e-value marginally. 
   The question is, then, whether the proof of Proposition \ref{prop:nullest} implies $\sum_{j\in \cH_0} \EE[e_j(\tilde\alpha / \hat\pi^{(j)}_0) / \hat \pi^{(j)}_0  ] \le m$, where now the dependence is on a data-driven modification to the FDR target.

   The answer turns out to be in the affirmative---Proposition \ref{prop:nullest} continues to hold for $e_j(\alpha / \hat\pi^{(j)}_0) $. This is as the first part of the proof, which culminates in the result $\EE[e_j / \hat\pi^{(j)}_0 ] \le \EE[1/  \hat\pi^{(j)}_0]$, essentially requires that $\hat T_j$ and $\hat \pi_0^{(j)}$ are invariant to permutations of $S_1, \dots, S_{n}, S_{n+j}$. The latter has been shown to be invariant as so. Thus, the data-dependent FDR target $\tilde\alpha / \hat\pi^{(j)}_0$ is also invariant, and one can follow through the construction of $T_j(\cdot)$ to quickly verify $T_j( \tilde\alpha / \hat\pi^{(j)}_0 )  $ thus stays invariant as well. Hence, $\EE[e_j / \hat\pi^{(j)}_0 ] \le \EE[1/  \hat\pi^{(j)}_0]$, and the rest of the proof is independent to the original level chosen to construct the e-values.
\end{remark}

\clearpage
\section{Theoretical improvements to e-BH-CC }
\label{appd:additional}
\subsection{The equivalent form of $e_j^\boost$}
\begin{lemma}\label{lem:equiv_comp}
For any $j\in [m]$, let $q_j = \inf\{c\in [0,1]: j \in \cR \cup \cA(c)\}$. 
If $j \in \cR\cup \cA(q_j)$, then we have  
\$
e_j^\boost = \frac{m\ind\{\phi_j(q_j;S_j)\le 0\}}{\alpha |\cR \cup \{j\}|}. 
\$
\end{lemma}
\begin{proof}
Fix $j\in[m]$. We consider the case of  
$\phi_j(\hat c_j;S_j)\le 0$ and $\phi_j(\hat c_j;S_j) > 0$, respectively.
\red{
\begin{itemize}
\item \textbf{Case 1:} $\phi_j(\hat c_j;S_j) \le 0$.
If suffices to show that 
$\ind\{\phi_j(q_j;S_j)\le 0\} = \ind\{j \in \cR \cup \cA(\hat c_j)\}$.
If $\phi_j(q_j;S_j)\le 0$, then $q_j \le \hat c_j$
by definition. Then by assumption and 
the monotonicity of $\cA(\cdot)$, 
$j \in \cR \cup \cA(q_j) \subseteq \cR \cup \cA(\hat c_j)$.
Conversely, if $j \in \cR \cup A(\hat c_j)$, then by definition of $q_j$,
$\hat c_j \ge q_j$, and therefore
$0 \ge \phi_j(\hat c_j;S_j) \ge \phi_j(q_j;S_j)$. 
Combining the above, there is
$\ind\{\phi_j(q_j;S_j)\le 0\} = \ind\{j\in \cR\cup \cA(\hat c_j) \}$.
\item \textbf{Case 2:} $\phi_j(\hat c_j;S_j)>0$.
First, suppose that $\phi_j(q_j;S_j) \le 0$. Then by the choice of $\hat c_j$, 
there is $q_j < \hat c_j$. 
Since $\lim_{\ell \rightarrow \infty} \hat c_{j,\ell}
= \hat c_j$, there exists $\ell_0$, such that 
$\hat c_{j,\ell_0} \ge q_j$. We then have 
$j \in \cR \cup \cA(q_j) \subseteq \cR \cup \cA(\hat c_{j,\ell_0})$, 
and therefore $\lim_{\ell \rightarrow \infty} \ind\{j \in \cR \cup \cA(\hat c_{j,\ell})\}=1$. 
If instead $\phi(q_j;S_j) > 0$, then $q_j \ge \hat c_j$.
For any $\ell \ge 1$, $\hat c_{j,\ell} < q_j$, and therefore by the definition of $q_j$, 
$j \notin \cR \cup A(\hat c_{j,\ell})$. Consequently, $\lim_{\ell \rightarrow \infty} 
\ind\{j \in \cR \cup \cA(\hat c_{j,\ell})\} = 0$.
In this case, we also conclude that $\ind\{\phi(q_j;S_j)\le 0\}
= \lim_{\ell \rightarrow \infty} \ind\{j \in \cR \cup \cA(\hat c_{j,\ell})\}$.
\end{itemize}
Putting everything together, we conclude that $e_j^\boost = \frac{m\ind\{\phi_j(q_j;S_j)\le 0\}}{\alpha |\cR\cup\{j\}|}$.
}
\end{proof}
\subsection{Sampling distributions}
\label{appd:sampling_dist}
We focus on the case $\cA(c) = \{j \in [m] : p_j \le c\}$. 
For $j\in[m]$, let $k$ denote the block to which $Z_{n+j}$ belongs. 
Define the function
\$
G_j(c) := \frac{m \ind\{e_j \ge \frac{m}{\alpha |\cR \cup \{j\}|} \text{ or } p_j \le c\}}
{\alpha |\cR \cup \{j\}|}
\$
and note that $\phi_j(c;S_j) = \EE[G_j(c) \given S_j]$. 
Since $G_j(c)$ is a function of the $K$-FC e-values and p-values, it 
is fully determined given 
\$
\cV_j := 
\begin{bmatrix}
\red{\Lbag \{V^{(1)}_i\}_{i\in[n]}\Rbag} \cup \{V^{(1)}_{n+\ell}\}_{\ell\in[m]}\\
\vdots \\
\red{\Lbag \{V^{(K)}_i\}_{i\in[n]}\Rbag} \cup \{V^{(K)}_{n+\ell}\}_{\ell\in[m]}
\end{bmatrix},
\$
where we recall that $\red{\Lbag A \Rbag}$ refers to the set $A$ with its ordering removed.

Next, we use $V(\cdot;\cD)$ to denote the scoring function, emphasizing that it is trained 
over $\cD$. Also let $I_j^+ = [n] \cup \{n+j\}$ and $I_j^- = \{n+1,\ldots,n+m\}\backslash\{n+j\}$.
Under $H_j$, there is  
\$ 
\cV_j \given S_j \sim \sum_{i\in [n]\cup \{n+j\}}
\frac{1}{n+1} \cdot \delta_{\mathcal{V}^{(i)}_j},
\$
where  
\$ 
\cV^{(i)}_j = 
\begin{bmatrix}
\Big[\big\{V(Z_\ell; [S_j \backslash \{Z_i\} \cup B_1] \cup B_{-1}^{i,n+j})\}_{\ell \in I_j^+\backslash \{i\}}\Big]
\cup 
\big\{V(Z_{\ell}; [S_j \backslash \{Z_i\} \cup B_1]\cup B_{-1}^{i,n+j})\}_{\ell \in I_j^- \cup \{i\}}\\
\vdots\\
\Big[\big\{V(Z_\ell; [S_j \cup B_k], B_{-k})\big\}_{\ell \in I_j^+ \backslash \{i\}}\Big]
\cup 
\big\{V(Z_{\ell}; [S_j \cup B_k] \cup B_{-k})\big\}_{\ell \in I_j^- \cup \{i\}}\\
\vdots \\
\Big[\big\{V(Z_\ell; [S_j \backslash \{Z_i\} \cup B_K]\cup B_{-k}^{i,n+j} )\big\}_{\ell \in I_j^- \cup \{i\}}\Big]
\cup 
\{V(Z_{\ell}; [S_j \backslash \{Z_i\} \cup B_k] \cup B_{-k}^{i,n+j})\}_{\ell \in I_j^- \cup \{i\}}
\end{bmatrix}.
\$
Above, we adopt the notation $B_{-k}^{i,n+j} = B_{-k}\backslash \{Z_{n+j}\} \cup \{Z_i\}$.

\subsection{A provably valid computational shortcut}
\label{appd:comp_shortcut}
\red{
For the choice of $\cA(c) = \{j\in[m]: p_j \le c\}$,
we provide a computational shortcut inspired by the techniques used in~\citet{luo2022improving}.
Recall that for each $j$, the computational task amounts to evaluating
\$ 
\phi_j(q_j;S_j) = \EE_{H_j}[f_j \given S_j], \text{ where }
f_j := \frac{m \ind\{e_j' \ge \frac{m}{\alpha |\cR(\be') \cup \{j\}|} 
\text{ or } p_j' \le q_j \}}
{\alpha |\cR(\be')\cup \{j\}|} - e_j',
\$
where we write the e-BH rejection set as $\cR(\be')$ to emphasize its dependence on the e-values 
$\be' := (e_1',\ldots,e_m')$.
Note that the expectation above is taken over the distribution of 
$\bZ' := (Z_1',\ldots,Z_{n+m}') \given S_j$ under $H_j$, where the $K$-FC e-values $\be'$
and $\bp'$ functions of $\bZ'$, while $q_j$ is a function of the original data $\bZ$---it is to be distinguished from $\bZ'$---and is 
fixed when taking the expectation.}

\red{
Next, we write $f_j$ as the difference of two terms: $f_j = f_j^{(1)} - f_j^{(2)}$, where 
\$ 
f_j^{(1)} := \frac{m \ind\{p_j' \le q_j\}}
{\alpha |\cR(\be')\cup \{j\}|}, \text{ and }
f_j^{(2)} := e_j' - \frac{m \ind\{p_j' > q_j, e_j' \ge \frac{m}{\alpha |\cR(\be') \cup \{j\}|}\}}
{\alpha |\cR(\be')\cup \{j\}|}.
\$
It is straightforward to see that $f_j^{(1)} \ge 0$; for the second term, we can 
check that 
\$ 
f_j^{(2)} \ge e_j' - \frac{m \ind\{e_j' \ge \frac{m}{\alpha |\cR(\be') \cup \{j\}|}\}}
{\alpha |\cR(\be')\cup \{j\}|}
\ge 0,
\$
where the last inequality uses $\ind\{e_j' \ge \frac{m}{\alpha |\cR(\be') \cup \{j\}|}\} \le 
\frac{e_j'\alpha |\cR(\be')|}{m}$.}

\red{
Both $f_j^{(1)}$ and $f_j^{(2)}$ depend on the original data $\bZ$
and the newly generated data $\bZ'$, but since the expectation is taken
over the distribution of $\bZ'$, we can drop the dependence on $\bZ$ and treat $f_j^{(1)}$ and $f_j^{(2)}$ 
as functions of $\bZ'$ only. 
Let $\Omega_j^{(1)}:=\{\bZ': f_j^{(1)}(\bZ')>0\}$ and $\Omega_j^{(2)}:= \{\bZ': f_j^{(2)}(\bZ')>0\}$ 
denote the support of $f^{(1)}_j$ and $f^{(2)}_j$, respectively. 
Our first observation is that we can simplify $\Omega_j^{(1)}$ as 
\begin{equation}
\begin{split}    
\label{eq:fj1_support} 
p_j' \le q_j & \iff 
\frac{1 + \sum^n_{i=1} \ind\{V^{(k)}_{n+j}(\bZ') \ge V^{(k)}_i(\bZ')\}}{n+1} \le q_j \\ 
& \iff  
V_{n+j}^{(k)}(\bZ') \text{ is among the } \lfloor q_j \cdot (n+1)\rfloor 
\text{-th largest scores in }\\ 
& \qquad \qquad \qquad \qquad \big\{V^{(k)}_1(\bZ'),\ldots,V^{(k)}_n(\bZ'), V^{(k)}_{n+j}(\bZ')\big\},
\end{split}
\end{equation}
where $k$ is the block to which $Z_{n+j}'$ belongs.
Note that we only need to compute $f_j^{(1)}$ for the samples 
satisfying the condition in~\eqref{eq:fj1_support}, 
since the integrand is zero otherwise.
Moreover, given $S_j$, the score function $V^{(k)}(\cdot)$ is fixed, 
so verifying~\eqref{eq:fj1_support} only requires sorting the scores,
without retraining the score function.
}

\red{
The computation of $f_j^{(2)}$ is more involved, as its support cannot be simplified to a
condition as in~\eqref{eq:fj1_support}. However, we can approximate $\Omega_j^{(2)}$ with 
a computationally simpler set $\tilde \Omega_j^{(2)}$ without losing the validity of the e-values.
This is achieved by the following lemma.
\begin{lemma}\label{lem:shortcut}
For any set $\tilde{\Omega}$ which is a subset of the space of $\bZ'$, we have 
\$ 
\EE[f_j(\bZ') \given S_j] \le \EE\big[\ind\{\bZ'\in \Omega^{(1)} \cup \tilde \Omega\} \cdot f_j(\bZ') \given S_j\big].
\$
\end{lemma}
\begin{proof}
By the decomposition, we have 
\$
& \EE[f_j(\bZ') \given S_j] \\ 
= ~& \EE\big[f_j^{(1)}(\bZ') \given S_j\big] - \EE\big[f_j^{(2)}(\bZ') \given S_j\big]\\
= ~&\EE\big[\ind\{\bZ'\in \Omega_j^{(1)}\} \cdot f_j^{(1)}(\bZ') \given S_j\big] 
- \EE\big[\ind\{\bZ'\in \Omega_j^{(2)}\} \cdot f_j^{(2)}(\bZ') \given S_j\big]\\
= ~&\EE\big[\ind\{\bZ'\in \Omega_j^{(1)}\} \cdot f_j^{(1)}(\bZ') \given S_j\big] 
- \EE\big[\ind\{\bZ'\in \Omega_j^{(2)}\} \cdot f_j^{(2)}(\bZ') \given S_j\big]\\ 
& \qquad \qquad \qquad \qquad \qquad \qquad \qquad  
- \underbrace{\EE\big[\ind\{\bZ'\in \Omega_j^{(1)} \cup \tilde{\Omega} \backslash \Omega_j^{(2)}\} \cdot f_j^{(2)} (\bZ')\given S_j\big]}_{=0}\\
\le~& \EE\big[\ind\{\bZ'\in \Omega_j^{(1)}\} \cdot f_j^{(1)}(\bZ') \given S_j\big] 
- \EE\big[\ind\{\bZ'\in \Omega_j^{(2)} \cap (\tilde{\Omega} \cup \Omega_j^{(1)})\} \cdot f_j^{(2)}(\bZ') \given S_j\big]\\ 
& \qquad \qquad \qquad \qquad \qquad \qquad \qquad - 
\EE\big[\ind\{\bZ'\in \Omega_j^{(1)} \cup \tilde{\Omega} \backslash \Omega_j^{(2)}\} \cdot f_j^{(2)}(\bZ') \given S_j\big]\\
= ~ &\EE\big[\ind\{\bZ'\in \Omega_j^{(1)} \cup \tilde \Omega \} \cdot f_j^{(1)}(\bZ') \given S_j\big] 
- \EE\big[\ind\{\bZ'\in \Omega_j^{(1)} \cup \tilde{\Omega}\} \cdot f_j^{(2)}(\bZ') \given S_j\big]\\
=~& \EE\big[\ind\{\bZ'\in \Omega_j^{(1)} \cup \tilde{\Omega}\} \cdot f_j(\bZ') \given S_j\big],
\$
where the inequality uses $f_j^{(2)} \ge 0$.
\end{proof}
Using the above lemma, we can define  
$
\tilde {\phi}_j(c,S_j) = \EE\big[\ind\{\bZ' \in \tilde \Omega \cup \Omega_j^{(1)}\}\cdot f_j(\bZ') \given S_j\big]$.
Since $\tilde \phi_j(c;S_j)$ is an upper bound of $\phi_j(c;S_j)$, for any $j\in[m]$, 
\$ 
\tilde e_j^\boost = \frac{m\ind\{\tilde \phi_j(q_j;S_j)\le 0\}}{\alpha |\cR \cup \{j\}|}, 
\$ 
is a valid e-value. As a result, we can replace $\Omega_j^{(2)}$ with a computationally 
simplified set $\tilde \Omega_j^{(2)}$ to obtain a (slightly) more conservative e-value. 
In our implementation, we choose 
$ \tilde \Omega_j^{(2)}  := \big\{\bZ': V^{(k)}_{n+j}(\bZ') \ge T_j\big\}$,  
where $T_j$ is the original threshold for the $j$-th test depending on $\bZ$. 
To summarize, we only compute the integrand $f_j(\bZ')$ for the samples
satisfying
\@
\label{eq:est_support}
\bZ' \in \Omega_j^{(1)} \cup \tilde \Omega_j^{(2)} = & \{\bZ': V_{n+j}^{(k)}(\bZ') \text{ is among the } \lfloor q_j \cdot (n+1)\rfloor 
\text{-th largest scores in }\\
& \qquad \qquad \qquad  \big\{V^{(k)}_1(\bZ'),\ldots,V^{(k)}_n(\bZ'), V^{(k)}_{n+j}(\bZ')\big\} \text{ or } V^{(k)}_{n+j}(\bZ') \ge T_j\}.
\@ 
}

\clearpage

\section{Null proportion correction for higher power}
\label{sec:null_prop}
As an artifact of e-BH, the proposed $K$-FC ND procedure controls 
the FDR by $\pi_0 \alpha$, where we recall that $\pi_0$ is the fraction of 
inliers in $\Dtest$. Since our FDR target is $\alpha$, controlling the FDR 
at a lower level means we are ``wasting'' some of our FDR budget, 
especially when $\pi_0$ is small.

To fully unleash the power of $K$-FD ND, we introduce an estimator 
for $\pi_0$ based on~\citet{gao2023adaptive} and use it to modify the e-BH-CC procedure, increasing its 
FDR control level to $\alpha$.
To construct such an estimator, we work with a potentially different score function
$\cS(\cdot)$ trained over $\Dref \cup \Dtest$, where the training 
procedure is invariant to the ordering of input samples.
We then assign $S_i = \cS(Z_i)$ for any $i\in [n+m]$.
Let $S_{(1)} \le \cdots \le S_{(n+m)}$ be the order statistics of 
$\{S_i\}_{i\in [n+m]}$ and define for any $t \in \RR$ that
\@
\begin{split}
\label{eq:pi0}
& \hat \pi^{(j)}_0(t) = \frac{n+1}{m}\frac{1 + \sum_{\ell \in [m]\backslash \{j\}}
\ind\{S_{n+\ell}\le t\}}
{\ind\{S_{n+j} \le t\} +  \sum_{i\in[n]} \ind\{S_i \le t\}}, \quad \forall 
j\in[m].
\end{split}
\@
The null proportion estimator is constructed \textit{per hypothesis} $j$:
$
\hat \pi^{(j)}_0 = 1 \wedge \hat \pi^{(j)}_0(S_{(\tau_j)}),
$
where $\tau_j$ is a stopping time of the backwards filtration 
$\{\cG_k\}_{k\in[n+m]}$, i.e., $\{\tau_j \ge k\} \in \cG_k$, where
\$
\cG_k \coloneqq \sigma\bigg( 
\sum_{\ell \in [m]\backslash \{j\}}  
\indc{ S_{n+j} \le S_{(k')}}, 
\ind\{S_{n+j}\le S_{(k')}\} + 
\sum_{i=1}^n  \indc{ S_{i} \le S_{(k')}}, \forall k'\ge k \bigg),
~\forall k \in [n+m].
\$
For example, following the principle considered in~\citet{gao2023adaptive}, the stopping time 
can be taken as
\$
\tau_j  = \sup\big\{k \in [n+m]: \hat \pi_0^{(j)}(S_{(k)}) \ge \hat \pi^{(j)}_0(S_{(k+1)})\big\}.
\$

As pointed out by~\citet{wang2022false}, a sufficient condition 
for e-BH to control the FDR is that $\sum_{j \in \cH_0} \EE[e_j] \le m$, 
and a set of e-values satisfying this condition are called \textit{compound} e-values by~\citet{ignatiadis2024compound}. 
When $e_1,\ldots,e_m$ are all valid, strict e-values, we can see that 
$\sum_{j \in \cH_0} \EE[e_j] \le \pi_0 \alpha$, allowing for further tightening of FDR control. The following proposition, however, shows that 
$\{e_j/\hat \pi^{(j)}_0\}_{j\in[m]}$ are compound e-values.

\begin{proposition}
\label{prop:nullest}
Consider the $K$-FC e-values defined in~\eqref{eq:kfc_eval} and the 
null proportion estimators defined in~\eqref{eq:pi0}. 
Then  
$\sum_{j\in \cH_0} \EE[e_j /\hat \pi^{(j)}_0] \le m$.
\end{proposition}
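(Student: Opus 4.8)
The plan is, for each inlier $j\in\cH_0$, to isolate a $\sigma$-algebra $\cE_j$ with respect to which the estimator $\hat\pi_0^{(j)}$ is \emph{measurable} (so it acts as a constant) while the conformal e-value $e_j$ still has conditional mean at most one; this reduces the claim to the single-hypothesis inequality $\EE[1/\hat\pi_0^{(j)}]\le m/|\cH_0|$, which I would then obtain from a backward-martingale / optional-stopping argument in the spirit of \citet{gao2023adaptive}.

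Concretely, fix $j\in\cH_0$ and let $\cE_j$ be generated by the unordered multiset $[\Dref\cup\{Z_{n+j}\}]$ together with $\Dtest\setminus\{Z_{n+j}\}$ (the sufficient statistic of Section~\ref{sec:ebhcc}). Under $H_j$, $Z_1,\dots,Z_n,Z_{n+j}$ are i.i.d.\ $\sim P$ and independent of $\{Z_{n+\ell}\}_{\ell\neq j}$, so conditionally on $\cE_j$ the unit in slot $n+j$ is uniform over the $n+1$ elements of $[\Dref\cup\{Z_{n+j}\}]$, the remaining $n$ being the reference units in uniform order. First I would check that $\hat\pi_0^{(j)}$ is $\cE_j$-measurable: if $Z_{n+j}\in B_k$ then $\Dref\cup B_k=[\Dref\cup\{Z_{n+j}\}]\uplus(B_k\setminus\{Z_{n+j}\})$ and $\Dref\cup\Dtest=[\Dref\cup\{Z_{n+j}\}]\uplus(\Dtest\setminus\{Z_{n+j}\})$ as multisets are $\cE_j$-measurable, so by order-invariance both $V^{(k)}(\cdot)$ and $\cS(\cdot)$ are $\cE_j$-measurable; hence the score multiset $\{S_i\}_{i\in[n+m]}$, its order statistics $S_{(k)}$, the counts $N_j(t)=\ind\{S_{n+j}\le t\}+\sum_{i\in[n]}\ind\{S_i\le t\}$ and $M_j(t)=\sum_{\ell\neq j}\ind\{S_{n+\ell}\le t\}$, the filtration $\{\cG_k\}$, the stopping time $\tau_j$, and finally $\hat\pi_0^{(j)}=1\wedge\hat\pi_0^{(j)}(S_{(\tau_j)})\in(0,1]$ are all $\cE_j$-measurable. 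Since, given $\cE_j$, $(V^{(k)}_1,\dots,V^{(k)}_n,V^{(k)}_{n+j})$ is a uniform relabelling of a fixed multiset, the computation in the proof of Theorem~\ref{thm:eval_valid} applies verbatim conditionally and gives $\EE[e_j\mid\cE_j]\le1$; therefore, as $\hat\pi_0^{(j)}$ is a positive $\cE_j$-measurable constant,
\[
\EE\!\left[\frac{e_j}{\hat\pi_0^{(j)}}\right]=\EE\!\left[\frac{\EE[e_j\mid\cE_j]}{\hat\pi_0^{(j)}}\right]\le\EE\!\left[\frac{1}{\hat\pi_0^{(j)}}\right],
\]
and summing over $j\in\cH_0$ it suffices to prove $\EE[1/\hat\pi_0^{(j)}]\le m/|\cH_0|$ for each $j$ (the case $\cH_0=\emptyset$ being trivial).

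For that bound I would re-condition---this time on a coarser $\sigma$-algebra that does \emph{not} fix $\hat\pi_0^{(j)}$---on the unordered multiset of the $n+|\cH_0|$ inlier-distributed units (the $n$ reference units and $\{Z_{n+\ell}\}_{\ell\in\cH_0}$), on the outlier units $\{Z_{n+\ell}\}_{\ell\in\cH_1}$ with their labels, and on $\cH_0$, but not on the assignment of inlier units to inlier slots. Given this, $\cS(\cdot)$, the $S_{(k)}$'s, the outlier count $O(t):=\sum_{\ell\in\cH_1}\ind\{S_{n+\ell}\le t\}$, and the total inlier count $C(t):=N_j(t)+\sum_{\ell\in\cH_0\setminus\{j\}}\ind\{S_{n+\ell}\le t\}$ are deterministic and $M_j(t)=O(t)+C(t)-N_j(t)$; the inlier units being placed uniformly at random makes $N_j(S_{(k)})$ hypergeometric, and a direct sampling-without-replacement calculation shows $W_k:=N_j(S_{(k)})/(1+C(S_{(k)})-N_j(S_{(k)}))$ is a backward martingale for $\{\cG_k\}$ with $W_{n+m}=(n+1)/|\cH_0|$. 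Since $O(\cdot)\ge0$, $1/\hat\pi_0^{(j)}(S_{(\tau_j)})=\tfrac{m}{n+1}\cdot\tfrac{N_j(S_{(\tau_j)})}{1+M_j(S_{(\tau_j)})}\le\tfrac{m}{n+1}W_{\tau_j}$, hence $1/\hat\pi_0^{(j)}\le1\vee(\tfrac{m}{n+1}W_{\tau_j})$; optional stopping applied to the nonnegative backward martingale $\tfrac{m}{n+1}W_k$ (which equals $m/|\cH_0|\ge1$ at $k=n+m$), together with the device from \citet{gao2023adaptive} for absorbing the $1\wedge$ truncation, yields $\EE[1/\hat\pi_0^{(j)}]\le m/|\cH_0|$, completing the proof.

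The step I expect to be hardest is this last one, and specifically the handling of the $1\wedge$ in $\hat\pi_0^{(j)}$: the crude bound $1\vee y\le1+y$ loses a full extra unit per hypothesis, so one must instead exploit the precise form of the stopping time $\tau_j$ and of the estimator to discard the truncation without cost, following \citet{gao2023adaptive}. A smaller subtlety is that the \emph{same} $\sigma$-algebra $\cE_j$ has to make $\hat\pi_0^{(j)}$ measurable while leaving $\EE[e_j\mid\cE_j]\le1$; this works precisely because $Z_{n+j}$ belongs to the training block $B_k$, so that $\Dref\cup B_k$ is $\cE_j$-measurable.
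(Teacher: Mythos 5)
Your proposal is correct in its overall architecture and its first half is essentially the paper's argument in different clothing: conditioning on $\cE_j=[\Dref\cup\{Z_{n+j}\}]\cup(\Dtest\setminus\{Z_{n+j}\})$, noting that $\hat T_k$ and $\hat\pi_0^{(j)}$ are $\cE_j$-measurable, and running the Theorem~\ref{thm:eval_valid} computation conditionally is the same symmetrization the paper performs by averaging over permutations of $\Dref\cup\{Z_{n+j}\}$, and both land on $\EE[e_j/\hat\pi_0^{(j)}]\le\EE[1/\hat\pi_0^{(j)}]$. The second half is where you genuinely diverge. The paper stays entirely combinatorial: after dropping the outlier indicators from the denominator (your $O(\cdot)\ge 0$ step, identically), it introduces, for each $i\in[n]\cup\{n+j\}$, a swapped stopping time $\tilde\tau_{j,i}$ that exchanges the roles of $S_i$ and $S_{n+j}$ in the filtration counts, proves $\tau_j=\tilde\tau_{j,i}$ on $\{S_i\le S_{(\tau_j)}\}$, and then uses exchangeability of $\{S_i\}\cup\{S_{n+\ell}\}_{\ell\in\cH_0\setminus\{j\}}$ together with the permutation-invariance of $\tilde\tau_{j,i}$ to extract the factor $1/|\cH_0|$ term by term; no martingale or optional-stopping theorem is invoked. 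You instead recondition on the inlier multiset and verify that $W_k=N_j(S_{(k)})/(1+C(S_{(k)})-N_j(S_{(k)}))$ is the Storey--Taylor--Siegmund backward hypergeometric martingale with $W_{n+m}=(n+1)/|\cH_0|$, then apply optional stopping at $\tau_j$ (valid since $\tau_j$ is a stopping time for a coarser filtration). Your martingale computation checks out and yields the same bound $\EE[1/\hat\pi_0^{(j)}(S_{(\tau_j)})]\le m/|\cH_0|$; what it buys is a transparent link to the classical Storey machinery, while the paper's swap construction buys a self-contained finite-sample argument. One caveat: the step you flag as hardest---absorbing the $1\wedge$ truncation in $\hat\pi_0^{(j)}=1\wedge\hat\pi_0^{(j)}(S_{(\tau_j)})$, which makes $1/\hat\pi_0^{(j)}$ \emph{larger}---is not actually resolved by the paper either; its proof silently equates $\EE[1/\hat\pi_0^{(j)}]$ with the untruncated expectation. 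So you have not missed a trick that the paper supplies; you have simply been more candid that this point needs the precise device of \citet{gao2023adaptive} rather than a crude $1\vee y\le 1+y$ bound.
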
 
The proof of Proposition~\ref{prop:nullest} is deferred to 
Appendix~\ref{appd:proof_nullest}. As a consequence of Proposition~\ref{prop:nullest}, 
applying e-BH to $\{e_j/\hat \pi_0^{(j)}\}$ at level $\alpha$ controls FDR at level $\alpha$. 
Moreover, following almost the same proof steps of Proposition~\ref{prop:boost_eval}, applying the boosting steps to $\{e_j/\hat \pi_0^{(j)}\}_{j\in[m]}$ discussed in
Section~\ref{sec:ebhcc} controls FDR at level $\alpha$.


\section{\red{Theoretical examples of e-value-based improvements}}
\label{appd:theory_examples}
\subsection{\red{Exhibit A: e-BH-CC improves power over BH}}
\label{appd:ebhcc_power_over_bh}

We give an example of a realistic low-signal setting where e-BH-CC achieves a number of rejections while BH remains powerless. 

\paragraph{\red{The setting.}} Let $n = 30, m=10$, and the FDR target $\alpha = \frac{10}{31}$. Each unit is a two-dimensional vector $Z = (X,Y)$. The inliers are sampled i.i.d.~from $P  = \frac{9}{10}\delta_{(0,0)} + \frac {1}{10} \delta_{(2,0)}$, and we specify three outliers in $\Dtest$:
\begin{align}
    \begin{split}
        \Dref &= \{Z_i \sim P\colon i\in[30]\} \\
        \Dtest &= \{Z_{n+j} = (2, 1)\colon j=1,2,3\} \cup \{Z_{n+j} \sim P \colon j = 4, \dots, 10\}.
    \end{split}
\end{align}

For ease of representation, we fix the non-conformity score used in conformal p-value and e-value: $V(Z) = V(X,Y) = X$. The auxiliary score, used for boosting via CC, will similarly be fixed as $W(Z) = W(X,Y) = Y$. That is, the main score used to assign nonconformity scores is the projection to the first coordinate, while the auxiliary score contains new information not used pre-boosting which is found in the second coordinate.

\paragraph{\red{An early-stopping conformal e-value.}} 

We give a variant of the 1-FC conformal e-value which prevents $T$ from being $+\infty$ in the ``hopeless case'' (i.e., no rejections made), as in \citet{ren2024derandomised}. Note that we use $V(\cdot)$ as the score, with the notation $V_{i} = V(Z_i) = X_i$ for $i\in[n+m]$.

\begin{align}\label{eq:kfc_eval_early}
    \begin{split} 
        e_j &= (n+1)\cdot \frac{ \ind\{V _{n+j} \ge T^{\text{early}}\}}{1+\sum_{i\in[n]}\ind\{V_i \ge T^{\text{early}}\}},\\
        \text{where }T^{\text{early}}  &= \inf\left\{ t\in \cV   \colon \frac{m}{ n+1} \cdot \frac{1 + \sum_{i=1}^{ n} \indc{V _i \ge t} }{1\vee \sum_{j=1}^m \indc{V _{n+j} \ge t}   } \le  \tilde \alpha 
        \text{ OR } 
        \sum_{j=1}^m \ind\{V_{n+j} \ge t\} < \frac{1}{\tilde \alpha}
        \right\} 
    \end{split}
\end{align}
The hopeless case is important, as this is where the main benefit of boosting with CC actualizes. With the standard stopping time $T$, e-BH-CC is unable to make any improvement upon e-BH and BH. For our example, we will take $\tilde \alpha = \alpha = \frac{10}{31}$.

\subsubsection{\red{e-BH-CC improves over BH with positive probability}}
\label{appd:cc_better_than_bh}
With a positive probability (where the randomness stems from the draw of inliers populating $\Dref$ and $\Dtest$), e-BH-CC is able to make rejections in a setting where BH is  powerless. 
We show this phenomenon by splitting the possible outcomes into cases based on $L_1 = |\{ i\in[30]\colon Z_i = (2,0) \}|$ and $L_2 = |\{ j\in[10]\colon Z_{n+j} = (2,0) \}|$. $L_1$ and $L_2$ represent the random number of inliers that are sampled from the second component of the mixture $P$ in $\Dref$ and $\Dtest$, respectively.

\paragraph{\red{The initial rejections by BH and e-BH.}}

To begin, first note that the BH rejection mechanism induces a relationship between $L_1$ and $L_2$. Since the p-values are initially based on $V$, the p-values for any test unit with $X_{n+j} = 2$ are $\frac{1 + L_1}{31}$, while the other units with $X_{n+j} = 0$ have p-value 1. There are $L_2 +3$ p-values of value $\frac{1 + L_1}{31}$, which means via the BH threshold that these p-values are all simultaneously rejected when
\begin{equation}
    \frac{1 + L_1}{31} \le \frac{(10/31) \cdot (L_2 + 3)}{10} = \frac{L_2 + 3}{31 } \iff L_1 \le L_2 + 2,
\end{equation}
which means that BH is powerless when $L_1 \ge L_2 + 3$. 

We can also check e-BH with $T^{\text{early}}$ can be characterized with the exact same inequality. Note that  $T^{\text{early}}\in\{0,2, +\infty\}$. However, it can never take the value of 0, as the ratio inside its definition is 1 and $\sum _{j=1}^{10} \ind\{ V_{n+j} \ge 0 \} = 10 \ge 1/\tilde\alpha = 3.1$. $T^{\text{early}} = 2$ in two ways: either $3 + L_2 < 3.1$ (activating the hopeless case) or $L_1 \le L_2 + 2$ (making the ratio less than $\frac{10}{31}$). Otherwise, $T^{\text{early}} = +\infty$, meaning even our hopeless case adjustment was not enough to prevent an infinite stopping time. The only relevant case for making a nonzero number of rejections is $T^{\text{early}} = 2$, where the e-values are 
$$
e_j = \frac{31}{1 + L_1} \ind\{X_{n+j} = 2\}.
$$
A quick check of the e-BH threshold at $\alpha = 10/31$ shows that we need $L_1 \le L_2 + 2$ to pass the threshold, meaning that $\cR^{\ebh}  = \cR^{\bh}$ w.p. 1 in our setting. Again, we reiterate that in this setting, irrespective of the early stopping time $T^{\mathrm{early}}$, \emph{BH and e-BH are still equivalent}. The key difference is the threshold $T^{\text{early}}$ having a non-infinite value when $3 + L_2 < 3.1$, i.e., exactly when $L_2 =0$.

\paragraph{\red{The auxiliary statistic and CC mechanism.}} Recall the main component of conditional calibration is the conditional expectation function
\begin{equation}
    \label{eq:cond_exp_appendix}
    \phi_j(c;S_j) := \EE_{H_j}\bigg[\frac{m \cdot \ind\{e_j \ge \frac{m}{\alpha |\cR\cup \{j\}|} 
    \text{ or } j \in \cA(c)\}}
    {\alpha |\cR\cup \{j\}|} - e_j
    \Biggiven S_j \bigg], 
\end{equation}
as written in \eqref{eq:cond_exp}. We define the auxiliary statistic in the form of a p-value as below
\begin{equation}
    q_j = \frac{1 + \sum_{i=1}^n \ind\{ W_i \ge W_{n+j} \}}{n+1}
\end{equation}
and define the auxiliary set as $\cA(c) = \{j \colon q_j\le c\}$. The boosting decision depends on the critical value $\hat c_j \coloneqq \sup\{c \in [0,1] \colon \phi_j(c;S_j) \le  0\}$. Note that the conditional distribution of the e-values given $S_j$ is supported on $n+1$ units, depending on which unit is ``switched'' with $Z_{n+j}$---this holds for each $j$ (see Appendix~\ref{appd:sampling_dist}). We will explicitly write the conditional expectation as a sum on a case-by-case basis to find $\hat c_j$ and show that the resulting boosting does have an effect.

\paragraph{\red{Case I: $L_2 = 0, L_1 \ge 3$.}} As detailed earlier, this is the case where BH and e-BH are both powerless. That is, the realized $\cR^\ebh = \cR^\bh = \{\}$. 

Fix $j$ such that it corresponds to a $(2,1)$ unit (i.e., $Z_{n+1}, Z_{n+2}, Z_{n+3}$). 
By definition of $L_1$ and $L_2$, we know that the $\cE_j$ in our conditioning statistic $S_j = (\cE_j, \{Z_{n+k}\}_{k\neq j})$ can be written as 
\begin{equation}
    \cE_j =  [ \underbrace{\{(2, 0), \dots, (2,0)\}}_{L_1 \text{ copies}} \cup  \underbrace{\{(0, 0), \dots, (0,0)\}}_{30-L_1 \text{ copies}} \cup \{(2,1)\} ].
\end{equation}
Furthermore, $L_1$ and $L_2$ are deterministic conditional on $S_j$. 
Now we do casework on which unit is ``switched'' into the $Z_{n+j}$ case.
\begin{itemize}
    \item $\tilde Z_{n+j} = (2,0)$. Since the initial e-value depends only on the first component of the unit, $\tilde e_j = e_j = \frac{31}{1  + L_1}$. Since the number of units in $\cE_j$ with first component equal to 2 is \textit{unchanged} under resampling, the other e-values are also unchanged. Hence, $\tilde {\cR}^\ebh =  {\cR}^\ebh = \{\}$ and $\tilde e_j \not \geq \frac{m}{\alpha |\tilde {\cR}^\ebh \cup\{j\}|} = \frac{m}{\alpha} = 31$ (since $L_1 > 0$). Lastly, we can check that under this resampling case, $q_j = 1$. This case happens with probability $\frac{L_1}{31}$.

    \item $\tilde Z_{n+j} = (0,0)$. In this case, $\tilde e_j = 0$. From the exact reasoning as the previous subcase, $\tilde {\cR}^\ebh = \{\}$. $q_j  = 1$ in this case. This case happens with probability $\frac{30 - L_1}{31}$

    \item $\tilde Z_{n+j} = (2,1)$. This is the case where the resampled and the original dataset are identical. $\tilde e_j = \frac{31}{1 + L_1}$, while $\tilde {\cR}^\ebh = \{\}$ (again) and $q_j = \frac{1}{n+1} = \frac{1}{31}$. This case happens with probability $\frac{1}{31}$.
\end{itemize}
Hence, we can write
\begin{align}
    \begin{split}
        \phi(c; S_j) &= \frac{1}{31}\bigg( 31\cdot \ind\bigg\{c \ge \frac{1}{31}\bigg\}  - \frac{31}{1 + L_1}\bigg)\\ &\quad  + \frac{L_1}{31}\bigg( 31\cdot \ind\{ c \ge 1 \} - \frac{31}{1 + L_1}\bigg) + \frac{30 - L_1}{31} ( 31\cdot \ind\{ c \ge 1 \} - 0)\\
        &= \ind\bigg\{ c\ge \frac{1}{31} \bigg \}  +30 \cdot \ind\{c\ge 1\} - 1
    \end{split}
\end{align}  
which is positive when $c =1$ and nonpositive anywhere below. Hence, $\hat c_j = 1$, but since $\phi_j(1;S_j) >0$, we use the second definition of \eqref{eq:boosted_eval}. Choosing $\hat c_{j, \ell} = \frac{1}{31} + (1 - \frac{1}{\ell})\frac{30}{31}$, we see that $j \in \cA(\hat c_{j, \ell})$ for all $\ell \ge 1$ by virtue of $q_j = \frac{1}{31}$. Thus, we conclude that $e_j^\boost = \frac{m}{\alpha} = 31$ when $Z_{n+j} = (2,1)$ and $0$ otherwise. The result is that we have 3 e-values which boosted from $\frac{31}{1 +L_1}$ to $31$, meaning that we pass the BH threhsold at $\frac{m}{3\alpha} = 31/3$, giving us $\cR^\ebh(e_1^\boost, \dots, e_{10}^\boost) = \{1,2,3\} \supsetneq \cR^\ebh = \cR^\bh = \{\}$ and achieving power 1.

Under the model $P$, this event happens with probability 
$$\PP(L = 0 )(1 - \PP(K \le 2))  = 0.9^{7} \cdot \bigg(1 - \bigg( 0.9^{30} + \binom{30}{1}\cdot 0.1\cdot  0.9^{29} + \binom{30}{2} \cdot 0.1^2 \cdot 0.9^{28} \bigg ) \bigg) $$
as $L_1\sim \Bin(30, \frac{1}{10} )$ and $L_2 \sim \Bin(7, \frac{1}{10} )$.
This comes out to approximately $0.282$.

\paragraph{\red{Case II: $L_2 = 0, L_1 \le 2$.}} 

In this case, BH and e-BH both have power 1, giving the rejection set $\{1,2,3\}$. Every other unit in $\Dtest$ has both coordinates 0, and a quick calculation shows that these units can never be boosted. However, by our definition of e-BH-CC, its boosted rejection set can never be less powerful than that of e-BH. Thus, $\cR^\ebh(e_1^\boost, \dots, e_{10}^\boost)=\cR^\ebh = \cR^\bh= \{1,2,3\} $ in this case.

\paragraph{\red{Case III: $L_2 \ge 1$.}}

In this case, the boosting mechanism can never improve upon e-BH. We previously observed that $T^{\text{early}} = 2$ only when $3 + L_2 < 3.1$. Now consider possible resamples of the units under $S_j$ when $Z_{n+j} = (2,1)$. If it is resampled with any unit having 2 as its first component ($\frac{L_1 + 1}{31}$ probability), then the resampled $T^{\text{early}}$ will be $+\infty$, since 4 test units will have first component equal to 2. Otherwise, it is resampled as a $(0,0)$ unit, meaning that $\tilde e_j = 0$ always. Hence, for any resample, $\tilde e_j = 0$, meaning that $\phi(c;S_j)$ is always nonnegative---this induces the impossibility of boosting, and e-BH-CC makes no strict improvement.

We conclude that $\PP(\text{e-BH-CC improves over BH})\approx 0.282$ in this stylized example.

\subsection{\red{Exhibit B: model selection leads to FDR inflation via BH}}
\label{appd:model_selection_fdr_inflation}

In Section \ref{sec:mfc}, we show how our e-value-based methodology allows for choosing between multiple model candidates in a data-driven manner without sacrificing FDR control. Here, we will show that attempting to do the same using conformal p-values and BH will incur selection bias and lead to a violation of FDR control. 

\paragraph{\red{The setting.}} Let $n = 3, m=2$, and the FDR target $\alpha = \frac 12$.  We will be in the global null setting: $Z_1, \dots, Z_5 \sim \cN(0,1)$ (though the choice of the Gaussian as the inlier distribution is WLOG; the argument works for any arbitrary choice absolutely continuous distribution $P$). Hence,
\begin{align}
    \begin{split}
        \Dref &= \{Z_1, Z_2, Z_3\} \\
        \Dtest &= \{Z_4, Z_5\}.
    \end{split}
\end{align}

\paragraph{\red{Model selection: the candidate models.}} 
We consider two candidate learners, i.e., $L = 2$. For each $j\in\{1,2\}$ and $\ell =\in \{ 1,2\}$, we train the nonconformity scores
\begin{equation}
    V^{(j, \ell)}(\cdot) = f^{(\ell)}(\red{\Lbag \Dref \cup \{Z_{n+j}\}\Rbag};\{Z_{n+(3-j)}\})
\end{equation}
where we explicitly write that $f^{(\ell)}$ only depends on $\Dref \cup \{Z_{n+j}\}$ in a permutation agnostic manner. In our construction, we will ignore the second argument $Z_{n+(3-j)}$ and only train the score using $\red{\Lbag \Dref \cup \{Z_{n+j}\}\Rbag}$. 

We now explicitly define our two learner candidates. Define for each $j\in [2]$ the insertion rank relative to $\cE_j \coloneqq \red{\Lbag \Dref \cup \{Z_{n+j}\}\Rbag}$ as follows:
\begin{equation}
    r_j(z) = 1 + \sum_{x \in \cE_j }\ind\{x  > z\}.
\end{equation}
The summation above is invariant to the ordering of elements in $\cE_j$, so $r_j$ is as well. We then define
\begin{align}
    \begin{split}
        V^{(j,1)}(z) &= 5 - r_j(z)\\
        V^{(j,2)}(z) &= (5 - r_j(z) ) - 2 \cdot \ind\{ r_j(z) = 2 \} + 2\cdot \ind\{r_j(z)\ge 4\}.
    \end{split}
\end{align}
The first learner can be seen as an ``upper-tail'' score, which assigns a higher score to a unit the more values it exceeds in $\cE_j$. The second learner is a perturbation of the first.

\paragraph{\red{Conformal p-values and ``proxy'' p-values for model selection.}} 
For each $j\in[2], \ell\in[2]$, define the conformal p-value
\begin{equation}
    p_j^{(\ell)} = \frac{1 + \sum_{i=1}^n \ind\{ V^{(j,\ell)} (Z_i) \ge V^{(j,\ell)} (Z_{n+j})\} }{n+1}.
\end{equation}
Note that this is a slightly different proxy conformal p-value, since we use the entire bag $\Lbag \Dref \cup \{Z_{n+j}\}\Rbag$ as the proxy reference set, rather than the $n$ lowest scores. 
Since $Z_1, Z_2, Z_3,$ and either $Z_4$ or $Z_5$ are a 4-tuple of i.i.d. continuous random variables, $p_j^{(\ell)}\sim \frac 14 \mathrm{Unif}\{1,2,3,4\}$, which is superuniform.

For each $j$, we want to choose some $\hat\ell_j$ in a data-driven manner as described in Section \ref{sec:mfc}, and use $p_j^{(\hat\ell_j)}$. As specified in the main text, we compute ``proxy'' p-values by pretending that the reference set is $\Dref \cup \{Z_{n+j}\}$ and that the test set is $\Dtest \setminus \{Z_{n+j}\}$. Any decision made using these p-values are agnostic to the ordering of $\Dref \cup \{Z_{n+j}\}$, which directly implies that $p_j^{(\hat\ell_j)}$ is superuniform as well.

Since $|\Dtest| = 2$, the proxy test set for each $j$ simply contains the other test unit. Hence, writing $\tilde{p}^{(\ell)}_{2\given 1}$ as the proxy p-value of $Z_{n+2}$ for selecting $Z_{n+1}$'s model (and vice versa), we construct
\begin{align}
    \begin{split}
        \tilde{p}^{(\ell)}_{2\given 1}  & = \frac{1 +  \sum_{Z \in \cE_1} \ind\{ V^{(1, \ell)} (Z)\ge V^{(1,\ell)} (Z_{n+2}) \} }{n + 2}\\
        \tilde{p}^{(\ell)}_{1\given 2}  & = \frac{1 +  \sum_{Z \in \cE_2} \ind\{ V^{(2, \ell)} (Z)\ge V^{(2,\ell)} (Z_{n+1}) \} }{n + 2}.
    \end{split}
\end{align}
Although the main text specifies to use the learner which maximizes the length of the proxy rejection set $\cR^\bh (\{\tilde {p}^{(\ell)}_{k\given j }\colon k\neq j\})$, such a rule would be less useful when we only have a single proxy p-value per $j$. Instead, we will choose $\ell$ to minimize the p-values:
\begin{equation}
    \hat\ell(j)\coloneqq \argmin{\ell \in [2]} \;\tilde{p} ^{(\ell)}_{ (3-j) \given j }
\end{equation}
with a tie between the proxy p-values resulting in choosing $\ell=1$ by default.

\subsubsection{\red{FDR is violated by using BH on $p_j^{(\hat\ell(j))}$}}
\label{appd:fdr_violation_ms}
We now calculate the FDR violation, showing that $\fdr[\cR^\bh(p_1^{(\hat\ell(1))}, p_2^{(\hat\ell(2))})] = \frac 35 > \frac 12 $. Note that in our global null setting, the FDR is identical to the probability of making any rejection, so we will be calculating $\PP(|\cR^\bh(p_1^{(\hat\ell(1))}, p_2^{(\hat\ell(2))})| \ge 1)$.

Denote $K_1$ and $K_2$ to be the global ranks of $Z_{n+1}$ and $Z_{n+2}$, respectively, in $\Dref\cup\Dtest$. Since the 5 points are i.i.d. and continuous, there are 20 equally likely pairs for $(K_1, K_2)$ (since $K_1\neq K_2$). We can check that $V^{(1,1)}(Z_{n+2}) = 5 - K_2$ and $V^{(1,2)} (Z_{n+2})> V^{(1,1)}(Z_{n+2})$ if and only if $K_2 \in \{4,5\}$. Furthermore, both $\{V^{(1,1)} (Z_{i}) \colon i\in[4]\}$ and $\{V^{(1,2)} (Z_{i}) \colon i\in[4]\}$ are the set $\{1,2,3,4\}$. Thus, 
\begin{equation}
    \tilde p^{(1)}_{2 \given 1} = \frac{1  + \sum_{x=1}^4 \ind\{x \ge  V^{(1,1)}(Z_{n+2}) \}}{n+2 };\quad \tilde p^{(2)}_{2 \given 1} = \frac{1  + \sum_{x=1}^4 \ind\{x \ge  V^{(1,2)}(Z_{n+2}) \}}{n+2 }
\end{equation}
and since $1\le V^{(1,1)}(Z_{n+2}), V^{(1,2)}(Z_{n+2}) \le 5$, $V^{(1,2)} (Z_{n+2})> V^{(1,1)}(Z_{n+2}) \implies \tilde p^{(2)}_{2 \given 1} < \tilde p^{(1)}_{2 \given 1}$. This means 
\begin{equation}
    \label{eq:ell_1_characterization}
    K_2 \in \{4,5\} \iff \hat\ell(1) = 2
\end{equation}
and by symmetry
\begin{equation}
    \label{eq:ell_2_characterization}
    K_1 \in \{4,5\} \iff \hat\ell(2) = 2.
\end{equation}
After characterizing the learner decision, we must now analyze what happens to the p-value under either choice of the learner. Again, by symmetry, fix $j$ and note that the rank of $Z_{n+j}$ in $\cE_j$ (in descending order) is $R_j \coloneqq K_j - \ind\{K_{3-j} < K_j\}$. Under learner 1, the p-value directly uses this rank: $p^{(1)}_j = \frac{R_j}{4}$. Under learner 2, the p-value is perturbed:
\begin{equation}
    p_j^{(2)} = \begin{cases}
        1/4 & \text{if } R_j = 1\\
        1 & \text{if } R_j = 2\\
        3/4 & \text{if } R_j = 3\\
        2/4 & \text{if } R_j = 4.
    \end{cases}
\end{equation}

Let us now examine the BH rejection event for $\alpha = \frac 12 $. Since $m=2$, we either reject when either p-value is at most $\frac{\alpha}{2} = \frac 14 $ or if both p-values are at most $\frac{\alpha}{2}\cdot 2 = \frac 12 $. We can rewrite the event as a union of disjoint events: $\{\min(p_1^{\hat\ell(1)}, p_2^{\hat\ell(2)}) = \frac 14 \} \cup \{p_1^{\hat\ell(1)} =p_2^{\hat\ell(2)} =\frac 12\}$. 

\paragraph{\red{Probability of the first event.}} From \eqref{eq:ell_1_characterization} and \eqref{eq:ell_2_characterization}, we know that under either learner, the p-value for $H_j$ will be $\frac14$ if and only if the rank of $Z_{n+j}$ in $\cE_j$ is 1. Hence, the first event happens when either $R_1 , R_2 = 1$. We can enumerate over all possible pairs of $(K_1, K_2)$ for which this happens, giving 8 pairs. Hence, 
\begin{equation}
    \PP\bigg(\min(p_1^{\hat\ell(1)}, p_2^{\hat\ell(2)}) = \frac 14 \bigg) = \frac{8}{20} = \frac{2}{5}.
\end{equation}

\paragraph{\red{Probability of the second event.}} There are two ways to obtain $p_j = = \frac 12$: either we choose learner 1, and $R_j = 2$, or we choose learner 2, and $R_j = 4$. Enumerating over all 12 other pairs, we find that only $(K_1, K_2) \in \{ (2,3), (3,2), (4,5), (5,4) \}$ fulfill this condition for both $j\in[2]$. The first two pairs lead to both hypotheses selecting learner 1, and they both have rank $2$ in their respective $\cE_j$. The last two pairs lead to both hypotheses selecting learner 2, and both have rank $4$ in their respective $\cE_j$. Thus,
\begin{equation}
    \PP\bigg(\min(p_1^{\hat\ell(1)}, p_2^{\hat\ell(2)}) = \frac 14 \bigg ) = \frac{4}{20} = \frac{1}{5}.
\end{equation}

We conclude by calculating
\begin{multline}
    \PP\bigg(|\cR^\bh(p_1^{(\hat\ell(1))}, p_2^{(\hat\ell(2))})| \ge 1\bigg ) \\= \PP\bigg(\min(p_1^{\hat\ell(1)}, p_2^{\hat\ell(2)}) = \frac 14  \text{ OR } p_1^{\hat\ell(1)} =p_2^{\hat\ell(2)} =\frac 12\}\bigg) = \frac{2}{5} + \frac 1 5 = \frac 35 > \frac 12
\end{multline}
which is a violation of supposed FDR control. Meanwhile, the e-value-based conformal model selection mechanism provably controls the FDR, showing that e-values are in general necessary for airtight FDR guarantees when allowing for data-driven selection out of multiple candidate nonconformity scorers.


\section{\red{Omitted simulation details}}
\label{appd:omitted_sim_details}
\subsection{\red{Specific implementations of e-BH-CC in the numerical experiments}}
\label{appd:ebhcc_specific_implementations}

Among the experiments in Section~\ref{sec:experiments}, conditional calibration is used only in the $m$-FC model-selection simulations of Section~\ref{sec:exp_ms} and the weighted $K=1$ simulations of Section~\ref{sec:exp_weighted}. The default unweighted $K=1$ simulations of Section~\ref{sec:exp_k1} do not use e-BH-CC. Whenever conditional calibration is used, we instantiate the framework of Section~\ref{sec:ebhcc} through the shortcut representation in \eqref{eq:boosted_eval_shortcut}. The common denominator in the boosted threshold is the hybrid quantity
\[
\label{eq:our_version_of_cc_denom}
\hat R_j  =
\begin{cases}
|\cR^{\ebh}|, & \text{if } j \in \cR^{\ebh} , \\
s_j , & \text{if }  j \notin \cR^{\ebh} \text{ and either } |\cR^{\mathrm{BH}} | = 0 \text{ or }  s_j   < |\cR^{\mathrm{BH}}|\\
|R^{\mathrm{BH}}  \cup \{j\}|, & \text{otherwise},
\end{cases}
\]
where $\cR^{\mathrm{BH}} $ is the BH rejection set formed from the resampled conformal p-values and $s_j $ is the number of resampled p-values tied with or smaller than $p_j $. Thus, the main experiment-specific distinction lies in the choice of auxiliary rejection family $\cA(c)$, equivalently in the scalar statistic $c_j $ entering \eqref{eq:cond_exp}.

Note that using this choice of $\hat R_j$ over the default $|\cR^{\ebh} \cup \{j\}|$ still guarantees uniform improvement of the boosted rejection set over the original e-BH rejection set, i.e., $\cR^\ebh(e^\boost_1, \dots, e^\boost_m) \supseteq \cR^\ebh(e_1, \dots, e_m)$. The exact argument used to prove Theorem 2 in~\citet{lee2024boosting} (their uniform improvement theorem) holds with our choice of $\hat R_j$.

\subsubsection{\red{Unweighted $K=1$ simulations}}

The default $K=1$ simulations of Section~\ref{sec:exp_k1} do not use conditional calibration. There we report the unboosted $1$-FC e-values from \eqref{eq:kfc_eval} and apply e-BH directly, alongside the SC and AdaDetect baselines. Accordingly, no  choice of $\cA(c)$   is required for these experiments.

\subsubsection{\red{$m$-FC model-selection simulations}}

For the model-selection experiments, the auxiliary family in Section~\ref{sec:ebhcc} is taken to be the pointwise threshold family
\[
\cA(c) = \{\ell \in [m] : p_\ell \le c\},
\]
where $p_\ell$ denotes the model-selected conformal p-value. Equivalently, the statistic entering \eqref{eq:cond_exp} is simply
\[
c_j = p_j.
\]
The denominator is the shared hybrid quantity $\hat R_j$ displayed above. Conditional calibration is applied after model selection, so the resampling step recomputes the selected score vector under each admissible swap determined by $S_j$. In the manuscript experiments, this is paired with the p-value-based top-three smoothed selector described in Section~\ref{sec:exp_ms}; the corresponding candidate library and hyperparameter choices are summarized in Section~\ref{appd:model_selection_settings}.

\subsubsection{\red{Weighted $K=1$ simulations}}

For the weighted $K=1$ experiments, the same hybrid denominator $\hat R_j$ is used, but the auxiliary family is based on BH applied to the weighted conformal p-values from \eqref{eq:weighted_kfc_pval}. Equivalently, the statistic entering \eqref{eq:cond_exp} is
\[
c_j = \frac{m}{\alpha} \cdot \frac{p_j}{|R^{\mathrm{BH}}\cup\{j\}|}.
\]
This choice is used for both weighted $1$-FC and weighted SC. The distinction between the two procedures lies in the underlying score construction, and for SC additionally in the preliminary split of the reference data into training and calibration subsets, rather than in the conditional-calibration rule itself. The conditional law underlying the expectation in \eqref{eq:cond_exp} is the weighted-exchangeable law given in \eqref{eq:weighted_exch_cond_dist}.

\subsection{\red{Specific implementations of model selection }}
\label{appd:model_selection_settings}

Across the $m$-FC numerical experiments of Section~\ref{sec:exp_ms} and the malicious-prompt analysis of Section~\ref{sec:realdata}, we use the same ``top model ensembling'' model-selection mechanism from Section~\ref{sec:model_selection_mfc}. Since $K=m$ in both settings, each block contains a single test point. For each $j\in[m]$ and candidate learner $\ell$, we form the score vector $\{V_i^{(j,\ell)}\}_{i\in[n+m]}$ and compute the proxy p-values described in Section~\ref{sec:model_selection_mfc}: we sort $\Lbag V_1^{(j,\ell)},\dots,V_n^{(j,\ell)},V_{n+j}^{(j,\ell)}\Rbag$, take the $n$ smallest values as a proxy reference set, treat the remaining value together with $\{V_{n+r}^{(j,\ell)}\}_{r\neq j}$ as a proxy test set, and apply BH at level $\alpha$. Writing these proxy p-values as $\tilde p^{(\ell)}_{1 \given j},\dots,\tilde p^{(\ell)}_{m \given j}$, we define
\[
R^{\mathrm{proxy}}_{j,\ell}
=
\left|\cR^\bh_\alpha\!\left(\tilde p^{(\ell)}_{1 \given j},\dots,\tilde p^{(\ell)}_{m \given j}\right)\right|,
\qquad
B^{\mathrm{proxy}}_{j,\ell}
=
\sum_{r=1}^m \ind\!\left\{\tilde p^{(\ell)}_{r \given j} < \alpha\right\},
\]
and then set
\[
\Gamma_{j,\ell}
=
\frac{R^{\mathrm{proxy}}_{j,\ell}}{m}
+
10^{-6}\frac{B^{\mathrm{proxy}}_{j,\ell}}{m}.
\]
This is the proxy utility used to rank the candidate learners. Thus, the quantity denoted abstractly by $\Gamma_\ell$ in Section~\ref{sec:model_selection_mfc} is instantiated here as the block-specific utility $\Gamma_{j,\ell}$.

In the manuscript experiments, we do not use hard top-$1$ selection, but rather top-3 ensembling. For each $j$, we retain the three candidates with the highest proxy utility, map each candidate score row to a common scale via the empirical-rank transform
\[
s \mapsto -\log\!\left(\frac{1+\#\{r : s_r \ge s\}}{n+m+1}\right),
\]
applied within that candidate row, and then average the transformed rows with normalized  weights proportional to $\exp\{12\,\Gamma_{j,\ell}\}$. The resulting ensembled score collection is used to form both the model-selected conformal p-values (which are used in the e-BH-CC implementation; see Appendix \ref{appd:ebhcc_specific_implementations}) and the model-selected e-values. The hindsight-best and hindsight-worst curves reported in Sections~\ref{sec:exp_ms} and~\ref{sec:realdata} are oracle summaries computed post hoc from the same candidate library; they are not part of the data-driven selection rule itself.

For the numerical $m$-FC simulations, the candidate library is the ten-model collection used in Section~\ref{sec:exp_ms}. It consists of four Isolation Forest candidates with $25$, $50$, $100$, and $200$ trees, together with six one-class SVM candidates having parameter pairs
\[
\begin{aligned}
(\nu,\gamma)\in \{&(0.004,\texttt{auto}),\ (0.01,\texttt{auto}),\ (0.004,\texttt{scale}),\\
&(0.01,\texttt{scale}),\ (0.1,\texttt{scale}),\ (0.25,\texttt{scale})\}.
\end{aligned}
\]
The Isolation Forest candidates use the in-sample score construction, while the SVM candidates use the leave-one-out construction.

For the malicious-prompt analysis, the same exact selection mechanism (with the same choices and hyperparameters) is applied to a different candidate library. There, the candidate set contains twelve rows obtained by pairing the three embedding models supplied by~\citet{ayub2024embedding} (OpenAI, OctoAI, and MiniLM) with four Isolation Forest candidates having $50$, $100$, $200$, and $400$ trees. In the manuscript run, each embedding is first reduced to dimension $40$ by truncated SVD before the scores are formed. For each test unit, we then measure each scorer and construct the e-values with the top-3 ensembled scores as described above. The SC benchmark curves in Section~\ref{sec:realdata} are evaluated over the same twelve candidates after splitting the reference prompts according to $\rho\in\{25,50,75\}$, with the first $\lfloor \rho n\rfloor$ prompts used for training and the remainder used for calibration.

Additionally, note in the malicious prompt detection application we construct a slightly more powerful variation of our e-value, outlined in \eqref{eq:kfc_eval_early}---such a choice does not affect validity of the downstream FDR control.

\paragraph{\red{A computational shortcut for $m$-FC ND with model selection.}}
When each candidate model is trained under either the 1-FC or LOO-1-FC framework, there is a further implementation shortcut. In that case, switching unit $Z_i$ with $Z_{n+j}$ (as in the conditional calibration computation step) does not change the trained model used to score \emph{any other unit using any learner}; the only change this leads to is switching the scores for $V_i^{(\ell)}$ and $V_{n+j}^{(\ell)}$ for all learners $\ell$.
Namely, we do not have to retrain any scores used in $e_{j'}$ for $j' \in [m] \setminus \{j\}$, meaning we only have to do one model fit per learner (or $n+m$ if we use the LOO framework). 
The $m$-FC structure is then used only in the proxy evaluation step. This shortcut is used for all learners in all model-selection experiments.

\paragraph{\red{Relationship to existing conformal model selection approaches.}}
Model selection within a conformal framework has somewhat studied, due to the richness in conditional information (in the form of invariance to the ordering of certain units).~\citet{liang2024conformal} proposes a method for the related problem of conformal prediction. They want to choose the model which gives the shortest conformal prediction interval length; hence, they train each model over both the reference data and (single) test point in a permutation-invariant manner, then find which model achieves the shortest lengths for each of the $n+1$ units. The conditional information being used here is the unordered bag $\Lbag \Dref\cup \{Z_{n+1}\}$. 

However, as we argue, the multiple testing problem gives you additional information for conditioning (which means they can be used for model measurement and selection) in the form of all test units other than the current $\{Z_{n+j}\}$. This information is important as the BH rejection procedure is a function of these other test units, and having access to their values would allow us to measure a proxy for the power of BH for each score. 

The closest approach in the existing conformal literature is~\citet{bai2024optimized}, which is concurrent and independent work. Even their method, however, is a less general instance of our strategy, in two high-level ways.  First, they measure the BH rejection set via proxy p-values which use $\Dref \cup \{Z_{n+j}\}$ as the proxy reference set. Although this is valid under the null $H_j$, the corresponding alternative would induce a proxy reference set with an additional large non-conformity score, which would affect each proxy p-value and lead to BH acting differently (and potentially much less powerfully) on these proxy p-values and the true p-values. Meanwhile, our construction is robust to when $Z_{n+j}$ is a true outlier, by only using the $n$ smallest scores in the reference set (and the remaining largest score as the new test point for $j$), maintaining permutation invariance under the null but simulating the alternative better. Second, they only propose a  top-1 model selection, which we show can be improved to a general model ensembling strategy. We also note that their p-value approach will similarly lead to FDR violations due to the dependences induced by model selection, for which they have to adjust via a random pruning step on the BH rejection set.

\citet{marandon2024adaptive} also consider model selection on top of the conformal novelty detection problem. However, their method involves holding out part of their training fold (already split off from the reference dataset), which compounds the existing issues of stability and data efficiency.



\section{\red{Deferred simulations}}
\label{appd:more_experiments}

\subsection{\red{Deferred FDR plots from Section~\ref{sec:experiments}}}
\label{appd:deferred_fdr}

The main text focuses on power. Here we collect the corresponding FDR plots for the primary experiments in Section~\ref{sec:experiments}.

\begin{figure}[H]
    \centering
    \includegraphics[width=0.71\textwidth]{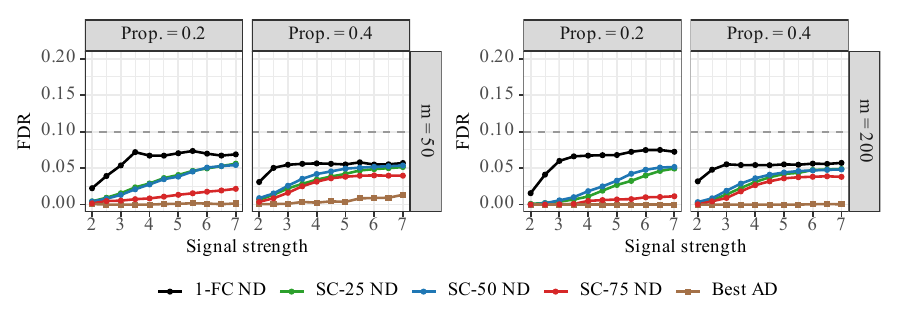}\par\vspace{0.08em}
    \includegraphics[width=0.71\textwidth]{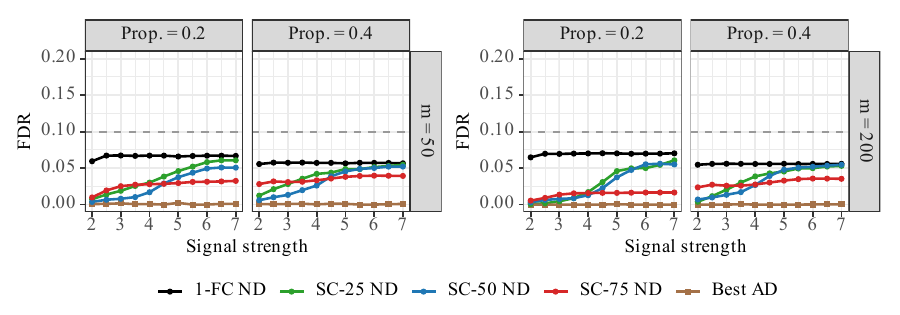}\par\vspace{0.08em}
    \includegraphics[width=0.71\textwidth]{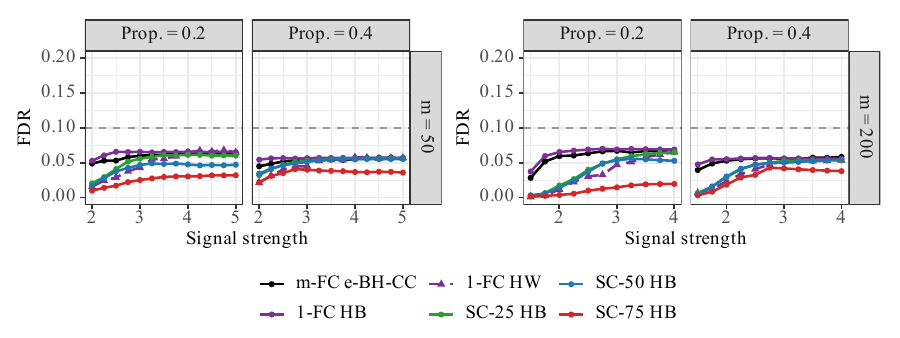}\par\vspace{0.08em}
    \includegraphics[width=0.71\textwidth]{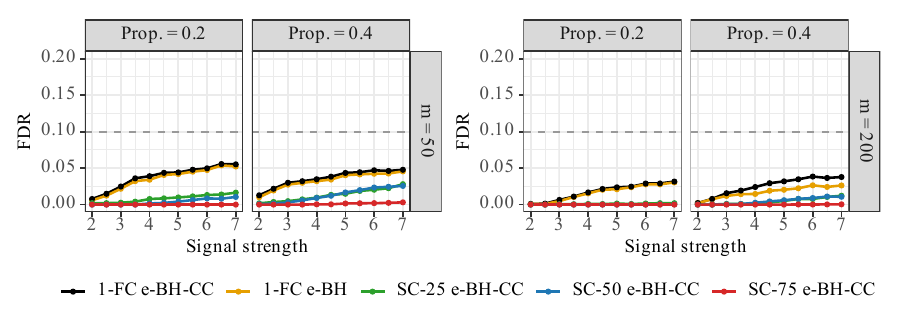}\par\vspace{0.08em}
    \includegraphics[width=0.71\textwidth]{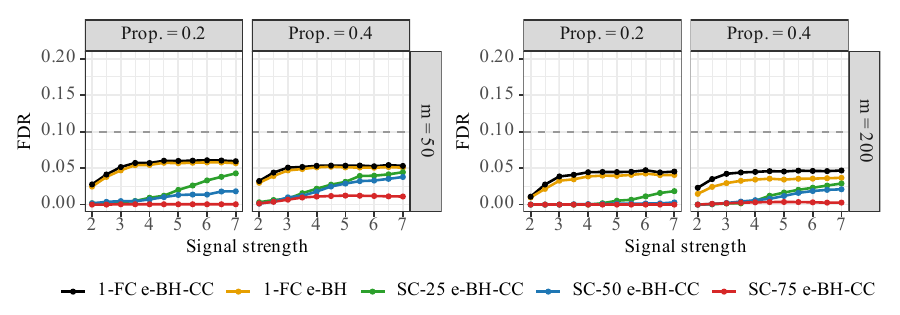}\par\vspace{0.08em}
    \caption{From top to bottom, these are the FDR plots for Fig.~\ref{draft:fig:n150_iso}, Fig.~\ref{draft:fig:n150_svm}, Fig.~\ref{draft:fig:Km_mdlsel}, Fig.~\ref{draft:fig:weighted_n150_if}, and Fig.~\ref{draft:fig:weighted_n150_svm}.}
    \label{appd:fig:deferred_fdr_stack}
\end{figure}

\begin{figure}[H]
    \centering
    \includegraphics[width=0.9\textwidth]{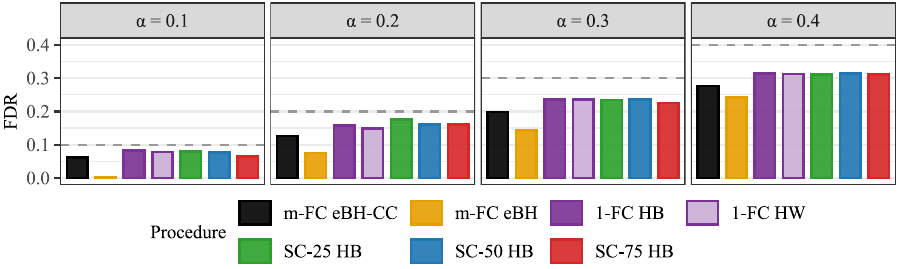}
    \caption{FDR results corresponding to Fig.~\ref{draft:fig:real_data_power} for the malicious-prompt experiment.}
    \label{appd:fig:real_data_fdr}
\end{figure}

\FloatBarrier
\subsection{\red{HB/HW model-selection benchmarks}}
\label{appd:model_selection_hbhw}

To complement the model-selection experiments in Sections~\ref{sec:exp_ms} and~\ref{sec:realdata}, we collect here the corresponding HB/HW comparisons with both power and FDR shown together. In each case, HB and HW are oracle benchmarks over the candidate library used in the corresponding experiment: for each framework and configuration, HB denotes the single candidate with the largest empirical power, while HW denotes the single candidate with the smallest empirical power (averaged over all replications). These curves therefore quantify the spread induced purely by model choice within the library.

\subsubsection{\red{Numerical study}}

The figure shows that the proposed $m$-FC e-BH-CC procedure lies much closer to the HB curves than to the HW curves, especially when $m=200$. This indicates that the   top-model ensembling strategy is effectively recovering strong candidates without access to oracle information. The gap between HB and HW is also substantial for the SC baselines, so the model-selection problem is nontrivial. At the same time, the lower panel shows empirical FDR remaining near the target level across the displayed settings, supporting the interpretation that the adaptive selection step improves power without introducing visible FDR inflation in these experiments.

\begin{figure}[h!]
    \centering
    \includegraphics[width=0.9\textwidth]{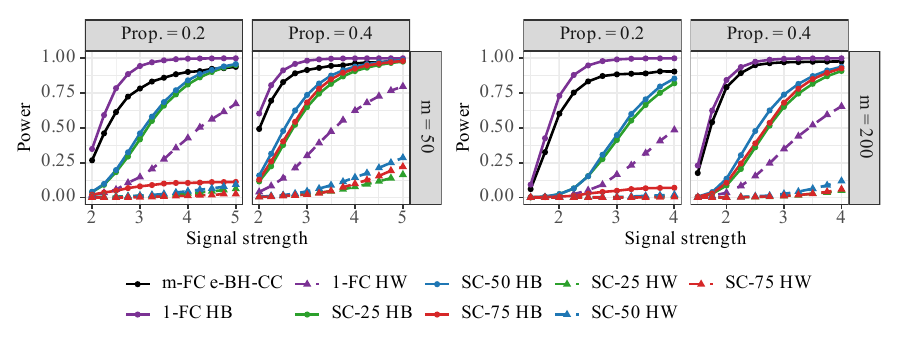}\par\vspace{0.12em}
    \includegraphics[width=0.9\textwidth]{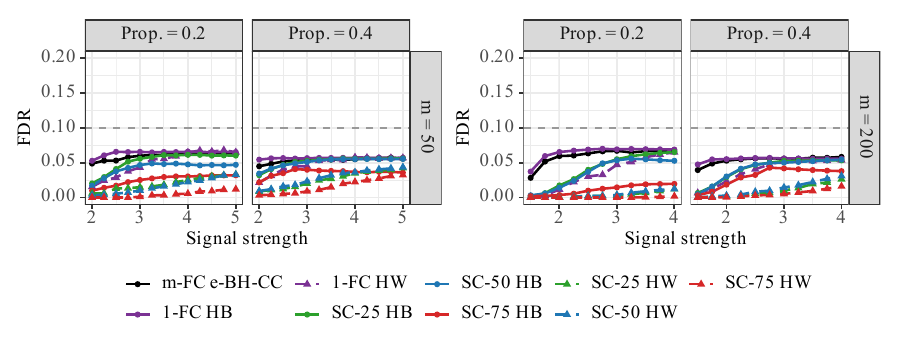}
\caption{From top to bottom, the power and FDR plots for the $m$-FC model-selection experiment when both hindsight-best and hindsight-worst oracle baselines are displayed. HB and HW are computed pointwise over the same ten-model candidate library used in Section~\ref{sec:exp_ms}. Each experiment uses 1,000 replications.}
    \label{appd:fig:model_selection_hbhw}
\end{figure}

\FloatBarrier
\subsubsection{\red{Real-data application}}

Figure~\ref{appd:fig:real_data_hbhw} gives the analogous HB/HW comparison for the malicious-prompt study from Section~\ref{sec:realdata}. Here the candidate library is still the twelve model suite, and the $m$-FC procedure again applies the same top-3 model ensembling rule (1,000 observed replications).

The same qualitative pattern persists in the application setting. The boosted $m$-FC curve remains much closer to 1-FC HB than to 1-FC HW across the displayed FDR targets, while also improving over the unboosted $m$-FC e-BH curve. Relative to the SC oracle baselines, $m$-FC e-BH-CC is competitive with the strongest HB curves and clearly separated from the HW curves, indicating that the adaptive model ensembling is recovering useful candidates in the model library. The lower panel shows empirical FDR staying below the target levels throughout the displayed range.

\begin{figure}[h!]
    \centering
    \includegraphics[width=0.9\textwidth]{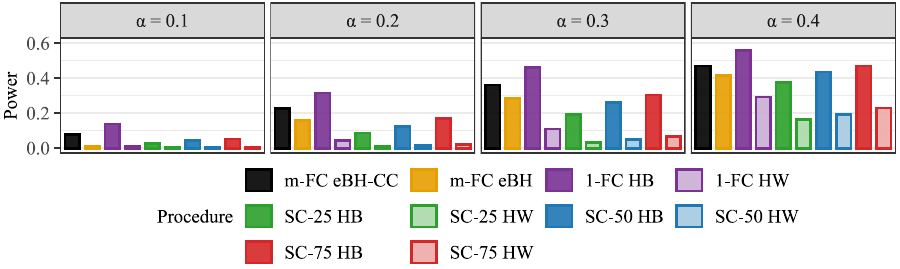}\par\vspace{0.12em}
    \includegraphics[width=0.9\textwidth]{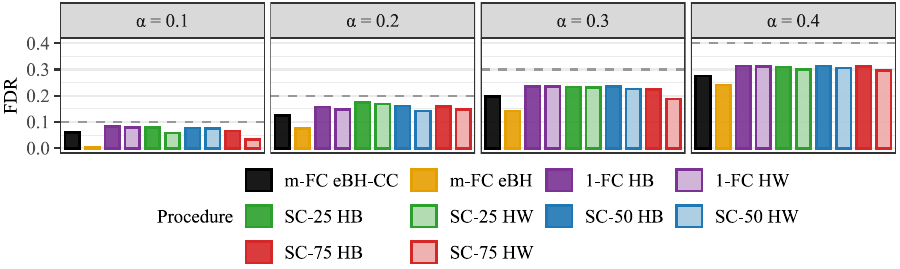}
\caption{From top to bottom, the power and FDR plots for the malicious-prompt experiment when both hindsight-best and hindsight-worst oracle baselines are displayed. HB and HW are computed pointwise over the same twelve-model candidate library used in Section~\ref{sec:realdata}. Each experiment uses 1,000 replications.}
    \label{appd:fig:real_data_hbhw}
\end{figure}

\FloatBarrier
\clearpage
\subsection{\red{Large sample experiments}}
\label{appd:large_sample_experiments}

We also repeated the unweighted setting experiments in a substantially larger regime with $n=1000$ and $m=1000$. Figure~\ref{appd:fig:large_sample_maintext_style} reports the corresponding power and FDR plots, comparing $1$-FC ND, the SC baselines, and the strongest AdaDetect curve. The corresponding derandomized comparison is deferred to Section~\ref{appd:nonrandom_vs_derandomized}; see Figure~\ref{appd:fig:large_sample_derandomized}. 

The SC and AD baselines perform much better in the large sample setting than in the limited reference setting; however, 1-FC ND still outperforms them all. We interpret this setting as a scenario where the split-based methods get enough training data for the corresponding power curves to look reasonable. The full-data efficient procedure still attains higher power.

\begin{figure}[h!]
    \centering
    \includegraphics[width=0.9\textwidth]{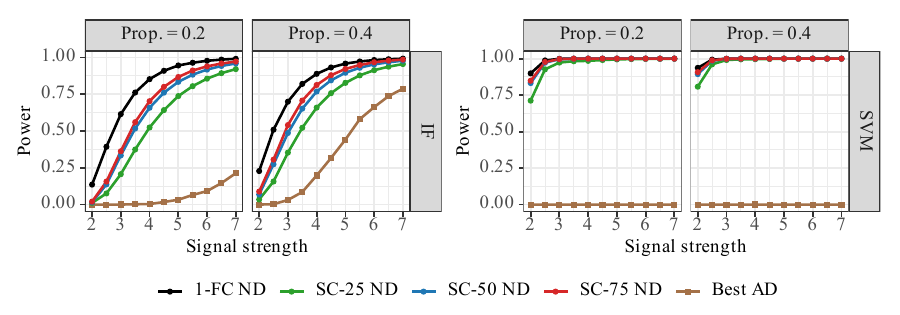}\par\vspace{0.12em}
    \includegraphics[width=0.9\textwidth]{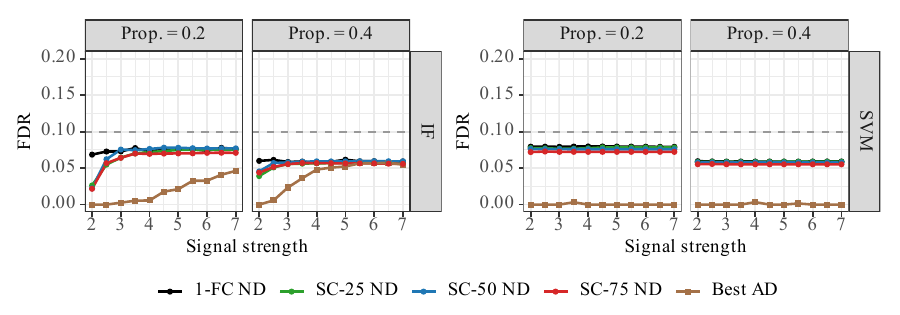}
\caption{From top to bottom, the power and FDR plots for the large-sample unweighted setting experiments with $n=1000$ and $m=1000$. The comparison includes $1$-FC ND, the SC baselines, and the strongest AdaDetect curve, with Isolation Forest panels on the left and one-class SVM panels on the right.}
    \label{appd:fig:large_sample_maintext_style}
\end{figure}

\FloatBarrier
\subsection{$K=5$ experiments}
\label{appd:k5_experiments}

For completeness, we also retain the earlier $K=5$ comparison against the standard random SC and AdaDetect baselines. Figure~\ref{fig:K5_versus_random} shows that $5$-FC e-BH-CC remains more powerful across the training proportions $\rho \in \{0.25, 0.5, 0.75\}$ in this older $n=120$, $m=30$ regime while maintaining empirical FDR control.

\begin{figure}[H]
    \centering
    \includegraphics[width=\textwidth]{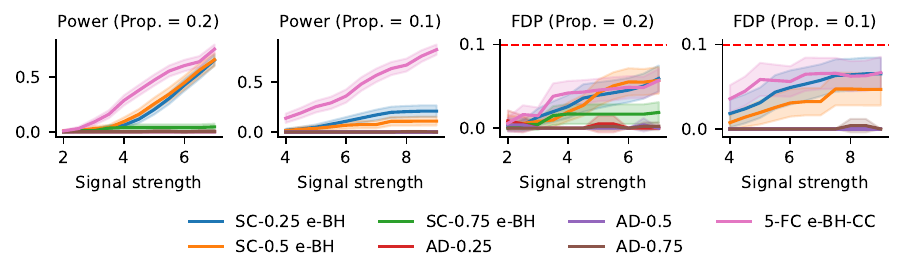}
    \caption{Combined power and FDP comparison between $5$-FC ND, random SC methods, and AdaDetect for $n=120$ and $m=30$. The FDR target is $\alpha=0.1$, and each experiment consists of 200 replications.}
    \label{fig:K5_versus_random}
\end{figure}

\FloatBarrier
\subsection{\red{Non-random versus derandomized plots}}
\label{appd:nonrandom_vs_derandomized}

Although the experiments in Section~\ref{sec:experiments} primarily compare FC-based methods to the standard random splitting baselines, one may also wish to compare against derandomized SC procedures in the sense of \citet{bashari2024derandomized}. Briefly, we choose $D$ different random splits of $\Dref$ into train and calibration sets, produce split conformal e-values $\{e_j^{(d)}\}_{j\in[m]}$ \eqref{eq:sc_eval} on each split, and average them:
\[
    \bar e_j \coloneqq \frac{1}{D}\sum_{d=1}^D e_j^{(d)}.
\]
We then apply e-BH to $\{\bar e_j\}_{j\in[m]}$. Throughout this subsection, we take $D=20$.

\subsubsection{\red{Comparison of $1$-FC ND to derandomized SC baselines}}
Figures~\ref{fig:small_sample_power_plot_nonrandom} and~\ref{fig:weighted_small_sample_power_plot_nonrandom} compare the two $1$-FC ND experiments from Section~\ref{sec:experiments} to derandomized SC-$\rho$ methods, using Isolation Forest and one-class SVM scores, respectively. As derandomization typically lowers the power of the e-BH rejection set (see, e.g., \cite{ren2024derandomised, lee2024boosting}), the $1$-FC procedures remain favorable in these comparisons. Figure~\ref{appd:fig:large_sample_derandomized} reports the corresponding large-sample derandomized comparison.

\begin{figure}[h!]
    \centering
    \includegraphics[width=0.9\textwidth]{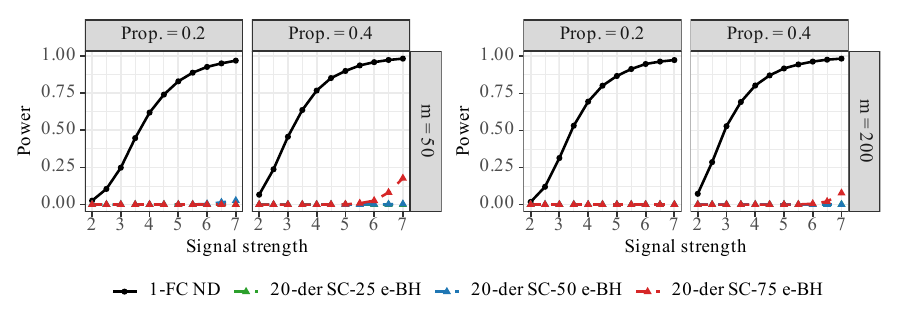}\par\vspace{0.12em}
    \includegraphics[width=0.9\textwidth]{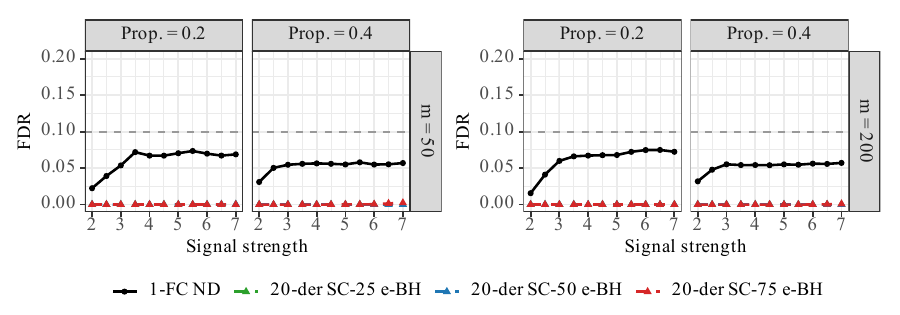}
    \caption{From top to bottom, the power and FDR plots corresponding to Fig.~\ref{draft:fig:n150_iso}, comparing $1$-FC ND to derandomized SC-based methods for the unweighted setting experiment with Isolation Forest scores.}
    \label{fig:small_sample_power_plot_nonrandom}
\end{figure}

\begin{figure}[h!]
    \centering
    \includegraphics[width=0.9\textwidth]{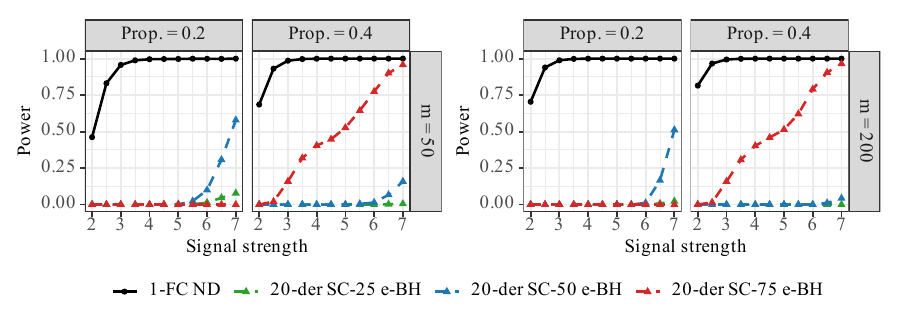}\par\vspace{0.12em}
    \includegraphics[width=0.9\textwidth]{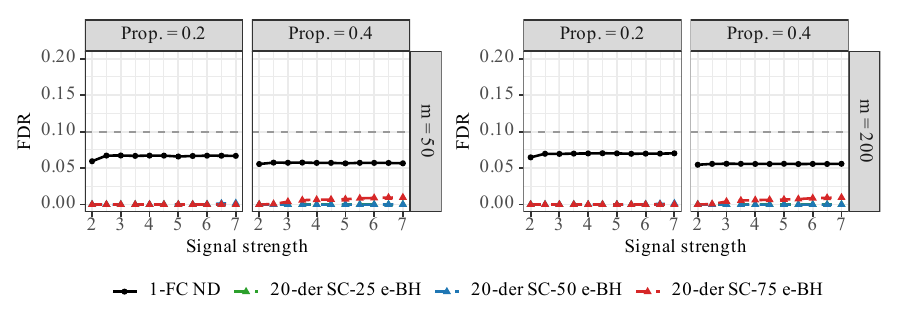}
    \caption{From top to bottom, the power and FDR plots corresponding to Fig.~\ref{draft:fig:n150_svm}, comparing $1$-FC ND to derandomized SC-based methods for the unweighted setting experiment with one-class SVM scores.}
    \label{fig:weighted_small_sample_power_plot_nonrandom}
\end{figure}

\begin{figure}[h!]
    \centering
    \includegraphics[width=0.9\textwidth]{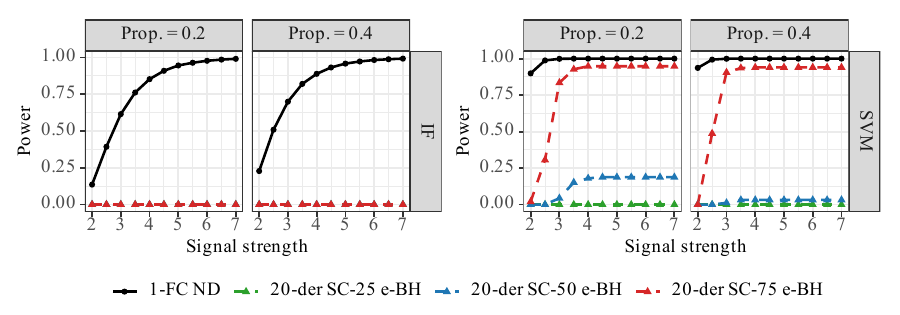}\par\vspace{0.12em}
    \includegraphics[width=0.9\textwidth]{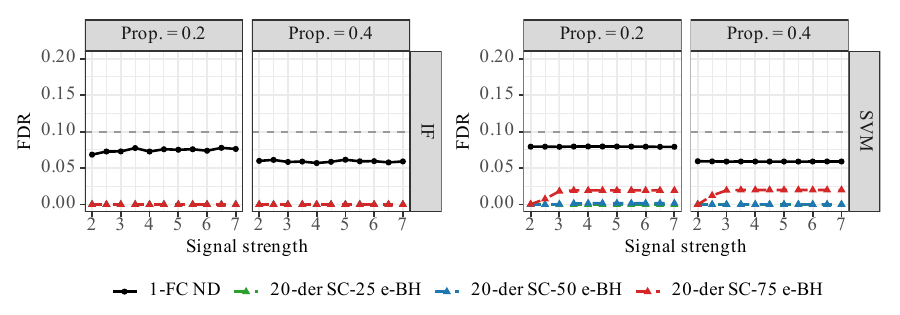}
\caption{From top to bottom, the power and FDR plots comparing $1$-FC ND to derandomized SC baselines for the large-sample experiment with $n=1000$ and $m=1000$. Isolation Forest panels appear on the left and one-class SVM panels on the right.}
    \label{appd:fig:large_sample_derandomized}
\end{figure}

\subsubsection{Comparison of $5$-FC ND to derandomized SC baselines}
Figure~\ref{fig:K5_versus_nonrandom} shows the performance of $5$-FC e-BH-CC for $n=120, m=30$ relative to derandomized SC procedures. As expected, the splitting methods lose power once derandomization is imposed.

\begin{figure}[h!]
    \centering
    \includegraphics[width=\textwidth]{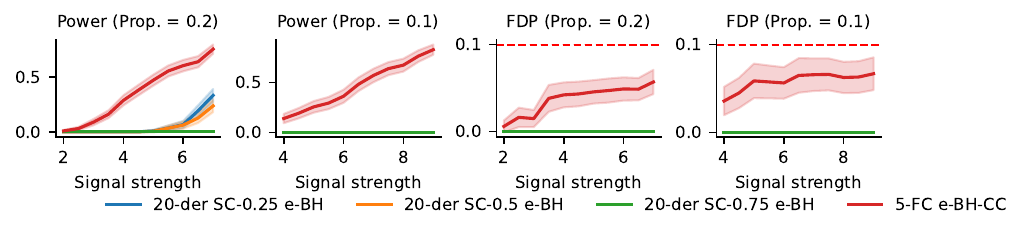}
    \caption{Combined power and FDP comparison between $5$-FC ND and derandomized SC-based methods for $n=120$ and $m=30$. The FDR target is $\alpha=0.1$, and each experiment consists of 200 replications.}
    \label{fig:K5_versus_nonrandom}
\end{figure}

\FloatBarrier
\subsection{Null proportion correction experiments}
\label{sec:null_prop_experiments}

Under the same simulation setting as in Section~\ref{sec:exp_k1}, we implement the null proportion correction for $1$-FC e-BH detailed in Appendix~\ref{sec:null_prop}. The specific stopping time $\tau_j$ is chosen as
\[
\tau_j^{(10)} \coloneqq \sup \left\{ k \in [n+m] \colon \left|\big\{ l \in \{k, k+1, \dots, n+m\} :
\hat{\pi}_0^{(j)}(S_{(l)}) \ge \hat{\pi}_0^{(j)}(S_{(l+1)}) \big\}\right| = 10 \right\}.
\]
Intuitively, as we travel backward from $k=n+m,\dots,1$, we stop at the tenth index where the event $\hat{\pi}_0^{(j)}(S_{(k)}) \ge \hat{\pi}_0^{(j)}(S_{(k+1)})$ occurs. Figure~\ref{fig:null_prop} compares the uncorrected $1$-FC e-BH procedure with its $\pi_0$-corrected counterpart. The correction increases power, especially in the weaker-signal regime, while shifting the empirical FDR behavior from $\pi_0 \alpha$ toward the target level $\alpha$.

\begin{figure}[H]
    \centering
    \includegraphics[width= \textwidth]{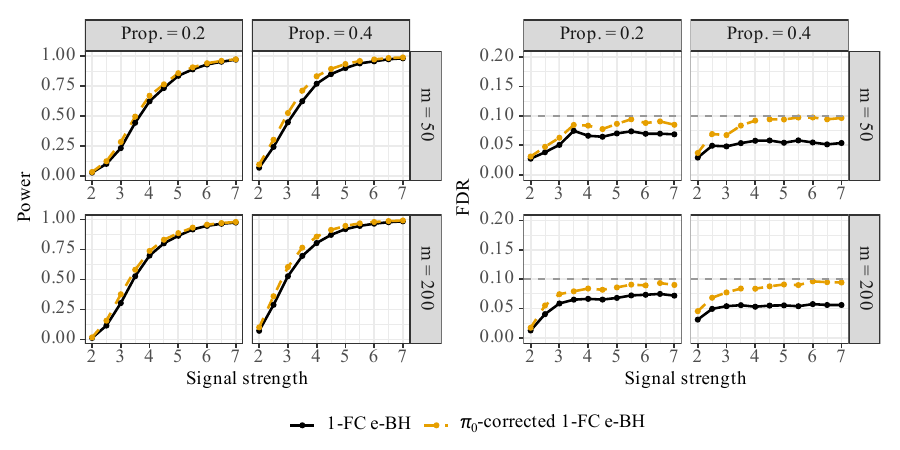}
    \caption{Null proportion correction experiments for $1$-FC e-BH and $\pi_0$-corrected $1$-FC e-BH with $m=200$, $n=150$, and $\alpha=0.1$. The left panel reports power and the right panel reports FDR. Each experiment uses 500 replications.}
    \label{fig:null_prop}
\end{figure}

\FloatBarrier
\subsection{Additional real-data analysis: credit card fraud}
\label{appd:tables}
\label{appd:credit_card}

In addition to the malicious-prompt benchmark in Section~\ref{sec:realdata}, we retain the earlier Credit Card fraud study as an auxiliary real-data experiment. The ``Credit Card'' dataset contains transaction data for both fraudulent and non-fraudulent credit card activity \citep{creditcard}. Fraudulent transactions are labeled as outliers (count: 492), while non-fraudulent transactions are labeled as inliers (count: 284,315). Each row includes the class label together with 30 covariates, all but two of which are anonymized.

To emulate the small-reference-data regime targeted by our methods, we subsample the full dataset. Specifically, for $n=50$, $m=100$, and $\pi_1 \in \{0.05, 0.1\}$, we first sample $n$ known inliers without replacement to form $\Dref$. We then construct $\Dtest$ by sampling $\pi_1 m$ known outliers and $(1-\pi_1)m$ known inliers, again without replacement and disjoint from $\Dref$, and shuffle the resulting test set.

We compare three families of procedures at FDR target levels $\alpha \in \{0.2,0.3,0.4,0.5\}$:
\begin{itemize}
    \item $1$-FC ND, implemented with Isolation Forest scores using 50 trees;
    \item SC-$\rho$ ND, also using Isolation Forest scores, with $\rho \in \{0.25, 0.5, 0.75\}$ (here $\rho$ is a proportion instead of a percentage value like in the main text);
    \item AD-$\rho$ ND, using binary random forest classifiers with the same three split proportions.
\end{itemize}
All implementations use \texttt{scikit-learn} \citep{scikit-learn}. For each hyperparameter setting, we run 500 replications and report the empirical FDP and power.

Figures~\ref{fig:real_data_plots_0.05} and~\ref{fig:real_data_plots_0.1} summarize the resulting performance at $\pi_1=0.05$ and $\pi_1=0.1$, respectively, while Tables~\ref{tab:table_0.05} and~\ref{tab:table_0.1} report the corresponding numerical summaries. In this auxiliary benchmark, $1$-FC ND again delivers the strongest power while maintaining empirical FDR control.

\begin{figure}[H]
    \centering
    \includegraphics[width=\textwidth]{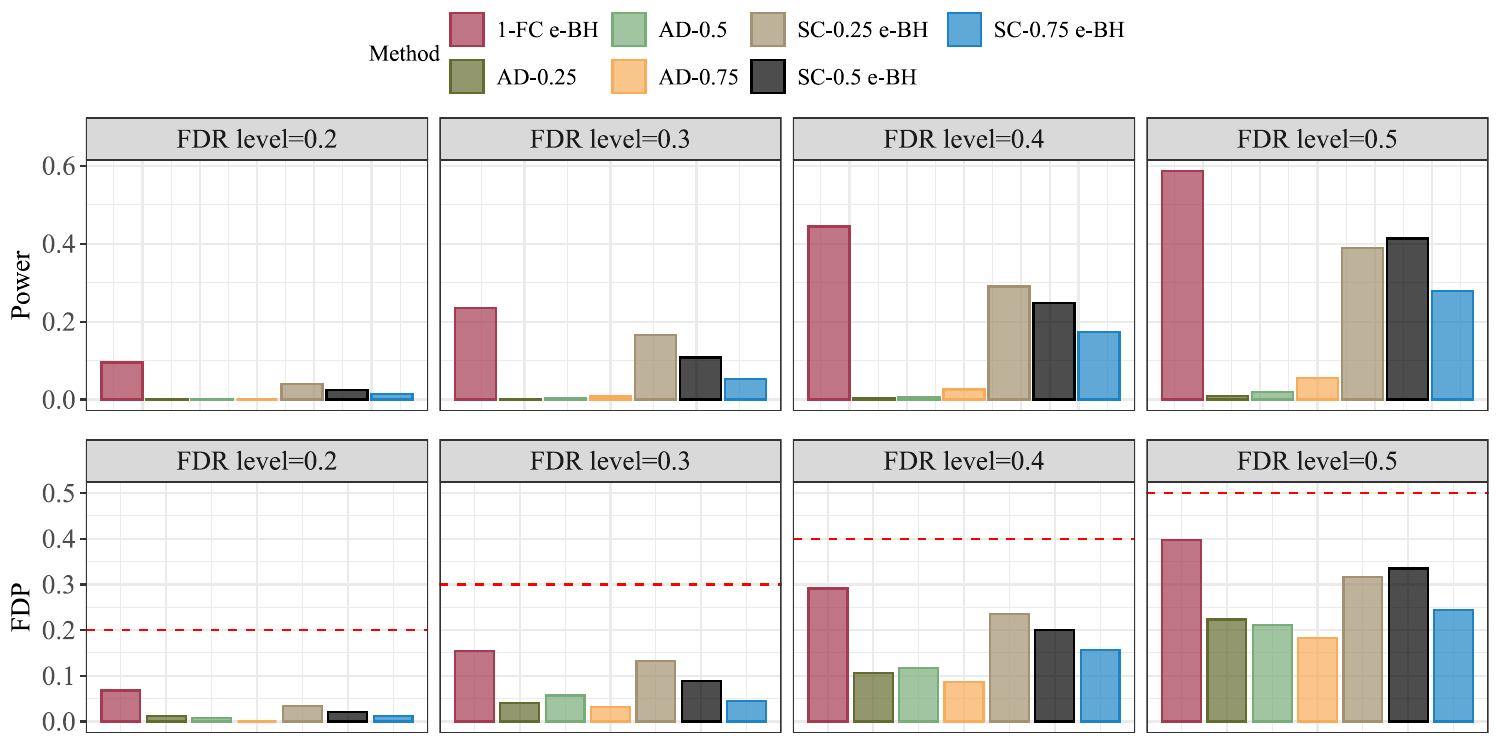}
    \caption{Empirical power and FDP results from simulated experiments using the Credit Card dataset, with $m=100$, $n=50$, and $\pi_1 = 0.05$. The results are averaged over 500 replications.}
    \label{fig:real_data_plots_0.05}
\end{figure}

\begin{figure}[H]
    \centering
    \includegraphics[width=\textwidth]{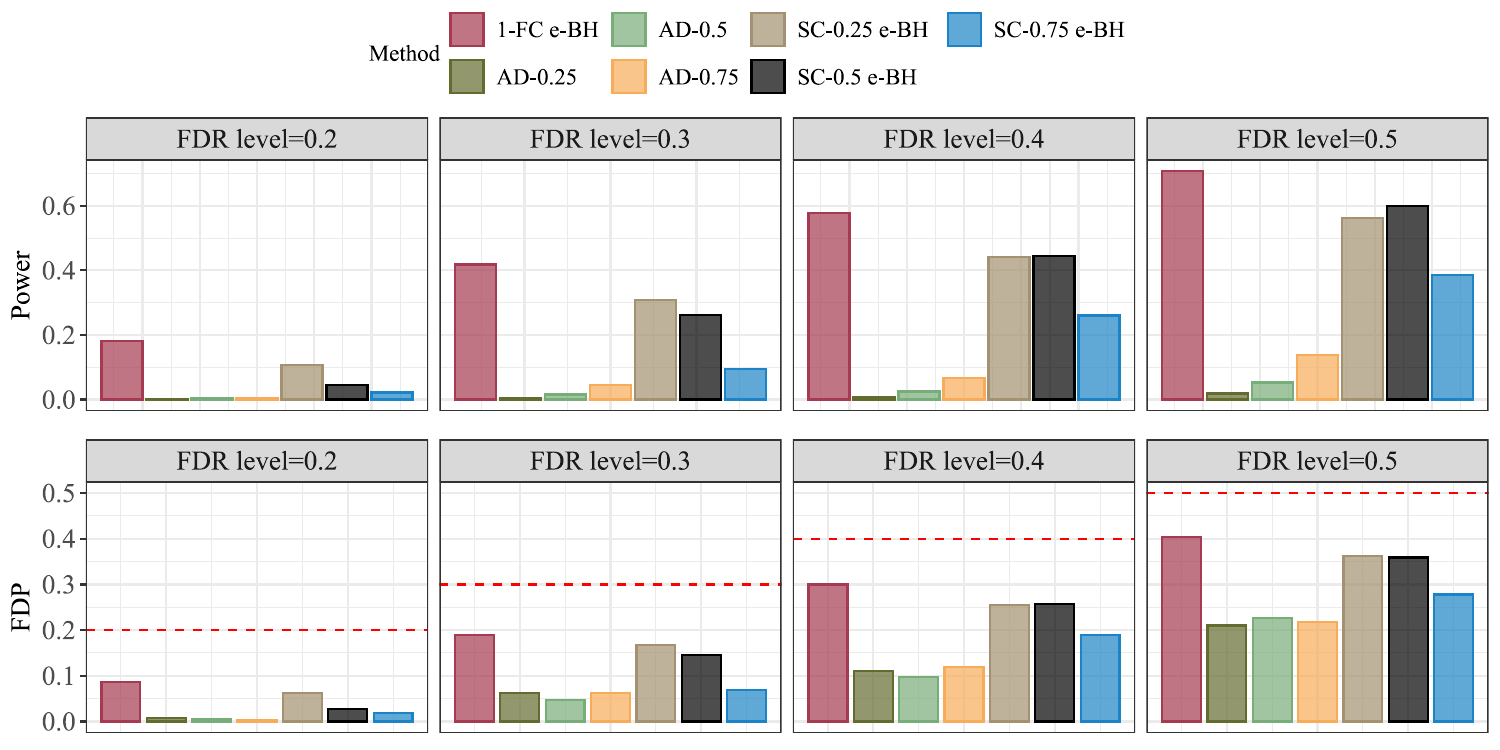}
    \caption{Empirical power and FDP results from simulated experiments using the Credit Card dataset, with $m=100$, $n=50$, and $\pi_1 = 0.1$. The results are averaged over 500 replications.}
    \label{fig:real_data_plots_0.1}
\end{figure}

\FloatBarrier

\begin{table}[H]
\centering
\tiny{
\caption{Empirical power and FDP results from simulated experiments using the Credit Card dataset, with $m=100, n=50, \pi_1 = 0.05$. The results are averaged over 500 replications; standard errors are shown in parentheses. For each experiment, the highest power achieved is highlighted in bold.
}
\label{tab:table_0.05}
\begin{tabular}{lcccccccc}
\toprule
FDR target & \multicolumn{2}{c}{$\alpha = 0.2$} & \multicolumn{2}{c}{$\alpha = 0.3$} & \multicolumn{2}{c}{$\alpha = 0.4$} & \multicolumn{2}{c}{$\alpha = 0.5$} \\
Procedure & \multicolumn{1}{c}{FDP} & \multicolumn{1}{c}{Power} & \multicolumn{1}{c}{FDP} & \multicolumn{1}{c}{Power} & \multicolumn{1}{c}{FDP} & \multicolumn{1}{c}{Power} & \multicolumn{1}{c}{FDP} & \multicolumn{1}{c}{Power} \\ 
\midrule\midrule
1-FC e-BH     &  0.068 (0.009) &  \textbf{0.095 }(0.013) &  0.154 (0.012) &  \textbf{0.234} (0.017) &  0.291 (0.014) &  \textbf{0.444} (0.019) &  0.398 (0.015) &  \textbf{0.586 }(0.019) \\
\midrule 
AD-0.25       &  0.012 (0.005) &  0.001 (0.001) &  0.040 (0.009) &  0.001 (0.001) &  0.106 (0.014) &  0.005 (0.001) &  0.223 (0.019) &  0.010 (0.002) \\
AD-0.5        &  0.008 (0.004) &  0.001 (0.001) &  0.057 (0.010) &  0.003 (0.001) &  0.117 (0.014) &  0.006 (0.002) &  0.212 (0.018) &  0.019 (0.003) \\
AD-0.75       &  0.000 (0.000) &  0.000 (0.000) &  0.031 (0.008) &  0.008 (0.003) &  0.086 (0.012) &  0.026 (0.005) &  0.182 (0.017) &  0.056 (0.008) \\
SC-25 e-BH  &  0.034 (0.007) &  0.040 (0.008) &  0.133 (0.012) &  0.166 (0.015) &  0.236 (0.015) &  0.290 (0.018) &  0.316 (0.016) &  0.389 (0.020) \\
SC-50 e-BH  &  0.021 (0.006) &  0.024 (0.007) &  0.088 (0.011) &  0.108 (0.013) &  0.201 (0.015) &  0.247 (0.018) &  0.335 (0.017) &  0.413 (0.020) \\
SC-75 e-BH  &  0.012 (0.005) &  0.014 (0.005) &  0.045 (0.009) &  0.052 (0.010) &  0.157 (0.015) &  0.174 (0.016) &  0.244 (0.017) &  0.278 (0.019) \\
\bottomrule
\end{tabular}
}
\end{table}

\begin{table}[H]
\centering
\tiny{
\caption{Empirical power and FDP results from simulated experiments using the Credit Card dataset, with $m=100, n=50, \pi_1 = 0.1$. The results are averaged over 500 replications; standard errors are shown in parentheses. For each experiment, the highest power achieved is highlighted in bold.}
\label{tab:table_0.1}
\begin{tabular}{lcccccccc}
\toprule
FDR target & \multicolumn{2}{c}{$\alpha = 0.2$} & \multicolumn{2}{c}{$\alpha = 0.3$} & \multicolumn{2}{c}{$\alpha = 0.4$} & \multicolumn{2}{c}{$\alpha = 0.5$} \\
Procedure & \multicolumn{1}{c}{FDP} & \multicolumn{1}{c}{Power} & \multicolumn{1}{c}{FDP} & \multicolumn{1}{c}{Power} & \multicolumn{1}{c}{FDP} & \multicolumn{1}{c}{Power} & \multicolumn{1}{c}{FDP} & \multicolumn{1}{c}{Power} \\ 
\midrule\midrule
1-FC e-BH    &  0.087 (0.008) &  \textbf{0.181} (0.015) &  0.189 (0.010) &  \textbf{0.418} (0.018) &  0.300 (0.011) &  \textbf{0.578} (0.015) &  0.404 (0.011) &  \textbf{0.707} (0.014) \\
\midrule

AD-0.25      &  0.008 (0.004) &  0.001 (0.000) &  0.062 (0.010) &  0.005 (0.002) &  0.111 (0.014) &  0.008 (0.002) &  0.210 (0.018) &  0.019 (0.003) \\
AD-0.5       &  0.005 (0.003) &  0.003 (0.002) &  0.047 (0.009) &  0.016 (0.004) &  0.098 (0.012) &  0.025 (0.004) &  0.227 (0.018) &  0.053 (0.006) \\
AD-0.75      &  0.003 (0.002) &  0.003 (0.002) &  0.062 (0.010) &  0.045 (0.007) &  0.120 (0.013) &  0.068 (0.008) &  0.217 (0.017) &  0.138 (0.012) \\
SC-25 e-BH &  0.062 (0.007) &  0.107 (0.013) &  0.167 (0.011) &  0.309 (0.018) &  0.255 (0.012) &  0.441 (0.018) &  0.362 (0.013) &  0.562 (0.017) \\
SC-50 e-BH &  0.028 (0.006) &  0.044 (0.009) &  0.146 (0.011) &  0.262 (0.018) &  0.257 (0.012) &  0.445 (0.019) &  0.359 (0.013) &  0.599 (0.018) \\
SC-75 e-BH &  0.018 (0.005) &  0.022 (0.006) &  0.069 (0.009) &  0.095 (0.013) &  0.189 (0.014) &  0.260 (0.018) &  0.278 (0.015) &  0.386 (0.020) \\
\hline
\end{tabular}
}
\end{table}

\section{Algorithms}
\label{appd:algos}
\setcounter{AlgoLine}{0}
\begin{algorithm}[htbp!]
\caption{$K$-block conformal ND procedure with conditional calibration }\label{alg:kfc_nd_cc}
\KwIn{reference dataset $\mathcal{D}_{\text{ref}} = \{Z_1, ..., Z_n\}$; 
test dataset $\mathcal{D}_{\text{test}} = \{Z_{n+1}, ..., Z_{n+m}\}$; 
score model-to-train $f(\cdot,\cdot)$ that is invariant to the ordering of the samples 
in its first argument; number of blocks $K > 1$; target FDR level $\alpha$.}

Partition $\Dtest$ into $K$ blocks $B_1, \dots, B_K$.

\For{$k \in [K]$}{
    Train $V^{(k)} (\cdot) \gets f( \Dref \cup B_k,\Dtest \backslash B_k )$.

    \For{$ i \in [n+m]$}{
        $V_{i}^{(k)} \gets V^{(k)} (Z_{i}) $.
    }
    Compute threshold $T_k$ using $\{V_{i}^{(k)}\}_{i\in[n+m]}$ as in \eqref{eq:kfc_eval}.

    \For{$j \colon Z_{n+j} \in B_k$}{
        
        Compute $e_j$ using $\{V_{i}^{(k)}\}_{i\in[n+m]}$ and $T_k$ as in \eqref{eq:kfc_eval}.
    }
} 

$\cR_0 \leftarrow \cR^\ebh (e_1, \ldots, e_m)$, the e-BH procedure at level $\alpha$ using the original e-values.

\For{$j \in [m]$}{ 
    $\hat\phi_j \gets 0.$
    
    \Else{
    
    \For {$i \in [n] \cup \{n+j\}$}{
        Obtain new datasets $\widetilde \Dref^{(i)}$ and $\widetilde \Dtest^{(i)} = \bigcup_{k \in [K]} \tilde B_k$ by swapping $Z_{n+j}$ and $Z_i$.

        \If{$(\widetilde \Dref^{(i)}, \widetilde \Dtest^{(i)}) \notin \tilde{\Omega}^{(2)}_j \cup \Omega^{(1)}_j$ as per~\eqref{eq:est_support}}{
            \textbf{continue} {\color{gray}\tcp{ The integrand is $0$ outside of the support.}}
        }\Else{
        \For {$k \in [K]$}{
            Train $\tilde V^{(k)}\gets f(\widetilde \Dref^{(i)} \cup \tilde B_k, \widetilde{\Dref}^{(i)}\backslash \tilde B_k)$.
        }

        Construct resampled scores $\{\tilde \cV^{(k)}\}_{k\in[K]}$, e-values $\tilde e_j^{(i)}$ and p-values $\tilde p_j^{(i)}$.

        $\hat\phi_j \gets \hat\phi_j + \frac{1}{n+1}\cdot \tilde A ^{(i)}$, where $\tilde A^{(i)}$ is the term inside the expectation of \eqref{eq:cond_exp} at $c = q_j$ evaluated using the resampled $\tilde e_j^{(i)}$ and $\tilde p_j^{(i)}$.
        }
    }
    {\color{gray}\tcp{$\hat\phi_j$ is exactly equal to $\phi_j(q_j;S_j)$}}
    
    Boost $e_j$ to $e_j^\boost$ with conditional calibration, using $\hat\phi_j$  as per \eqref{eq:boosted_eval_shortcut}.
    }
} 
$\mathcal{R} \gets \cR^\ebh  (e_1^\boost, \dots, e_m^\boost)$, the e-BH procedure at level $\alpha$.
 
\KwOut{Rejection set $\mathcal{R}$} 
\end{algorithm}

\setcounter{AlgoLine}{0}
\begin{algorithm}[h]
\caption{Weighted $K$-block conformal ND procedure with conditional calibration }\label{alg:weighted_kfc_nd_cc}
\KwIn{reference dataset $\mathcal{D}_{\text{ref}} = \{Z_1, ..., Z_n\}$; 
test dataset $\mathcal{D}_{\text{test}} = \{Z_{n+1}, ..., Z_{n+m}\}$; 
weight function  $w(\cdot)$; score model-to-train $f(\cdot,\cdot)$ that is invariant to the 
ordering of the samples in its first argument; 
number of blocks $K > 1$; target FDR level $\alpha$.}

Partition $\Dtest$ into $K$ blocks $B_1, \dots, B_K$.

\For{$k \in [K]$}{
    Train $V^{(k)} (\cdot) \gets f( \Dref \cup B_k, \Dtest \backslash B_k)$.

    \For{$ i \in [n+m]$}{
        $V_{i}^{(k)} \gets V^{(k)} (Z_{i}) $.
    }
    Compute threshold $T^{(k)}$ using $\{V_{i}^{(k)}\}_{i\in[n+m]}$ as in \eqref{eq:weighted_kfc_eval}.

    \For{$j \colon Z_{n+j} \in B_k$}{
        
        Compute $e_j$ using $\{V_{i}^{(k)}\}_{i\in[n+m]}$ and $T_k$ as in \eqref{eq:weighted_kfc_eval}.
    }
}

$\cR_0 \leftarrow \cR^\ebh (e_1, \ldots, e_m)$, the e-BH procedure at level $\alpha$ using the original e-values.

\For{$j \in [m]$}{ 
    $\hat\phi_j \gets 0.$
    $w^{(j)}  \gets w_{n+j} +  \sum_{i=1}^n w_i $
    
    \If{$j \in \cR_0$}{
        \textbf{continue}  {\color{gray}\tcp{ No need to boost $e_j$ if $j$ is already rejected.}}
    }\Else{
    \For {$i \in [n] \cup \{n+j\}$}{
        Obtain new datasets $\widetilde \Dref^{(i)}$ and $\widetilde \Dtest^{(i)} = \bigcup_{k \in [K]} \tilde B_k$ by swapping $Z_{n+j}$ and $Z_i$.

        \If{$p_j(\widetilde{\Dref}^{(i)}, \widetilde{\Dtest}^{(i)})>q_j$ and $V^{(k)}(Z_i) \le T_k$}{
            \textbf{continue} {\color{gray}\tcp{ The integrand is $0$ outside of the support.}}
        }\Else{

        \For {$k \in [K]$}{
            Train $\tilde V^{(k)}\gets f(\widetilde \Dref^{(i)} \cup \tilde B_k, \widetilde{\Dtest}^{(i)}\backslash \tilde B_k)$.
        }

        Construct resampled scores $\{\tilde \cV^{(k)}\}_{k\in[K]}$, e-values $\tilde e_j^{(i)}$ and p-values $\tilde p_j^{(i)}$.

        $\hat\phi_j \gets \hat\phi_j + \frac{w_i}{w^{(j)}}\cdot \tilde A ^{(i)}$, where $\tilde A^{(i)}$ is the term inside the expectation of \eqref{eq:cond_exp} at $c = q_j$ evaluated using the resampled $\tilde e_j^{(i)}$ and $\tilde p_j^{(i)}$.
        }
    }
    {\color{gray}\tcp{$\hat\phi_j$ is exactly equal to $\phi_j(q_j;S_j)$}}
    }
    
    Boost $e_j$ to $e_j^\boost$ with conditional calibration, using $\hat\phi_j$  as per \eqref{eq:boosted_eval_shortcut}.
    
} 
$\mathcal{R} \gets \cR^\ebh  (e_1^\boost, \dots, e_m^\boost)$, the e-BH procedure at level $\alpha$.
 
\KwOut{Rejection set $\mathcal{R}$} 
\end{algorithm}

\end{document}